\documentclass[11pt]{article}

\usepackage{etex, authblk}
\usepackage{algorithmic, algorithm}
\usepackage{amsmath,amsfonts,amssymb,tikz}
\usepackage[margin=1in]{geometry}
\usepackage{amsthm} 
\usepackage{csquotes}
\usepackage[all]{xy}
\usepackage[bottom]{footmisc}
\usepackage{enumitem}
\usepackage{prettyref} 
\usepackage{hyperref}

\usepackage{bbm, bm}
\usepackage{footnote, tablefootnote}

\usepackage{tikz}
\usetikzlibrary{arrows,positioning}

\newtheorem{theorem}{Theorem}[section]

\newtheorem{lemma}[theorem]{Lemma}
\newtheorem{corollary}[theorem]{Corollary}

\newtheorem{claim}[theorem]{Claim}
\newtheorem{proposition}[theorem]{Proposition}
\newtheorem{definition}[theorem]{Definition}
\newtheorem{remark}[theorem]{Remark}

\newtheorem{fact}[theorem]{Fact}

\makeatletter
\newtheorem*{rep@theorem}{\rep@title}
\newcommand{\newreptheorem}[2]{%
\newenvironment{rep#1}[1]{%
\def\rep@title{#2 \ref{##1}}%
\begin{rep@theorem}}%
{\end{rep@theorem}}}
\makeatother

\newreptheorem{theorem}{Theorem}
\newreptheorem{lemma}{Lemma}
\newreptheorem{claim}{Claim}
\newreptheorem{definition}{Definition}

{\hspace*{\fill}$\Box$\par}
\newenvironment{proofof}[1]{\smallskip\noindent{\bf Proof of #1.}}%
{\hspace*{\fill}$\Box$\par}

\newcommand{\pref}{\prettyref}
\newrefformat{lem}{Lemma \ref{#1}}
\newrefformat{cl}{Claim \ref{#1}}
\newrefformat{prop}{Proposition \ref{#1}}
\newrefformat{prob}{Problem \ref{#1}}
\newrefformat{thm}{Theorem \ref{#1}}
\newrefformat{cor}{Corollary \ref{#1}}
\newrefformat{cha}{Chapter \ref{#1}}
\newrefformat{sec}{Section \ref{#1}}
\newrefformat{tab}{Table \ref{#1}}
\newrefformat{fig}{Figure \ref{#1}}
\newrefformat{conj}{Conjecture \ref{#1}}
\newrefformat{prot}{Protocol \ref{#1}}
\newrefformat{fact}{Fact \ref{#1}}
\newrefformat{def}{Definition \ref{#1}}
\newrefformat{rem}{Remark \ref{#1}}

\newcommand{\E}{{\mathbb{E}}}
\newcommand{\eps}{\varepsilon}

\newcommand{\cB}{\mathcal{B}}

\newcommand{\cD}{\mathcal{D}}

\newcommand{\cX}{\mathcal{X}}

\newcommand{\cS}{\mathcal{S}}
\newcommand{\cQ}{\mathcal{Q}}
\newcommand{\cU}{\mathcal{U}}

\newcommand{\Patrascu}{P\u{a}tra\c{s}cu}

\newcommand{\ip}[1]{\left\langle #1 \right\rangle}

\newcommand{\Field}{\mathbb{F}}

\newcommand{\abs}[1]{\left| #1 \right|}
\newcommand{\norm}[1]{\| #1 \|}

\newcommand{\KL}[2]{\mathsf{D}_{KL} ( {#1} || {#2} )}

\usepackage{xspace}

\newcommand{\poly}{{\operatorname{poly}\xspace}}

\newcommand{\Ber}{{\mathsf{Ber}\xspace}}

\newcommand{\DISJ}{\mathsf{DISJ}}

\newcommand{\adv}{\mathsf{adv}}

\newcommand{\EC}[1]{}

\DeclareMathOperator*{\argmax}{arg\,max}

\floatstyle{boxed}
\newfloat{Protocol}{H}{pro}

\floatstyle{boxed}
\newfloat{Problem}{H}{pro}

\title{An $\Omega ( (\log n / \log \log n)^2 )$ Cell-Probe Lower Bound \\
for Dynamic Boolean Data Structures }
\author{Young Kun Ko}
\affil[]{Department of Computer Science and Engineering, Pennsylvania State University}
\affil[]{Email: ykko@psu.edu}
\date{\today}

\begin{document}

\clearpage\maketitle
\thispagestyle{empty}

\begin{abstract}

We resolve the long-standing open problem of Boolean dynamic data structure hardness, proving an unconditional lower bound of $\Omega (( \log n / \log \log n)^2)$ for the Multiphase Problem of \Patrascu~[STOC 2010] (instantiated with Inner Product over $\mathbb{F}_2$). This matches the celebrated barrier for weighted problems established by Larsen [STOC 2012] and closes the gap left by the $\Omega(\log^{1.5} n)$ Boolean bound of Larsen, Weinstein, and Yu [STOC 2018].

The previous barrier was methodological: all prior works relied on ``one-way'' communication games, where the inability to verify query simulations necessitated complex machinery (such as the Peak-to-Average Lemma) that hit a hard ceiling at $\log^{1.5} n$. 

Our key contribution is conceptual: We introduce a 2.5-round Multiphase Communication Game that augments the standard one-way model with a verification round, where Bob confirms the consistency of Alice's simulation against the actual memory. This simple, qualitative change allows us to bypass technical barriers and obtain the optimal bound directly. As a consequence, our analysis naturally extends to other hard Boolean functions, offering a general recipe for translating discrepancy lower bounds into $\Omega (( \log n / \log \log n)^2)$ dynamic Boolean data structure lower bounds. 

We also argue that this result likely represents the structural ceiling of the Chronogram framework initiated by Fredman and Saks [STOC 1989]: any $\omega(\log^2 n)$ lower bound would require either fundamentally new techniques or major circuit complexity breakthroughs.

\end{abstract}

\newpage
\setcounter{page}{1}

\section{Introduction}

Proving unconditional lower bounds for dynamic data structures in the cell-probe 
model \cite{yao_should_1981} stands as one of the grand challenges in theoretical computer 
science. In this model, memory is organized into fixed-size cells of 
$w = \Theta( \log n )$ bits, and we charge unit cost for accessing a cell while permitting 
arbitrary computation on the probed data for free. Because of this unbounded 
computational power, the cell-probe model strictly subsumes all realistic 
architectures; consequently, any lower bound proved here applies universally. 
However, this generality comes at a steep price: the model's ability to 
arbitrarily compress and encode information makes proving lower bounds 
notoriously difficult.

Despite over $35$ years of sustained effort since Fredman and Saks' seminal Chronogram method \cite{fredman_cell_1989}, progress has been slow. The field advanced from $\Omega ( \log n / \log \log n )$ to $\Omega ( \log n )$ over 15 years \cite{patrascu_logarithmic_2006}, then to $\Omega ( (\log n / \log \log n )^2 )$ for {\em weighted} problems over another 7 years \cite{larsen_cell_2012}, where each query output is $\Omega ( \log n )$ bits. 
For Boolean (decision) problems---where queries return a single bit---even this 
$\Omega ( (\log n / \log \log n )^2 )$ barrier seemed insurmountable. Larsen's breakthrough 
technique \cite{larsen_cell_2012} inherently required $\Omega(\log n)$-bit outputs, leaving the 
Boolean case explicitly open. This question was listed among Mihai \Patrascu's 
five most important open problems in the posthumous compilation by Thorup~\cite{thorup_mihai_2013}, and stood as perhaps the central challenge in dynamic lower bounds for decades.

\begin{figure}[!h]
\centering
\begin{tikzpicture}[scale=1.2]
  \draw[thick,->] (0,0) -- (12,0) node[right] {Year};
  
  \foreach \x/\year/\label in {
    1/1989/\cite{fredman_cell_1989},
    3/2006/\cite{patrascu_logarithmic_2006},
    5/2012/\cite{larsen_cell_2012},
    8/2018/\cite{larsen_crossing_2020},
    11/2026/Our Result
  } {
    \draw (\x,0.1) -- (\x,-0.1);
    \node[above] at (\x,0.1) {\year};
    \node[below] at (\x,-0.3) {\small \label};
  }
  
  \node[align=center] at (1,-1.2) {$\widetilde{\Omega}\left( \log n \right)$\footnotemark\\ \small  (Boolean)};
  \node[align=center] at (3,-1.2) {$\Omega(\log n)$\\\small  (Boolean)};
  \node[align=center] at (5,-1.2) {$\widetilde{\Omega}\left( \log^2 n \right)$\\ \small  \textcolor{red}{(weighted)}};
  \node[align=center] at (8,-1.2) {$\widetilde{\Omega}(\log^{1.5} n)$\\ \small (Boolean)};
  \node[align=center] at (11,-1.2) {$\widetilde{\Omega}\left( \log^2 n\right)$ \\ \small (Boolean)};
\end{tikzpicture}
\caption{Historical progression of dynamic cell-probe lower bounds.} 
\end{figure}
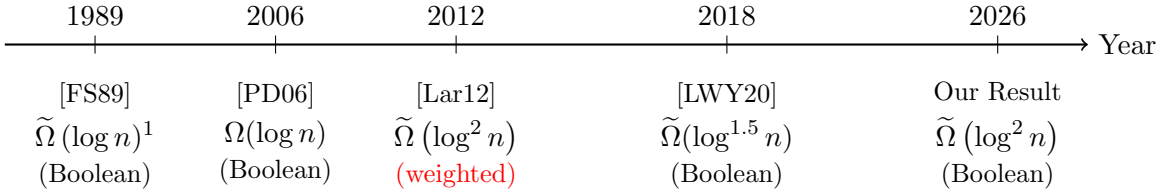
\footnotetext{$\widetilde{\Omega}$ hides a $\poly (\log \log n)$ factor in the denominator}

In this paper, we close this gap. We prove an unconditional lower bound 
of $\Omega ( (\log n / \log \log n )^2 )$ for Boolean data structures, matching the highest known 
bound for weighted problems and resolving the open question of 
Boolean hardness. Our result applies to the Multiphase Problem with Inner Product over $\Field_2$, directly implying lower bounds for fundamental problems 
including dynamic matrix-vector multiplication over $\Field_2$, dynamic path parity, 
and dynamic range counting mod 2.

\subsection{Our Contribution} \label{sec:contribution}

As our hard problem, we consider the seminal Multiphase Problem of \Patrascu~\cite{patrascu_mihai_towards_2010}.
For a Boolean function $f: \{0,1\}^{2n} \to \{ \pm 1 \}$, the problem is:
\begin{Problem}
    \begin{itemize}
    \item $\vec{S} = S_1, \ldots , S_m \in \{ 0, 1 \}^n$ are pre-processed.
    \item Updates $X \in \{ 0 , 1\}^n$ modify the data structure ($t_u$-probes per update).
    \item Per any given $q \in [m]$, output $f ( S_q, X )$ using $t_{tot}$ total probes.
    \end{itemize}
    \caption{Multiphase Problem for $f$ \label{prob:Multiphase}}
\end{Problem}

When $f$ is Disjointness or Inner Product over $\Field_2$, \Patrascu~conjectured that $\max \{ t_u, t_{tot} \} \geq n^{\Omega(1)}$~\cite{patrascu_mihai_towards_2010} -- the notorious Multiphase Conjecture \cite{patrascu_mihai_towards_2010, thorup_mihai_2013, braverman_communication_2022}. \Patrascu~showed that this conjecture implies polynomial lower bounds for dynamic problems 
including dynamic s-t reachability. We show the following main theorem for the Multiphase Problem.

\begin{theorem} \label{thm:main_informal}
    For the Multiphase Problem with $f$ Inner Product over $\Field_2$, and $m = n^{1+\Omega(1)}$,
    $$\max \{ t_u, t_{tot} \} \geq \Omega \left( \left( \frac{\log n}{\log \log n} \right)^2 \right)$$
\end{theorem}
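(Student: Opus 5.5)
We plan to prove \pref{thm:main_informal} by a chronogram-style argument combined with a new communication game, reducing to a discrepancy bound for Inner Product.

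\textbf{Setup and epochs.} The $n$-bit update $X$ is written into the data structure one coordinate (or one small block) at a time, so the sequence of updates has length $\Theta(n)$. Going backwards in time, we partition it into epochs of geometrically growing size $\beta^i$, with $\beta = \poly(\log n)$. Each epoch carries its own independent random piece of $X$. Following Larsen's approach, we set things up so that the query $f(S_q,X)$ depends non-trivially on every epoch. Concretely, we take $X = X_1 \oplus \cdots$ restricted to blocks, and sets $S_q$ that intersect each epoch's block in a random pattern.

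For a fixed epoch $i$, there are two counts of cells to track. The cells written in epochs $j<i$ (more recent than $i$) number at most $t_u \sum_{j<i}\beta^j \approx t_u\beta^{i-1}$. This is a $1/\beta$ fraction of epoch $i$'s update size, so those later writes cannot carry epoch $i$'s information. The goal is then to show that a typical query must probe $\Omega(\log n/\log\log n)$ cells last written in epoch $i$. Summing over the $\Theta(\log n/\log\log n)$ epochs gives the bound on $t_{tot}$, provided $t_u$ is also small; otherwise $\max\{t_u, t_{tot}\}$ is already large.

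\textbf{Per-epoch bound via the 2.5-round game.} Suppose, for contradiction, that an average query probes only $o(\log n/\log\log n)$ cells from epoch $i$; call this set $C_i(q)$. We simulate with a game in which Alice holds $\vec S$ and Bob holds the updates. Alice sends a short message, essentially the addresses and guessed contents of a small set of epoch-$i$ cells, chosen so that she can simulate many queries. Bob then replies with the later-epoch cells, and in the extra verification round he tells her which of her simulated queries are consistent with the true memory.

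Because cells have $w=\Theta(\log n)$ bits, a query touching $k$ epoch-$i$ cells has roughly $2^{-O(kw)}$-ish probability of being correctly guessed by Alice from a random encoding. With $k = o(\log n/\log\log n)$, a $2^{-o(\log^2 n/\log\log n)}$-weighted fraction of the $m=n^{1+\Omega(1)}$ queries are answered correctly with advantage. Verification lets Alice and Bob certify which answers are right, with no peak-to-average step. As a result, the protocol, of cost $o(\beta^i)$ in bits about epoch $i$, computes many Inner Products on a large rectangle with correlation exceeding $2^{-\Omega(\beta^i)}$. This contradicts the standard discrepancy bound for Inner Product over $\Field_2$ on the epoch-$i$ block, which must hold even conditioned on Bob's messages.

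\textbf{Main obstacle.} The delicate step is the verification round. Bob must confirm consistency of Alice's guessed cell contents without revealing epoch-$i$ information. We also need to show that conditioning on "verified" leaves a rectangle of large enough measure to apply discrepancy. I would first try to bound the measure loss by $2^{-O(k w \cdot \#\text{queries})}$ and choose the number of simultaneously simulated queries to be $\approx \beta^i/\poly\log n$, so that discrepancy, at $2^{-\beta^i/2}$, still dominates. Balancing these parameters is where the $(\log\log n)^2$ loss arises.
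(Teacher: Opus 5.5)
\paragraph{Verdict.} The skeleton you propose (Chronogram, a verification round, IP discrepancy) is the right one. However, the simulation you describe cannot produce enough advantage, and the final discrepancy step has no mechanism behind it.

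\paragraph{Gap 1: guessing cell contents loses a factor $\log n/\log\log n$ in the exponent.} Alice \emph{guesses the contents} of the epoch-$i$ cells her query reads. With $k=o(\log n/\log\log n)$ such cells of $w=\Theta(\log n)$ bits, this succeeds with probability about $2^{-kw}=2^{-o(\log^2 n/\log\log n)}$, which is superpolynomially small. The information-theoretic lower bound only caps the advantage at $n^{-\Omega(1)}$. The bottleneck there is the residual correlation $I(S_Q;X\mid Z)\approx\abs{U}/m$ with $m=\poly(n)$. So content guessing yields a contradiction only when $kw$ is a small multiple of $\log n$, i.e.\ $k=O(1)$ probes per epoch, which is only the Fredman--Saks regime.

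\paragraph{The missing idea: cell sampling by the party that knows the memory.} In the paper, Bob sends $\Phi(U)=(X^{(>i)},C_{<i},\widetilde C_i)$ \emph{before} $q$ is revealed. Here $\widetilde C_i$ is a random $p$-fraction ($p=1/\poly\log n$) of the epoch-$i$ cells \emph{with their true contents}. Alice only guesses \emph{which} of her $t_{tot}$ probes land in $C_i$. She therefore succeeds with probability roughly $\binom{t_{tot}}{\le k}^{-1}p^k=2^{-O(k\log\log n)}=n^{-o(1)}$. Paying $\log\log n$ rather than $w$ per probe is exactly where $(\log n/\log\log n)^2$ comes from.

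\paragraph{Verification must stay with Bob.} In the paper, Bob checks $A_q$ against $U$ and outputs either $z_{out}$ or an independent random bit. Hence the advantage equals the success probability, and only one bit of $X$'s entropy is spent. In your version Bob reports back to Alice which of her simulations were consistent. That is communication about $X$ (one bit per simulated query), which reintroduces precisely the $\cQ^*$-identification cost. It also makes Bob's messages depend on Alice's guesses, which destroys the query-independence of Bob's message. Batching $\beta^i/\poly\log n$ queries does not rescue this: you never specify the rectangle, nor why IP would be correlated on it.

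\paragraph{Gap 2: the discrepancy step.} The claim that ``discrepancy must hold even conditioned on Bob's messages'' is the entire difficulty, not a standard fact. Discrepancy needs \emph{both} $S_q$ and $X$ to keep large min-entropy and to be nearly independent after conditioning. With $\vec S$ fixed, as in the simulation, Alice's side has no entropy at all. Moreover, Bob can verify only because he knows the memory $U$, and $U$ is Merlin's advice depending on $\vec S$ as well as $X$. So after conditioning, $S_q$ and $X$ are genuinely correlated.

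\paragraph{How the paper handles this.} The paper makes $\vec S$ uniform and conditions on $Z=(\Phi(U),A_{\cQ_{all}},S_{<Q},B_Q,Q)$. This is possible only because $\Phi(U)$ is query-independent, so \emph{all} of Alice's messages can sit in $Z$. Direct sums over the $m$ coordinates then give two bounds:
\begin{itemize}
\item $I(S_Q;\Phi(U),A_{\cQ_{all}},S_{<Q},Q)\le\abs{\Phi(U)}/m+\abs{A_q}$, so on a good event $S_Q$ keeps min-entropy $n-O(wt_{tot}/\alpha^2)$;
\item $I(S_Q;X\mid Z)\le\abs{U}/((1-\alpha)^2m)=n^{-\Omega(1)}$ on that event.
\end{itemize}
The second bound is where $m=n^{1+\Omega(1)}$ is used; your proposal never uses this hypothesis.

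On $X$'s side, $\widetilde H_\infty(X\mid Z)\ge n-\abs{\Phi(U)}-O(1)\ge\sqrt n/2-O(1)$. This requires the chosen epoch to have $n_i\ge\sqrt n$, which is why the paper fixes one good $i\in[2\ell/3,\ell]$ rather than proving a per-epoch bound.

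Lindsey's lemma plus Pinsker then bound the bias by $n^{-\Omega(1)}$. The contradiction is against the simulation's $n^{-o(1)}$ advantage, not against $2^{-\Omega(\beta^i)}$. Finally, no special block or XOR structure on $X$ or $S_q$ is needed: uniform $S_q$ and the full $n$-bit inner product suffice.
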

Formally, we prove a slightly stronger statement where we can allow $t_u$ to be {\em any poly-logarithmic} function of $n$, just as in the case of \cite{larsen_crossing_2020}. We provide the full statement and proof in \pref{sec:main_proof}.

\paragraph{General Lifting Theorem} 
While \pref{thm:main_informal} focuses on the Inner Product over $\Field_2$ to resolve the specific open problem posed by \Patrascu~and Larsen, our technique is not limited to the Inner Product over $\Field_2$. Our Simulation Theorem (\pref{sec:simulation}) establishes a general translation from communication complexity to dynamic data structure lower bounds. Our lower bound applies to \emph{any} function $f$ that is ``hard.'' Roughly, functions for which one cannot obtain $n^{-\Omega(1)}$ advantage (i.e., low discrepancy) under {\em product distributions} with large min-entropy are defined to be hard. This hardness condition is indeed satisfied by the Inner Product over $\Field_2$ function, but unfortunately, the Disjointness function does not meet the criterion \cite{braverman_information_2013}. A naive strategy of sampling and revealing a few coordinates of $X$ already yields non-trivial advantage for Disjointness, violating the hardness condition. We refer the reader to \pref{sec:lifting}, specifically \pref{def:hard} for the precise technical definition. We then show a general ``lifting'' theorem (\pref{thm:lifting}) which shows that if $f$ is ``hard,'' then the Multiphase Problem with $f$ must have 
\begin{equation*}
    \max \{ t_u, t_{tot} \} \geq \Omega \left( \left( \frac{\log n}{\log \log n} \right)^2 \right).
\end{equation*}
We defer the full proof to \pref{sec:lifting} to preserve the clarity of the main argument. 

\subsection{Applications}

Due to the standard reductions given by \Patrascu~in \cite{patrascu_mihai_towards_2010}, this directly implies lower bounds for the following problems.

\begin{Problem}
    \begin{itemize}
        \item A matrix $M \in \Field_2^{m \times n}$ is given and pre-processed.
        \item $X \in \{ 0 , 1 \}^n$ is updated dynamically. Output $\ip{M_q, X}$ for any given $q \in [m]$.
    \end{itemize}
    \caption{Dynamic Matrix-Vector Multiplication over $\Field_2$ \label{prob:multiplication}}
\end{Problem}

\begin{Problem}
    \begin{itemize}
    \item A directed graph $G = (V,E)$ is initially given. Updates add and remove edges in the graph.
    \item For a given $(u,v) \in V \times V$, output the parity of the number of paths between $u$ and $v$. For $k$-Path Parity Problem, output the parity of the number of paths of length exactly $k$.
    \end{itemize}
    \caption{Dynamic ($k$-) Path Parity \label{prob:path}}
\end{Problem}

\begin{Problem}
    \begin{itemize}
    \item A matrix of integers is preprocessed. Updates increment all values in a specified row or column.
    \item For a given query, output the parity of the number of maximum values in the matrix.
    \end{itemize}
    \caption{Erickson's Problem, Counting Version \label{prob:erickson}}
\end{Problem}

\begin{corollary}
    \pref{prob:multiplication}, \pref{prob:path}, \pref{prob:erickson} require $\max \{ t_u, t_{tot} \} \geq \Omega \left( \left( \frac{\log n}{\log \log n} \right)^2 \right)$
\end{corollary}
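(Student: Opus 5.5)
We derive the corollary from \pref{thm:main_informal} via the reductions of \Patrascu~\cite{patrascu_mihai_towards_2010}. Each reduction must turn a data structure for the target problem into one for the Multiphase Problem with $f$ the Inner Product over $\Field_2$. The reduction must blow up $t_u$ and $t_{tot}$ by at most a constant factor, or at most a polylogarithmic factor in the update time, which the strengthened form of the theorem in \pref{sec:main_proof} tolerates. The size parameters must stay polynomially related, so that $m = n^{1+\Omega(1)}$ is preserved and $\log n$ changes only by a constant factor.

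\textbf{\pref{prob:multiplication}.} This is the easiest case and is essentially a restatement. We set $M_q = S_q$ for $q \in [m]$ and preprocess $M$. Each update writes the $n$ bits of $X$, which takes $n$ single-coordinate updates (or one vector update if updates set all of $X$ at once). A query $q$ returns $\ip{M_q, X} = \langle S_q, X\rangle$ over $\Field_2$. The multiphase update cost is then $n \cdot t_u'$. This is why I would phrase the reduction in the batched form, as \Patrascu~does: the multiphase "$t_u$" is measured per coordinate of $X$, i.e.\ the update phase costs $O(n t_u)$ probes. I would state explicitly which normalization the formal theorem in \pref{sec:main_proof} uses and check that it matches.

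\textbf{\pref{prob:path} and \pref{prob:erickson}.} For path parity, I would build a layered graph. A source layer has $m$ vertices $a_q$, a middle layer has $n$ vertices $b_i$, with an edge $a_q \to b_i$ iff $S_q[i]=1$, fixed at preprocessing. A sink vertex $c$ receives an edge $b_i \to c$ iff $X_i = 1$; these edges are inserted during the update phase. The number of $a_q$--$c$ paths, all of length $2$, is $\sum_i S_q[i]X_i$, so its parity is exactly $\langle S_q,X\rangle$. This handles both the general and the $k=2$ versions, and longer $k$ follows by padding with paths. For Erickson's problem, I would follow \Patrascu's encoding, in which row/column increments realize $X$ and a query's max-count parity encodes the inner product. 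I would re-verify that the counting version outputs the parity of the same quantity the original reduction counts.

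The main obstacle is bookkeeping rather than ideas. We must confirm that each reduction keeps the Boolean output, which it does because all answers are parities. We must also confirm that the number of vertices or matrix dimension stays $\mathrm{poly}(n)$, so the bound $\Omega((\log n/\log\log n)^2)$ transfers unchanged up to constants.
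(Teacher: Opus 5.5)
Your proposal follows the paper's route, which simply composes \pref{thm:main_informal} with \Patrascu's reductions. Your per-coordinate normalization of $t_u$ matches the Chronogram setup in \pref{sec:main_proof}, and your constructions for \pref{prob:multiplication} and the layered graph for \pref{prob:path} are correct. When you re-verify \pref{prob:erickson}, note that the natural encoding $A[i][j]=S_i[j]+X_j$, with row $q$ boosted above all others, does not directly give $\ip{S_q,X}$: the number of maxima equals $|S_q\cap X|$ only when the intersection is nonempty, and otherwise it counts $S_q\cup X$ (or all $n$ entries). Add, e.g., a dummy column fixed at value $2$ in every row, so the count becomes $1+|S_q\cap X|$ in all cases; the row boost is a query-phase update, which is consistent with bounding $\max\{t_u,t_{tot}\}$.
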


\subsection{Previous Works}

To put our technical contribution into perspective, we briefly summarize nearly 40 years of developments in dynamic cell-probe lower bounds. Readers who are familiar with the developments can skip to \pref{sec:technical} for our technical contribution.

\subsubsection{Explicit Quests for Poly-logarithmic Lower Bounds} \label{sec:polylog}

\paragraph{Chronogram Method}

The seminal work of Fredman and Saks \cite{fredman_cell_1989} introduced the Chronogram method. At a high-level, the Chronogram method divides the sequence of $n$ updates $X$ into $\ell$ epochs, $\{ X^{(i)} \}_{i=1}^\ell$, where $X^{(i)}$ consists of $n_i$ bits and $n_i = |X^{(i)}| = \beta^i$, where $\beta$ is a parameter set per problem in question (roughly $\poly ( w t_u )$). We will be then processing the updates in the reverse order, that is, the larger epochs are processed first, denoting the updates at each epoch $i$ as $U_i$ respectively.

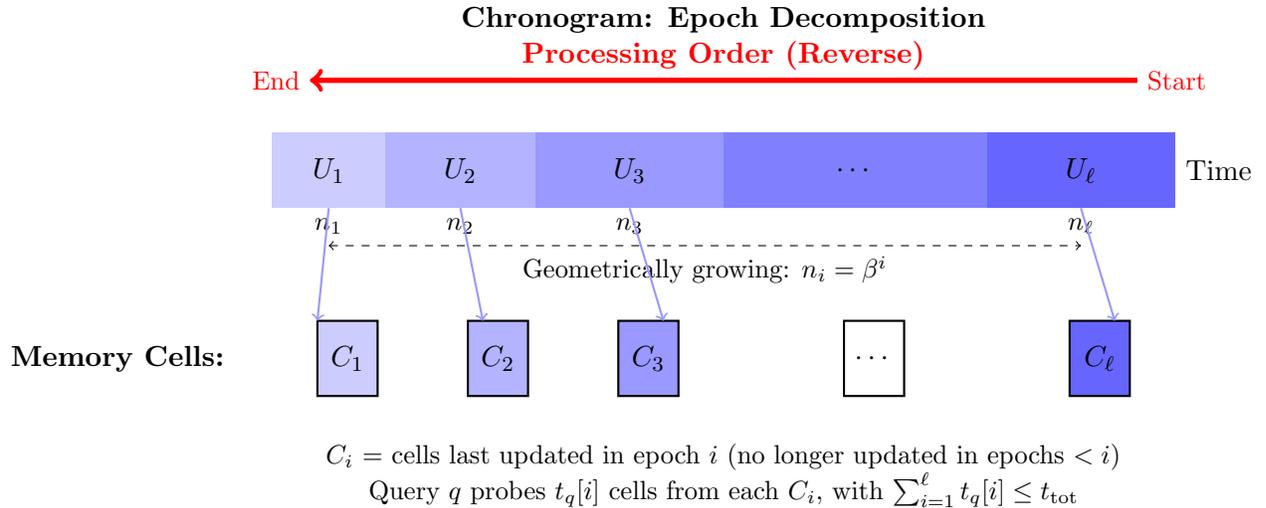
\begin{figure}[!h]
\centering
\begin{tikzpicture}[scale=1.0]
  \node[above] at (6, 5.2) {\textbf{Chronogram: Epoch Decomposition}};
  
  \draw[->, thick] (0,3.5) -- (12,3.5) node[right] {Time};
  
  \fill[blue!20] (0,3) rectangle (1.5,4);
  \fill[blue!30] (1.5,3) rectangle (3.5,4);
  \fill[blue!40] (3.5,3) rectangle (6,4);
  \fill[blue!50] (6,3) rectangle (9.5,4);
  \fill[blue!60] (9.5,3) rectangle (12,4);
  
  \node at (0.75,3.5) {$U_1$};
  \node at (2.5,3.5) {$U_2$};
  \node at (4.75,3.5) {$U_3$};
  \node at (7.75,3.5) {$\cdots$};
  \node at (10.75,3.5) {$U_\ell$};
  
  \node[below, font=\small] at (0.75,3) {$n_1$};
  \node[below, font=\small] at (2.5,3) {$n_2$};
  \node[below, font=\small] at (4.75,3) {$n_3$};
  \node[below, font=\small] at (10.75,3) {$n_\ell$};
  
  \draw[<->, dashed] (0.75,2.5) -- (10.75,2.5);
  \node[below, font=\small] at (5.75,2.5) {Geometrically growing: $n_i = \beta^i$};
  
  \draw[->, very thick, red, line width=2pt] (11.5, 4.7) -- (0.5, 4.7);
  \node[above, red, font=\bfseries] at (6, 4.7) {Processing Order (Reverse)};
  \node[right, red, font=\small] at (11.5, 4.7) {Start};
  \node[left, red, font=\small] at (0.5, 4.7) {End};
  
  \node[left] at (-0.5,1) {\textbf{Memory Cells:}};
  
  \foreach \x/\label/\color in {
    1/C_1/blue!20,
    3/C_2/blue!30,
    5/C_3/blue!40,
    8/\cdots/white,
    11/C_\ell/blue!60
  } {
    \draw[fill=\color, thick] (\x-0.4,0.5) rectangle (\x+0.4,1.5);
    \node at (\x,1) {$\label$};
  }
  
  \draw[->, thick, blue!40] (0.75,3) -- (0.6,1.5);
  \draw[->, thick, blue!40] (2.5,3) -- (2.8,1.5);
  \draw[->, thick, blue!40] (4.75,3) -- (5.2,1.5);
  \draw[->, thick, blue!40] (10.75,3) -- (11.2,1.5);
  
  \node[below, align=center, font=\small] at (6,0) {
    $C_i$ = cells last updated in epoch $i$ (no longer updated in epochs $< i$) \\
    Query $q$ probes $t_q[i]$ cells from each $C_i$, with $\sum_{i=1}^\ell t_q[i] \leq t_{\text{tot}}$
  };
  
\end{tikzpicture}
\caption{
  The Chronogram technique decomposes $n$ updates into $\ell = \Theta(\log_{\beta} n)$ epochs with geometrically growing sizes $n_i = \beta^i$. Crucially, epochs are \textbf{processed in reverse order} (largest first). Each epoch $i$ has associated cells $C_i$ that were last updated in that epoch. A query algorithm for $q \in \cQ$ probes $t_q[i]$ cells from each $C_i$.
}
\label{fig:Chronogram}
\end{figure}
The main observation by Fredman and Saks is that we can decompose the updated cells depending on the latest epoch that performed the update. That is $C_1, \ldots, C_{\ell}$, where each $C_i \subseteq U_i$ contains the cells last updated by epoch $i$, no longer updated in epochs $<i$. The key point is that there must exist $i \in [\ell]$ such that the query algorithm probes $t_{tot} / \ell$ cells from $C_i$ on average. That is $\E_{q \sim \cQ} \left[ t_q [i] \right] \leq O( t_{tot} / \ell)$. This was elegantly put in \cite{larsen_super-logarithmic_2025} as ``static'' data structure with pre-initialized memory and a cache, where pre-initialized memory refers to $C_{>i}$ and cache refers to $C_{<i}$. The goal then is to show a lower bound on $t_{tot} / \ell$, which is the average number of probes into $C_i$. Fredman and Saks proved an $\Omega(1)$ lower bound on $t_{tot} / \ell$, leading to $\max \{ t_u, t_{tot} \} \geq \Omega ( \log n / \log \log n )$. Then it took more than 15 years for further quantitative improvement. A more careful analysis by considering the information transfer between the epochs \cite{patrascu_logarithmic_2006} led to shaving the $\log \log n$ factor in the denominator.

\paragraph{Chronogram + Cell-sampling}

The breakthrough work of Larsen \cite{larsen_cell_2012} then successfully merged the cell-sampling technique~\cite{siegel_universal_2004, panigrahy_lower_2010} with the Chronogram technique \cite{fredman_cell_1989} to give an $\Omega ( (\log n / \log \log n)^2 )$ bound with the following rough idea. Recall that after the decomposition using the Chronogram method, the goal is to show a lower bound on $t_{tot} / \ell$, which is the number of probes into $C_i$. As $\ell$ would always be roughly $ \Theta ( \log n / \log ( t_u w ) )$, if we can show, say, $t_{tot} / \ell \geq \widetilde{\Omega} ( \log n )$, this would give an $\widetilde{\Omega} ( \log^2 n )$ bound on $t_{tot}$.

To show such lower bound on $t_{tot} / \ell$, Larsen \cite{larsen_cell_2012} utilizes the cell-sampling method~\cite{siegel_universal_2004, panigrahy_lower_2010}. The main idea behind cell-sampling is the following. If there exists a too-good-to-be-true (i.e., short) $t_{tot}$ query algorithm, then naively sampling some $p := (\poly \log n )^{-1}$ fraction of the cells would give a non-trivial advantage (of roughly $p^{t_{tot} / \ell}$) in answering the queries. An equivalent way to frame this argument is to find a small subset $C_0 \subset C_i$ such that $C_0$ along with $C_{-i}$ will answer $n^{-o(1)}$ fraction of possible queries, without having to know the whole $C_i$. We can generate such $C_0$ by simply sampling each cell in $C_i$ at random with probability $p$. By the rule of expectation, there must exist a setting of such $C_0$ which can answer $n^{-o(1)}$ fraction of possible queries, say $\cQ^* \subset \cQ$. 

An elegant way to formulate the underlying combinatorial problem is through the following one-way communication simulation, as the ultimate goal is to show such one-way communication cannot exist.
\begin{Protocol}
    \begin{itemize}
        \item Fix $\vec{S}$ known to both Alice and Bob. Bob is given all the updates $\{ X^{(i)} \}_{i=1}^\ell$, but not the desired query $Q \in \cQ$. 
        \item Alice is given all the updates, except the updates in epoch $i$ (i.e., $X^{(i)}$), and the desired query $Q$.
        \item Bob sends a one-way message $M$ of length $\abs{M} \leq n_i / \poly \log n$ bits to Alice. Then Alice announces $f ( S_Q, X )$. 
    \end{itemize}
    \caption{One-way Communication Simulation \label{prot:one_way}}
\end{Protocol}

As a one-way message, Bob will send the small sub-sampled subset $\widetilde{C}_i$ along with $C_{<i}$. Cell-sampling ensures that $\widetilde{C}_i$ is small, while the Chronogram ensures that $C_{<i}$ is small. Alice then can simulate the queries for $\cQ^*$. So for any $Q \in \cQ^*$, Alice will be outputting the correct answer, achieving a non-trivial advantage! We then just need to show that a short one-way message $M$ cannot answer an $n^{-o(1)}$ fraction of $\cQ$. 

Though tempting, this argument has a crucial flaw. $\widetilde{C}_i$ cannot identify $\cQ^*$, (i.e., $\cQ^*$-identification problem). This occurs because Alice cannot distinguish whether the probed cells are inside $C_i \backslash \widetilde{C}_i$ or absent from $C_i$ entirely. As she cannot identify the set $\cQ^*$, there is no guarantee that Alice's guess provides zero advantage over completely random guessing for queries in $\cQ \backslash \cQ^*$. This technical subtlety is what appears in all subsequent works to prove an $\omega( \log n )$ lower bound \cite{patrascu_lower_2007, larsen_cell_2012, larsen_crossing_2020, larsen_super-logarithmic_2025}. 

Larsen \cite{larsen_cell_2012} tackled this technical subtlety by encoding a subset of $\cQ^*$ as a part of Bob's message, but this argument crucially requires the output to be $\Omega( \log n )$ bits. To extend the method to the Boolean setting, Larsen, Weinstein, and Yu~\cite{larsen_crossing_2020} introduced the so-called Peak-to-Average Lemma, a proof of which requires a technical dive into Chebyshev polynomials. The lemma allows Bob to carefully choose a subset of cells, knowledge of which yields a non-trivial advantage (of $n^{-o(1)}$) but with a $\sqrt{ \log n }$ parameter loss. No improvement can be made to the lemma, as the lemma is tight (Appendix B of \cite{larsen_crossing_2020}).

\subsubsection{Previous Work on the Multiphase Problem}

\Patrascu~envisioned an approach \cite{patrascu_mihai_towards_2010, thorup_mihai_2013} to break out of the Chronogram Method and its extensions (all the methods in \pref{sec:polylog}), providing a new angle on dynamic Boolean cell-probe lower bounds.

His ambitious goal was to directly prove a polynomial lower bound by considering \pref{prob:Multiphase}. He conjectured that \pref{prob:Multiphase} requires $n^{\Omega(1)}$ for $m = \poly (n)$, even for $f$ being Disjointness. This is the notorious Multiphase Conjecture. \Patrascu~then showed reductions to various dynamic problems such as dynamic reachability, among others. 

It is noteworthy to point out that \Patrascu~\cite{patrascu_mihai_towards_2010} actually opened the whole area of Fine-Grained Complexity of Dynamic Problems using the Multiphase Problem (\pref{prob:Multiphase}). He showed that the 3SUM Conjecture\footnote{The Conjecture states that 3SUM problem on $n$ integers from $\{ - n^4, \ldots, + n^4 \}$ cannot be solved in $O(n^{2-\eps})$ for any $\eps > 0$.}~implies the Multiphase Conjecture. We refer the reader to the survey~\cite{williams_virginia_vassilevska_fine-grained_2017} for the context of Multiphase Conjecture in the area of Fine-Grained Complexity of Dynamic Problems.

While the problem is conjectured to be polynomially hard, the best unconditional lower bound stood at merely $\Omega ( \log n )$ for over a decade against the easier bit-probe model (i.e., the word size $w = 1$) \cite{brody_adapt_2015, clifford_new_2015, ko_adaptive_2020}.\footnote{While \cite{brody_adapt_2015, clifford_new_2015, ko_adaptive_2020} do not explicitly mention the lower bound, one can simulate a $t$-round adaptive algorithm with $2^t$-round non-adaptive algorithms under the bit-probe model.} \Patrascu's proposed approach was to consider the following communication game, which is now commonly referred to as the Multiphase Communication Game.
\begin{Protocol}
    \begin{itemize}
        \item Alice holds $\vec{S} \in \left( \{0,1\}^{n} \right)^m$, and $Q \in [m]$. Bob holds $X \in \{0,1\}^n$ and $Q \in [m]$. Merlin holds $\vec{S}$ and $X$. 
        \item Merlin sends $U$ to Bob of length $s = O(n \cdot t_u \cdot w)$. 
        \item Alice and Bob engage in standard two-party $\widetilde{O}(t_{tot})$ bits communication to output $f(S_Q, X )$.
    \end{itemize}
    \caption{ Multiphase Communication Game \label{prot:multiphase_comm} }
\end{Protocol}
The communication model then can simulate the update and query algorithms, as Merlin can simulate all the update algorithms, with $s = w \cdot n \cdot t_u$. Then Alice and Bob can simulate the query algorithm using $w \cdot t_{tot} = \widetilde{O}(t_{tot})$ communications to output $f(S_Q, X )$. 

\Patrascu~believed that if $s = m^{0.99}$, then Merlin's advice should not matter. Thus Alice and Bob must spend roughly the standard communication complexity of $f$. However, this crude intuition was falsified in \cite{chattopadhyay_little_2012}, which showed examples of $f$ with very efficient protocols. For example, if $f$ is Disjointness, there exists Merlin's message $U$ with $s = \tilde{O} ( \sqrt{n} )$ such that Alice and Bob only communicate $\widetilde{O} ( \sqrt{n} )$ bits, in stark contrast to the $\Omega ( n )$ communication complexity of Disjointness  \cite{kalyanasundaram_probabilistic_1992, razborov_distributional_1992, bar-yossef_information_2004}. 

In fact, as pointed out in \cite{ko_adaptive_2020, ko_lower_2025}, even a lower bound against 3-round communication between Alice and Bob (Alice, Bob then Alice speaks) would imply a breakthrough in circuit complexity, namely circuit lower bounds for random linear operators \cite{jukna_circuits_2010, jukna_boolean_2012, jukna_complexity_2013, drucker_limitations_2012}. Nevertheless, Ko and Weinstein~\cite{ko_adaptive_2020} developed an information-theoretic framework to handle 2-rounds (Alice, then Bob speaks) communication, for the zero-error or small-error regime. This framework was further extended by Ko~\cite{ko_lower_2025} to handle low advantage regime, where the performance of Bob's final output is compared against a random guess.

Recently, Ko~\cite{ko_unifying_2025} successfully applied the information-theoretic framework~\cite{ko_adaptive_2020, ko_lower_2025} with the one-way simulation theorem of Larsen, Weinstein, and Yu \cite{larsen_crossing_2020}, to give an $\widetilde{\Omega} ( \log^{3/2} (n))$ lower bound for \pref{prob:Multiphase} when $f$ is Inner Product over $\Field_2$. As the work was directly applying the one-way simulation theorem of Larsen, Weinstein and Yu \cite{larsen_crossing_2020}, then using the information-theoretic methods to lower bound the one-way communication, the only possible lower bound was $\widetilde{\Omega} ( \log^{3/2} (n))$.

\subsection{Technical Contribution: The 2.5-Round Communication Game} \label{sec:technical}

We resolve the Boolean hardness question by introducing a fundamentally different 
communication model that eliminates the technical barriers faced by all previous 
approaches.

\paragraph{The Core Challenge.} 
All previous super-logarithmic lower bounds \cite{larsen_cell_2012, larsen_crossing_2020, larsen_super-logarithmic_2025} rely on 
the Chronogram and cell-sampling techniques, which reduce the problem to proving 
that no short \emph{one-way} message from Bob to Alice can help answer an
$n^{-o(1)}$ fraction of queries in \pref{prot:one_way}. However, as discussed in \pref{sec:polylog}, there is a fatal technical subtlety: Alice does not know \emph{which queries} she can answer correctly (the set $\cQ^*$), because she cannot distinguish between:
\begin{itemize}
\item Probed cells missing from Bob's randomly sampled subset;
\item Probed cells that were never written to memory at all
\end{itemize}

This ambiguity destroys the advantage argument. Larsen, Weinstein, and Yu 
\cite{larsen_crossing_2020} resolved this using the Peak-to-Average Lemma, a sophisticated 
application of Chebyshev polynomials, but this incurs a $\sqrt{ \log n }$ parameter loss, 
yielding only $\widetilde{ \Omega} ( \log^{1.5} (n))$. The lemma is provably tight \cite[Appendix B]{larsen_crossing_2020}.

\paragraph{Our Key Insight.}

The one-way communication prevents Alice from verifying whether her simulation 
is correct. But if we add just \emph{an additional round} -- letting Bob verify and certify 
Alice's guess -- the entire technical barrier disappears.

\begin{figure}[!h]
\centering
\begin{tikzpicture}[
  node distance=1.8cm,
  box/.style={rectangle, draw, thick, minimum width=2.8cm, minimum height=1.1cm, align=center},
  arrow/.style={->, very thick, >=stealth}
]
  
  \node[box, fill=purple!20] (merlin) {\textbf{Merlin} \\ $(\vec{S}, X)$};
  \node[box, fill=blue!20, below=of merlin] (bob) {\textbf{Bob} \\ $(X)$};
  \node[box, fill=red!20, right=4cm of bob] (alice) {\textbf{Alice} \\ $(\vec{S} , Q)$};
  
  \draw[arrow, purple!70!black] (merlin) -- node[left, align=right] {
    \textbf{Round 0} \\[2pt]
    $U$ (updates)
  } (bob);
  
  \draw[arrow, blue!70!black] (bob) -- node[above, align=center] {
    \textbf{Round 0.5} \\[2pt]
    $\Phi (U)$: sampled cells 
  } (alice);
  
  \draw[arrow, red!70!black] ([yshift=-1.1cm]alice.south) -- 
    node[below, align=center] {
      \textbf{Round 1} \\[2pt]
      $A_Q$: query transcript \\
      {\small (Alice's simulation)}
    } ([yshift=-1.1cm]bob.south);
  
  \node[box, fill=green!30, below=3.2cm of bob, minimum height=1.3cm] (output) {
    \textbf{Round 2} \\[2pt]
    Bob \textbf{verifies} $A_Q$ \\
    Outputs answer or \\
    random guess (FAIL)
  };
  \draw[arrow, green!60!black] ([yshift=-1.1cm]bob.south) -- (output);

\end{tikzpicture}
\caption{
  Our 2.5-round Multiphase Communication Game. The crucial difference from prior work is the \textbf{verification step} (Round 2): Bob checks if Alice's transcript $A_Q$ is consistent with the actual memory $U$. The 0.5-round message $\Phi (U)$ is sent independently of query $Q$, making this analyzable with information-theoretic techniques from \cite{ko_adaptive_2020, ko_lower_2025}. 
}
\label{fig:2.5protocol}
\end{figure}
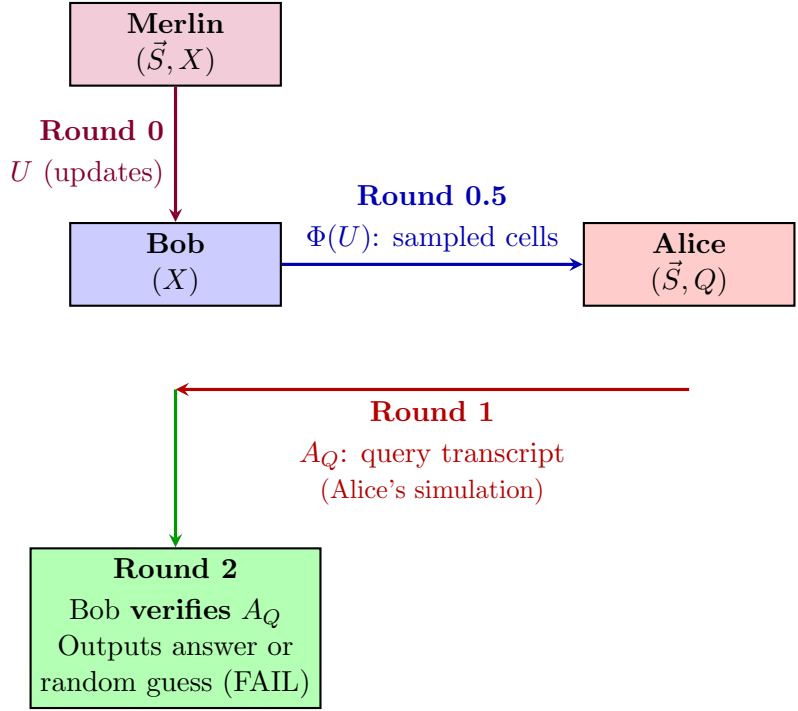

We introduce a \emph{2.5-round Multiphase Communication Game}:
\begin{enumerate}
\item \textbf{Round 0 (Merlin $\to$ Bob):} Merlin (holding $\vec{S},X$) sends update information $U$ to Bob
\item \textbf{Round 0.5 (Bob $\to$ Alice):} Bob (holding $X$) forwards a message $\Phi (U)$ to Alice (depends only on $U$ and Bob's randomness, not the query)
\item \textbf{Round 1 (Alice $\to$ Bob):} Query $Q$ is revealed to Alice; Alice sends her simulation transcript $A_Q$ to Bob
\item \textbf{Round 2 (Bob outputs):} Bob verifies $A_Q$ against $U$ and outputs final answer
\end{enumerate}
The protocol structure is illustrated in Figure~\ref{fig:2.5protocol}.

\paragraph{Verification Step}
The verification step is crucial:
\begin{itemize}
\item Alice simulates the query algorithm and sends the full transcript to Bob.
\item Bob checks if Alice's simulation is \emph{consistent} with actual updated memory $U$.
\item If Alice probes the wrong cells, namely $C_{>i}$ instead of $C_i$ due to sampling error, her transcript will not match $U$.
\item Bob detects this inconsistency and outputs FAIL, which then results in a random guess.
\item Only when Alice's simulation is correct does Bob output a meaningful answer.
\end{itemize}

This completely eliminates the $\cQ^*$ ambiguity: Bob's verification determines whether to use Alice's simulation or default to a random guess, so there is no need for Alice to identify $\cQ^*$. We delegate verification to Bob. We no longer need the Peak-to-Average Lemma, and can achieve the full $\Omega ( (\log n / \log \log n)^2 )$ bound.

\paragraph{Our Second Contribution: Handling 2.5-Round Protocols.}
Conceptually, the above simulation is simple. Yet, for the past decade, we have lacked technical tools to give a lower bound for such 2.5-round protocols against small non-trivial advantage. 

While 3-round communication lower bounds remain beyond current techniques (and 
would imply major circuit lower bounds \cite{jukna_circuits_2010, drucker_limitations_2012, jukna_complexity_2013, dvir_static_2019}),
our key observation is that the information-theoretic framework introduced by Ko and Weinstein~\cite{ko_adaptive_2020}, further refined by Ko~\cite{ko_lower_2025} for small advantage regime, can handle exactly this 2.5-round structure. The key is that Bob's ``0.5-round'' message $\Phi (U)$ is independent of the query $Q$. This allows us to apply the combinatorial lemmas from \cite{ko_adaptive_2020, ko_lower_2025}, and we can extract a random variable $Z$ from the 2.5-round protocol such that (a) the min-entropy of $S_Q$ and $X$ conditioned on $Z$ is large; (b) while maintaining $I( S_Q ; X | Z )$ to be small. 

\paragraph{The Complete Argument.}
Our proof proceeds in two steps:
\begin{enumerate}
\item \textbf{Simulation (\pref{sec:simulation}):} We show that any 
      dynamic data structure with $t_u = \poly \log (n)$ and $t_{tot} = o((\log n/\log \log n)^2)$ can be simulated by a 2.5-round Multiphase Communication Game with non-trivial advantage $n^{-o(1)}$.
      
\item \textbf{Communication Lower Bound (\pref{sec:comm_lb}):} We prove 
      that no such 2.5-round protocol can achieve advantage better than 
      $n^{-\Omega(1)}$, yielding a contradiction.
\end{enumerate}

We note that the two main components of our proof --- the simulation theorem (\pref{sec:simulation}) and the communication lower bound (\pref{sec:comm_lb}) --- are self-contained and can be verified independently.

\subsection{Why is \texorpdfstring{$\widetilde{\Omega} ( \log^2 (n) )$}~~the barrier?}

A natural question arises: 
\begin{displayquote}
Can this approach be extended to prove higher lower bounds for dynamic Boolean problems, i.e., $\omega ( \log^2 n )$? 
\end{displayquote}
We argue that our result likely matches the ``structural ceiling'' of the current techniques, specifically the \textit{Chronogram} framework.

The core of the Chronogram technique relies on decomposing the timeline of updates into a hierarchy of $\ell$ distinct epochs, where the duration of epoch $i$ grows geometrically. This construction naturally yields a hierarchy of depth $\ell \approx \Theta\left( \frac{\log n}{\log \log n} \right).$ The lower bound is then obtained by constructing a hard distribution of updates and queries that forces the data structure to probe $\widetilde{\Omega}( \log n )$ distinct cells at \textit{each} epoch. 

Another perspective on the Chronogram framework is that it reduces dynamic cell-probe lower bound to a static cell-probe lower bound (with pre-initialized memory and cache \cite{larsen_super-logarithmic_2025}) with $\ell$-factor difference. As $\ell$ is fixed, if we are to show $\omega ( \log^2 n )$ lower bound, we must show $\omega ( \log n )$ lower bound for a single epoch $i \in [\ell]$. 

Unfortunately, the barriers to static cell-probe lower bounds are well-known. The best {\em explicit} static cell-probe lower bound we can show is described as a ceiling known as the logarithmic barrier \cite{miltersen_data_1998, siegel_universal_2004, patrascu_logarithmic_2006, panigrahy_geometric_2008, panigrahy_lower_2010, patrascu_unifying_2011, korten_stronger_2025}
\begin{equation} \label{eq:sampling}
t \geq \Omega \left( \frac{ \log \frac{|\cQ|}{n}}{ \log \frac{s}{n} } \right)
\end{equation}
where $s$ is the number of cells used by the static cell-probe data structure, and $|\cQ|$ is the number of possible queries. Even if we allow {\em semi-explicit} static cell-probe lower bound, allowing queries to be non-explicit linear functions, only a lower bound of 
\begin{equation} \label{eq:bit_probe_static}  
t \geq \Omega ( \min \{ \log (|\cQ|/s) , n / \log s \} )
\end{equation} 
is known against a weaker bit-probe model \cite{ko_lower_2025}. Both \pref{eq:sampling} and \pref{eq:bit_probe_static} are merely logarithmic in our regime of interest, since both $s$ and $\abs{\cQ}$ are polynomial in $n$. And due to a well-known connection to circuit lower bounds \cite{jukna_circuits_2010, jukna_boolean_2012, jukna_complexity_2013, drucker_limitations_2012, dvir_static_2019, korten_stronger_2025, ko_lower_2025}, showing $t \geq \omega ( \log n )$ when $|\cQ|$ is $\poly(n)$ would be a major circuit complexity breakthrough. 

To conclude, any $\omega ( \log^2 n )$ Boolean lower bound would require either completely breaking away from the \textit{Chronogram} framework, which has been the main recipe for the past $\approx 37$ years, or a major circuit complexity breakthrough.

\section{Preliminaries}

\subsection{Information Theory}

In this section, we provide the necessary background on information theory and information complexity that are used 
in this paper. For comprehensive background, we direct the reader to \cite{cover_elements_2006}. Unless noted otherwise, all logarithms are base 2. 

\begin{definition}[Entropy]
	The entropy of a random variable $X$ is defined as 
	\begin{equation*}
	H(X) := \sum_{x} \Pr[X=x] \log \frac{1}{\Pr[X=x]}.
	\end{equation*}
	Similarly, the conditional entropy is defined as
	\begin{equation*}
	H(X|Y) := \E_{Y} \left[ \sum_{x} \Pr[X = x | Y = y] \log \frac{1}{\Pr[ X = x | Y = y]} \right].
	\end{equation*}
\end{definition}

\begin{fact}[Conditioning Decreases Entropy] \label{fact:conditioningentropy}
For any random variables $X$ and $Y$
\begin{equation*}
    H(X) \geq H(X|Y)
\end{equation*}
\end{fact}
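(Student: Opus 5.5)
The statement is \pref{fact:conditioningentropy}, $H(X) \geq H(X|Y)$, and I will derive it from concavity of the logarithm via Jensen's inequality. The plan is to write the gap $H(X) - H(X|Y)$ as an expectation of a single logarithm and show that it is nonnegative. Let $p(x,y) = \Pr[X=x, Y=y]$, with marginals $p(x)$ and $p(y)$ and conditional $p(x|y) = p(x,y)/p(y)$. Throughout, sums range only over pairs with $p(x,y) > 0$, using the convention $0 \log \frac{1}{0} = 0$.

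First I rewrite both entropies as sums against the joint distribution. By marginalization, $H(X) = \sum_{x,y} p(x,y) \log \frac{1}{p(x)}$. Directly from the definition, $H(X|Y) = \sum_{x,y} p(x,y) \log \frac{1}{p(x|y)}$. Subtracting gives
\begin{equation*}
H(X|Y) - H(X) = \sum_{x,y} p(x,y) \log \frac{p(x)}{p(x|y)} = \sum_{x,y} p(x,y) \log \frac{p(x)p(y)}{p(x,y)}.
\end{equation*}
This quantity is $-I(X;Y)$, the negative of the mutual information.

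Next I apply Jensen's inequality to the concave function $\log$, with the expectation taken over $(x,y) \sim p(x,y)$:
\begin{equation*}
\sum_{x,y} p(x,y) \log \frac{p(x)p(y)}{p(x,y)} \leq \log \sum_{x,y:\, p(x,y)>0} p(x)p(y) \leq \log 1 = 0 .
\end{equation*}
The second inequality holds because the product weights $p(x)p(y)$ sum to at most $1$ over the support of the joint distribution. Combining the two displays gives $H(X|Y) \le H(X)$.

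The only real obstacle is bookkeeping at zero-probability points: making sure the restriction to the support is applied consistently, so that no term of the form $\log 0$ appears, and that the product-weight sum is bounded by $1$ and not claimed to equal $1$. Equality holds exactly when $X$ and $Y$ are independent, but the statement does not need this.
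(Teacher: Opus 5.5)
Your proof is correct: writing $H(X|Y)-H(X)=\sum_{x,y}p(x,y)\log\frac{p(x)p(y)}{p(x,y)}=-I(X;Y)$ and bounding this by Jensen, with the sum of product weights over the support correctly bounded by $1$ rather than claimed equal to it, is the standard Gibbs-inequality argument. The paper gives no proof of this fact and defers to Cover and Thomas, whose route is the same as yours, namely nonnegativity of $I(X;Y)=D(p_{XY}\,\|\,p_X p_Y)$ via Jensen; subtracting the two entropies is harmless because every variable in the paper takes finitely many values.
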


\noindent With entropy defined, we can also quantify the correlation between two random variables, or how much information one random variable conveys about the other.

\begin{definition}[Mutual Information]
	Mutual information between $X$ and $Y$ (conditioned on $Z$) is defined as 
	\begin{equation*}
	I( X; Y | Z) := H(X|Z) - H(X|YZ).
	\end{equation*}
\end{definition}

\noindent Similarly, we can also define how much one distribution conveys information about the other distribution.

\begin{definition}[KL-Divergence]
	KL-Divergence between two distributions $\mu$ and $\nu$ is defined as
	\begin{equation*}
	\KL{\mu}{\nu} := \sum_{x} \mu(x) \log \frac{\mu(x)}{\nu(x)}.
	\end{equation*}
\end{definition}

\noindent To bound mutual information, it suffices to bound KL-divergence, due to the following fact.

\begin{fact}[KL-Divergence and Mutual Information] \label{fact:KL_Mutual}
	The following equality between mutual information and KL-Divergence holds 
	\begin{equation*}
	I(A;B|C) = \E_{B,C} \left[ \KL{ A|_{B=b, C=c} }{ A|_{C=c} } \right].
	\end{equation*}
\end{fact}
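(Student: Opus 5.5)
The plan is to expand both sides into sums over the joint distribution of $(A,B,C)$ and compare them term by term. No earlier result is needed beyond the definitions of conditional entropy, mutual information and KL-divergence given above. Throughout I write $p(a,b,c) := \Pr[A=a,B=b,C=c]$, and $p(a \mid c)$, $p(a \mid b,c)$ for the corresponding conditional probabilities. All sums are restricted to triples with $p(a,b,c) > 0$, using the convention $0 \log \frac{1}{0} = 0$.

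\textbf{Step 1: write both entropies as expectations over the full triple.} By the definition of conditional entropy,
\begin{equation*}
H(A \mid B C) = \sum_{b,c} p(b,c) \sum_a p(a \mid b,c) \log \frac{1}{p(a \mid b,c)} = \sum_{a,b,c} p(a,b,c) \log \frac{1}{p(a \mid b,c)}.
\end{equation*}
For $H(A \mid C)$, the definition gives $\sum_{a,c} p(a,c) \log \frac{1}{p(a \mid c)}$. Since $p(a,c) = \sum_b p(a,b,c)$ and the summand does not depend on $b$, this equals $\sum_{a,b,c} p(a,b,c) \log \frac{1}{p(a \mid c)}$. This rewriting is the only small trick in the argument.

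\textbf{Step 2: subtract.} By the definition of mutual information,
\begin{equation*}
I(A;B \mid C) = H(A \mid C) - H(A \mid BC) = \sum_{a,b,c} p(a,b,c) \log \frac{p(a \mid b,c)}{p(a \mid c)}.
\end{equation*}
We check that every term is well defined. If $p(a,b,c) > 0$, then $p(a,c) \ge p(a,b,c) > 0$, so $p(a \mid c) > 0$. Hence no term involves division by zero. The same observation shows that the distribution $A|_{B=b,C=c}$ is absolutely continuous with respect to $A|_{C=c}$ whenever $p(b,c) > 0$, so each KL term on the right-hand side of the claim is finite.

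\textbf{Step 3: regroup as an expectation of KL-divergences.} Factor $p(a,b,c) = p(b,c)\, p(a \mid b,c)$ and group the sum by $(b,c)$:
\begin{equation*}
I(A;B \mid C) = \sum_{b,c} p(b,c) \sum_a p(a \mid b,c) \log \frac{p(a \mid b,c)}{p(a \mid c)}.
\end{equation*}
The inner sum is exactly $\KL{A|_{B=b,C=c}}{A|_{C=c}}$ by the definition of KL-divergence. The outer weighted sum is therefore $\E_{B,C}\left[\KL{A|_{B=b,C=c}}{A|_{C=c}}\right]$, which is the claimed identity.

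The only point that needs care is the support bookkeeping in Step 2, namely making sure no $\log 0$ or division by zero appears. It is resolved by the domination $p(a,c) \ge p(a,b,c)$. Everything else is a direct rearrangement of finite sums; for infinite supports the same computation goes through provided the entropies involved are finite.
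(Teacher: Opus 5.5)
Your proof is correct; it is the standard textbook derivation: expand $H(A\mid C)$ and $H(A\mid BC)$ as sums over the joint distribution, subtract, and regroup by $(b,c)$. The paper gives no proof of this fact and treats it as standard background from Cover and Thomas, and your support bookkeeping via $p(a,c)\ge p(a,b,c)$ is exactly the care needed to make the term-by-term manipulation valid.
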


\begin{fact}[Pinsker's Inequality] \label{fact:pinsker}
For any two distributions $P$ and $Q$, 
\begin{equation*}
\norm{ P - Q }_{TV} = \frac{1}{2} \norm{ P - Q }_1 \leq \sqrt{ \frac{1}{2 \log e} D( P || Q) }	
\end{equation*}
\end{fact}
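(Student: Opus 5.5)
The plan is the standard two-step argument: reduce to a pair of Bernoulli distributions by data processing, then verify the binary inequality by one-variable calculus. First the units. Since logarithms are base $2$ throughout, $D(P\|Q) = \log e \cdot D_{\mathrm{nat}}(P\|Q)$, where $D_{\mathrm{nat}}$ uses the natural logarithm. So the claimed bound is exactly the classical $\norm{P-Q}_{TV} \leq \sqrt{D_{\mathrm{nat}}(P\|Q)/2}$, and I will work in nats from now on. The equality $\norm{P-Q}_{TV} = \frac12\norm{P-Q}_1$ is the definition of total variation, equivalently $\max_A |P(A)-Q(A)|$. That maximum is attained at $A^* = \{x : P(x) > Q(x)\}$, because
\begin{equation*}
\textstyle\sum_x (P(x)-Q(x)) = 0 \quad\text{implies}\quad \textstyle\sum_{x\in A^*} (P(x)-Q(x)) = \frac12 \norm{P-Q}_1 .
\end{equation*}

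\textbf{Reduction to two points.} Let $p = P(A^*)$ and $q = Q(A^*)$, so $\norm{P-Q}_{TV} = p - q$. Consider the deterministic map $x \mapsto \mathbbm{1}[x \in A^*]$; it pushes $P$ and $Q$ forward to $\Ber(p)$ and $\Ber(q)$. By the data-processing inequality (equivalently, the log-sum inequality applied separately on $A^*$ and on its complement),
\begin{equation*}
D_{\mathrm{nat}}(P\|Q) \geq D_{\mathrm{nat}}(\Ber(p)\|\Ber(q)).
\end{equation*}
It therefore suffices to show $d(p,q) := p\ln\frac pq + (1-p)\ln\frac{1-p}{1-q} \geq 2(p-q)^2$ for all $p,q \in [0,1]$. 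If $q \in \{0,1\}$ and $p \neq q$, the divergence is $+\infty$ and there is nothing to prove.

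\textbf{Binary case.} Fix $p$ and define $g(q) = d(p,q) - 2(p-q)^2$ for $q \in (0,1)$. Then $g(p) = 0$, and
\begin{equation*}
g'(q) = -\frac{p}{q} + \frac{1-p}{1-q} + 4(p-q) = (q-p)\left(\frac{1}{q(1-q)} - 4\right).
\end{equation*}
Since $q(1-q) \leq \frac14$, the factor in parentheses is nonnegative. Hence $g' \leq 0$ for $q < p$ and $g' \geq 0$ for $q > p$, so $g$ has a global minimum at $q = p$ and $g \geq 0$. Converting back to bits gives the stated inequality. The boundary values $p \in \{0,1\}$ follow from the same computation with the convention $0\ln 0 = 0$.

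\textbf{Main obstacle.} I expect the only delicate point to be the data-processing step, which needs a self-contained justification. I would prove it directly from the log-sum inequality $\sum_i a_i \ln\frac{a_i}{b_i} \geq (\sum_i a_i)\ln\frac{\sum_i a_i}{\sum_i b_i}$, applied once to the terms with $x \in A^*$ and once to the terms with $x \notin A^*$. The log-sum inequality itself follows from convexity of $t \mapsto t\ln t$ via Jensen. The remaining steps, namely the units conversion and the derivative computation, are routine.
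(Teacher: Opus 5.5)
Your proof is correct. The conversion $D(P\|Q) = \log e \cdot D_{\mathrm{nat}}(P\|Q)$ reduces the statement to the classical nats form, and the data-processing reduction to $\Ber(p)$ versus $\Ber(q)$, followed by the computation $g'(q) = (q-p)\bigl(\frac{1}{q(1-q)} - 4\bigr)$, is the standard textbook argument. The paper gives no proof of this fact and simply quotes it as a standard tool, pointing to Cover and Thomas for background; the textbook proof there is essentially the one you give.
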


\noindent We also make use of the following facts on Mutual Information throughout the paper.

\begin{fact}[Chain Rule] \label{fact:chainrule}
For any random variables $A,B,C$ and $D$  
\begin{equation*}
    I(AD;B|C) = I(D;B|C) + I(A;B|CD).
\end{equation*}
\end{fact}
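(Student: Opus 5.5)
The plan is to reduce the chain rule for mutual information to the chain rule for conditional entropy, using only the definition $I(X;Y|Z) = H(X|Z) - H(X|YZ)$ stated above. The one auxiliary identity I need is the entropy chain rule
\begin{equation*}
H(AD \mid C) = H(D \mid C) + H(A \mid CD),
\end{equation*}
together with the analogous identity with $C$ replaced by $BC$. I would establish it first, directly from the definition of conditional entropy. Fix a value $c$, factor $\Pr[A=a, D=d \mid C=c] = \Pr[D=d\mid C=c]\Pr[A=a\mid D=d, C=c]$, take $\log$ of the reciprocal, and split the sum into two terms. Summing out $a$ in the first term gives $H(D\mid C=c)$. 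The second term is $\E_{d}\, H(A\mid D=d, C=c)$. Averaging over $c$ then yields the identity. The same computation, conditioning throughout on the pair $(B,C)$, gives $H(AD\mid BC) = H(D\mid BC) + H(A\mid BCD)$.

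With these two identities, I expand the left-hand side by definition:
\begin{equation*}
I(AD;B\mid C) = H(AD\mid C) - H(AD\mid BC).
\end{equation*}
Substituting both chain rules gives
\begin{equation*}
I(AD;B\mid C) = \bigl[H(D\mid C) - H(D\mid BC)\bigr] + \bigl[H(A\mid CD) - H(A\mid BCD)\bigr].
\end{equation*}
By the definition of mutual information, the first bracket is exactly $I(D;B\mid C)$ and the second is exactly $I(A;B\mid CD)$, which is the claimed identity.

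There is no real obstacle. The only point needing care is bookkeeping in the entropy chain rule: outcomes of probability zero must be dropped, with the convention $0\log\frac{1}{0}=0$. Conditioning on the joint variable $CD$ means the average in $H(A\mid CD)$ is taken over the joint law of $(C,D)$, so the nested averages must be matched correctly. Throughout I assume all variables are finitely supported, so every sum is finite and the rearrangements are valid.
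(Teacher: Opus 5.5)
Your proof is correct. You expand all three terms with the paper's definition $I(X;Y|Z)=H(X|Z)-H(X|YZ)$ and apply the entropy chain rule $H(AD\mid C)=H(D\mid C)+H(A\mid CD)$ together with its version conditioned on $BC$. The paper states this as a standard fact without proof (pointing to Cover and Thomas), and your derivation is the standard textbook argument, so nothing is missing.
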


\begin{fact} \label{fact:chainrule1}
For any random variables $A,B,C$ and $D$, if $I(B;D|C) = 0$
\begin{equation*}
    I(A;B|C) \leq I(A;B|CD).
\end{equation*}
\end{fact}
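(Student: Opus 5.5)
The plan is to expand the single quantity $I(AD;B|C)$ in two different orders using the Chain Rule (\pref{fact:chainrule}), and then compare the two expansions. The hypothesis $I(B;D|C)=0$ kills one term in the first expansion. Nonnegativity of conditional mutual information drops a term in the second.

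First, apply \pref{fact:chainrule} with $D$ peeled off first:
\begin{equation*}
I(AD;B|C) = I(D;B|C) + I(A;B|CD) = I(A;B|CD),
\end{equation*}
where the last equality uses the hypothesis $I(B;D|C) = 0$ and the symmetry of mutual information.

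Second, apply the chain rule again, this time with the roles of $A$ and $D$ swapped, so that $A$ is peeled off first:
\begin{equation*}
I(AD;B|C) = I(A;B|C) + I(D;B|CA) \geq I(A;B|C).
\end{equation*}
The inequality holds because $I(D;B|CA) \geq 0$. This nonnegativity follows from \pref{fact:conditioningentropy} applied conditionally, i.e.\ $H(D|CA) \geq H(D|CAB)$, or equivalently from \pref{fact:KL_Mutual} together with nonnegativity of KL-divergence.

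Combining the two displays gives $I(A;B|C) \leq I(AD;B|C) = I(A;B|CD)$, which is the claim. There is no real obstacle here. The only points requiring care are checking that the chain rule may be applied in either order, since it is symmetric in the grouped variables, and that conditional mutual information is nonnegative.
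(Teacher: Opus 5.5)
Your proof is correct and is the same argument as the paper's, which writes the single chain $I(A;B|C) \leq I(AD;B|C) = I(B;D|C) + I(A;B|CD) = I(A;B|CD)$. There, the first inequality is exactly your second expansion, $I(AD;B|C) = I(A;B|C) + I(D;B|CA)$, combined with nonnegativity of $I(D;B|CA)$.
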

\begin{proof} By the chain rule and non-negativity of mutual information, 
\begin{align*}
I(A;B|C) \leq I(AD;B|C) = I(B;D|C) + I(A;B|CD) = I(A;B|CD).
\end{align*}
\end{proof}

\begin{fact}\label{fact:chainrule2}
For any random variables $A,B,C$ and $D$, if $I(B;D|AC) = 0$
\begin{equation*}
    I(A;B|C) \geq I(A;B|CD).
\end{equation*}
\end{fact}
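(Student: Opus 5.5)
The plan is to imitate the proof of \pref{fact:chainrule1}: expand the mutual information $I(AD;B|C)$ in two different orders using the chain rule (\pref{fact:chainrule}), then use the hypothesis $I(B;D|AC)=0$ to remove one of the terms. The only other ingredient is non-negativity of conditional mutual information.

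\textbf{Step 1 (peel off $A$ first).} Applying \pref{fact:chainrule} with $A$ as the first variable gives
\begin{equation*}
I(AD;B|C) = I(A;B|C) + I(D;B|CA).
\end{equation*}
By hypothesis the second term is $I(B;D|AC) = 0$, so $I(AD;B|C) = I(A;B|C)$.

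\textbf{Step 2 (peel off $D$ first).} Expanding in the other order, exactly as written in \pref{fact:chainrule},
\begin{equation*}
I(AD;B|C) = I(D;B|C) + I(A;B|CD).
\end{equation*}
Since $I(D;B|C) \geq 0$, this yields $I(AD;B|C) \geq I(A;B|CD)$.

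\textbf{Step 3 (combine).} Putting the two steps together,
\begin{equation*}
I(A;B|C) = I(AD;B|C) \geq I(A;B|CD),
\end{equation*}
which is the claimed inequality.

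There is no real obstacle. The only point to check is that the chain rule may be applied with the roles of $A$ and $D$ swapped. This holds because $I(AD;B|C)$ is symmetric in the joint variable $(A,D)$, so \pref{fact:chainrule} applies in either order.
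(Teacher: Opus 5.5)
Your proof is correct and matches the paper's argument: the paper likewise expands $I(AD;B|C)$ by the chain rule in both orders, drops $I(D;B|C)\geq 0$ to get $I(A;B|CD)\leq I(AD;B|C)$, and uses the hypothesis $I(B;D|AC)=0$ to identify $I(AD;B|C)$ with $I(A;B|C)$.
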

\begin{proof} By the chain rule and non-negativity of mutual information,  
\begin{align*}
I(A;B|CD) \leq I(AD;B|C) = I(A;B|C) + I(B;D|AC) = I(A;B|C).
\end{align*}
\end{proof}

We will also use the following version of Chernoff bound.
\begin{fact}[Chernoff bound] \label{fact:chernoff}
 Let $X_1, \ldots , X_n$ be $n$ independent random variables with $X_i \in \{ 0 , 1 \}$ and $\Pr[ X_i = 1 ] = p$ for all $i \in [n]$. Then for any $\eps > 0$ with $\eps < 1 - p$, we have
 \begin{equation*}
     \Pr \left[ \sum_{i \in [n]} X_i \geq ( p + \eps ) n \right] \leq \exp \left(- \frac{n}{\log_2 e} \cdot D \left( \Ber ( p + \eps ) || \Ber ( p ) \right) \right)
 \end{equation*}
 where $\Ber(p)$ refers to the Bernoulli distribution with probability $p$.
\end{fact}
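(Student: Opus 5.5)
We would prove this by the exponential-moment (Cramér--Chernoff) method, then optimize the free parameter so that the exponent becomes exactly the KL-divergence. Write $S = \sum_{i \in [n]} X_i$, $q := p + \eps$, and note that $0 < p < q < 1$ by the hypothesis $\eps < 1-p$. (If $p = 0$ the event $S \geq qn$ has probability $0$ and the bound is trivial, so assume $p > 0$.) The key bookkeeping point is units. The divergence $D(\Ber(q) \| \Ber(p))$ in the statement is in bits, since all logarithms are base $2$. Hence
\begin{equation*}
\frac{1}{\log_2 e} D(\Ber(q)\|\Ber(p)) = D_e(q\|p) := q \ln \frac{q}{p} + (1-q)\ln\frac{1-q}{1-p},
\end{equation*}
and the target is $\Pr[S \geq qn] \leq e^{-n D_e(q\|p)}$.

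\textbf{Step 1.} For any $\lambda > 0$, apply Markov's inequality to $e^{\lambda S}$. Then use independence to factor the moment generating function:
\begin{equation*}
\Pr[S \geq qn] \leq e^{-\lambda q n}\, \E[e^{\lambda S}] = \left( e^{-\lambda q} (1 - p + p e^{\lambda}) \right)^n.
\end{equation*}

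\textbf{Step 2.} Minimize $g(\lambda) = -\lambda q + \ln(1-p+pe^\lambda)$ over $\lambda > 0$. Setting $g'(\lambda)=0$ gives
\begin{equation*}
\frac{p e^\lambda}{1-p+pe^\lambda} = q, \qquad\text{i.e.}\qquad e^{\lambda} = \frac{q(1-p)}{p(1-q)}.
\end{equation*}
This $\lambda$ is strictly positive precisely because $q > p$, and it is well defined because $q<1$. This is the only place the hypothesis $\eps < 1-p$ is used.

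\textbf{Step 3.} Substitute back. From the stationarity condition, $1-p+pe^\lambda = \frac{1-p}{1-q}$. Therefore
\begin{equation*}
g(\lambda) = -q\ln\frac{q(1-p)}{p(1-q)} + \ln\frac{1-p}{1-q} = -\left(q\ln\frac{q}{p} + (1-q)\ln\frac{1-q}{1-p}\right) = -D_e(q\|p).
\end{equation*}
Raising to the $n$-th power gives the claimed bound, after converting $D_e$ back to bits via the $\log_2 e$ factor.

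There is no real obstacle here: the argument is routine. The only points requiring care are (i) the base-of-logarithm conversion, which explains the $1/\log_2 e$ in the exponent, and (ii) checking that the optimizing $\lambda$ is admissible, i.e.\ positive and finite, which is exactly where $0 < \eps < 1-p$ enters.
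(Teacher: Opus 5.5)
Your proof is correct. The paper quotes this as a standard fact without proof, and your Cram\'er--Chernoff derivation is exactly the standard justification: Markov on $e^{\lambda S}$, evaluation at $e^{\lambda} = \frac{q(1-p)}{p(1-q)}$, and the conversion $D/\log_2 e = D_e$ from bits to nats; you also handle the positivity of $\lambda$ and the $p=0$ edge case correctly. You do not even need this $\lambda$ to be the true minimizer, since Markov gives a valid bound for every $\lambda>0$ and evaluating at this one point already yields $e^{-nD_e(q\|p)}$.
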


\subsection{Min-Entropy}

\begin{definition} We define the R\'enyi entropy $H_2 (A)$ and min-entropy $H_\infty (A)$ as
    \begin{align*}
        H_2 (A) := - \log \left( \sum_{a} \Pr[ A = a]^2 \right) \\
        H_\infty (A) := - \log \left( \max_{a} \Pr[ A = a]  \right)
    \end{align*}
\end{definition}

\begin{fact}[R\'enyi Entropy] \label{fact:renyi}
Let $A$ be a random variable. Then
    \begin{equation*}
        H ( A ) \geq H_2 (A) \geq H_\infty (A) 
    \end{equation*}
 In particular, for any fixed $b$ we have
\begin{equation*}
H_2 ( A| B = b ) \geq H_\infty ( A | B = b)    
\end{equation*}
\end{fact}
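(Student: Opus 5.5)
The plan is to prove both inequalities directly from the definitions by viewing each entropy as a functional of the probability mass function $p(a) := \Pr[A=a]$, and then to obtain the conditional statement by applying the unconditional one to a fixed conditional distribution.

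\textbf{Step 1: $H(A) \geq H_2(A)$.} I would write $\sum_a p(a)^2 = \E_{a \sim A}[p(a)]$ and $H(A) = \E_{a \sim A}[-\log p(a)]$. Since $\log$ is concave, Jensen's inequality gives $\log \E[p(A)] \geq \E[\log p(A)]$. Negating both sides yields
\begin{equation*}
H_2(A) = -\log \E[p(A)] \leq \E[-\log p(A)] = H(A).
\end{equation*}
Only $a$ with $p(a) > 0$ contribute to either expectation, so there are no issues with $\log 0$.

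\textbf{Step 2: $H_2(A) \geq H_\infty(A)$.} I would bound $\sum_a p(a)^2 \leq \left(\max_a p(a)\right) \sum_a p(a) = \max_a p(a)$. Since $-\log$ is decreasing, this gives $H_2(A) \geq -\log \max_a p(a) = H_\infty(A)$.

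\textbf{Step 3: conditional version.} For any fixed $b$ with $\Pr[B=b]>0$, the conditional law $A|_{B=b}$ is itself a probability distribution. Applying Steps 1--2 to it gives $H_2(A|B=b) \geq H_\infty(A|B=b)$, with both quantities defined pointwise in $b$ as the unconditional entropies of that conditional law.

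None of these steps is a real obstacle; the only points needing care are the direction of Jensen's inequality and restricting to the support of $A$.
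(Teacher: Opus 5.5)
Your proof is correct: Jensen's inequality for the concave $\log$ gives $H(A) \geq H_2(A)$, the bound $\sum_a p(a)^2 \leq \max_a p(a)$ gives $H_2(A) \geq H_\infty(A)$, and applying both to the law of $A|_{B=b}$ gives the conditional version, which matches how the paper reads $H_2(A|B=b)$ in Proposition~\ref{prop:l-2-min}. The paper states this as a standard fact without proof, and your derivation is the standard one, so there is nothing to add.
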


We use the following lemma on ``average'' min-entropy.

\begin{definition}[Average Min-Entropy]
    \begin{equation*}
        \widetilde{H}_\infty ( A | B ) := - \log \left( \E_{b \sim B} \left[ \max_{a} \Pr [ A = a | B = b ] \right] \right) = - \log \left( \E_{b \sim B} \left[ 2^{- H_\infty (A | B = b ) } \right] \right)    
    \end{equation*}
\end{definition}

\begin{lemma}[Lemma 2.2 of \cite{dodis_fuzzy_2008}] \label{lem:dors}
    Let $A,B,C$ be random variables. Then if $B$ has at most $2^{\lambda}$ possible values, then
    \begin{equation*}
        \widetilde{H}_\infty ( A | B, C ) \geq \widetilde{H}_\infty ( A, B | C ) - \lambda \geq \widetilde{H}_\infty ( A | C ) - \lambda.
    \end{equation*}
    In particular,
    \begin{equation*}
        \widetilde{H}_\infty ( A | B ) \geq \widetilde{H}_\infty (A, B ) - \lambda \geq H_\infty ( A ) - \lambda.
    \end{equation*}
\end{lemma}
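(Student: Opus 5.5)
The statement to prove is \pref{lem:dors}, the chain rule for average min-entropy. I will prove the two inequalities separately, working directly from the definition of $\widetilde{H}_\infty$ and exponentiating so that everything becomes a comparison of sums of maximum probabilities.

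\textbf{First inequality.} Exponentiating the definition gives
\begin{equation*}
2^{-\widetilde{H}_\infty(A|B,C)} = \E_{c}\Big[\sum_{b} \Pr[B=b|C=c]\,\max_a \Pr[A=a|B=b,C=c]\Big] = \E_c\Big[\sum_b \max_a \Pr[A=a,B=b|C=c]\Big].
\end{equation*}
The sum over $b$ has at most $2^\lambda$ terms. Each term is at most $\max_{a,b'}\Pr[A=a,B=b'|C=c]$. Hence the sum is at most $2^\lambda \max_{a,b}\Pr[A=a,B=b|C=c]$. Taking the expectation over $c$ gives
\begin{equation*}
2^{-\widetilde{H}_\infty(A|B,C)} \le 2^{\lambda}\cdot 2^{-\widetilde{H}_\infty(A,B|C)}.
\end{equation*}
Taking $-\log$ of both sides yields $\widetilde{H}_\infty(A|B,C)\ge \widetilde{H}_\infty(A,B|C)-\lambda$.

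\textbf{Second inequality.} For every $a,b,c$ we have $\Pr[A=a,B=b|C=c]\le \Pr[A=a|C=c]$. So the maximum over $(a,b)$ of the left side is at most the maximum over $a$ of the right side. Averaging over $c$ gives $\widetilde{H}_\infty(A,B|C)\ge\widetilde{H}_\infty(A|C)$. Subtracting $\lambda$ from both sides gives the second inequality.

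\textbf{The ``in particular'' statement.} Take $C$ to be constant. Then $\widetilde{H}_\infty(\cdot\,|C)$ becomes the ordinary $H_\infty(\cdot)$, and the chain reads
\begin{equation*}
\widetilde{H}_\infty(A|B)\ge H_\infty(A,B)-\lambda\ge H_\infty(A)-\lambda .
\end{equation*}

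\textbf{Main step.} There is no real obstacle here. The one step that needs care is rewriting the expectation over $b$ weighted by $\Pr[B=b|C=c]$ as a plain sum of joint maxima. This identity is what makes the support-size bound $2^\lambda$ enter linearly. Note also that the argument needs only the support bound on $B$, with no independence assumptions.
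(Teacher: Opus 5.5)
Your proof is correct. The paper does not prove this lemma itself and only cites Lemma~2.2 of Dodis et al. Your argument matches the standard proof given there: rewrite $2^{-\widetilde{H}_\infty(A|B,C)}$ as $\E_c[\sum_b \max_a \Pr[A=a,B=b\mid C=c]]$, bound each of the at most $2^\lambda$ summands by the joint maximum, and use $\Pr[A=a,B=b\mid C=c]\le \Pr[A=a\mid C=c]$ for the second inequality.
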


\begin{claim} \label{cl:conditioning_infty}
    \begin{equation*}
        \widetilde{H}_\infty ( A | B,C ) \leq \widetilde{H}_\infty ( A | B  )
    \end{equation*}
\end{claim}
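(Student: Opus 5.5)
We prove the equivalent inequality on guessing probabilities. By the definition of average min-entropy and the monotonicity of $-\log$, the claim $\widetilde{H}_\infty ( A | B,C ) \leq \widetilde{H}_\infty ( A | B )$ is the same as
\begin{equation*}
\E_{(b,c) \sim (B,C)} \left[ \max_{a} \Pr[ A = a | B = b, C = c ] \right] \geq \E_{b \sim B} \left[ \max_{a} \Pr[ A = a | B = b ] \right].
\end{equation*}
So the task is to show that an optimal guesser of $A$ who sees both $B$ and $C$ succeeds at least as often as one who sees only $B$.

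Fix a value $b$ with $\Pr[B=b]>0$, and let $a^*(b)$ be a maximizer of $\Pr[A = a | B = b]$. For every $c$ in the support of $C$ given $B=b$,
\begin{equation*}
\max_a \Pr[A = a | B = b, C = c] \geq \Pr[ A = a^*(b) | B = b, C = c ].
\end{equation*}
Averaging over $c \sim C|_{B=b}$, the right-hand side becomes $\sum_c \Pr[C = c|B=b]\Pr[A=a^*(b)|B=b,C=c]$. By the law of total probability this equals $\Pr[A = a^*(b) | B = b] = \max_a \Pr[A=a|B=b]$. Averaging the resulting inequality over $b \sim B$ gives exactly the displayed inequality above, and taking $-\log$ of both sides reverses the direction and yields the claim.

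No step here is a real obstacle; the argument is the usual ``conditioning cannot hurt the optimal guesser'' reasoning, namely that the max of an average is at most the average of the maxes. The only care needed is bookkeeping: restrict to pairs $(b,c)$ of positive probability so the conditionals are well defined, and note that $-\log$ is decreasing, which flips the direction of the inequality.
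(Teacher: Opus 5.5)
Your proposal is correct and follows the same route as the paper's proof: fix $b$, bound $\max_a \Pr[A=a \mid B=b, C=c]$ below by its value at the maximizer $a^*(b)$ of $\Pr[A=a \mid B=b]$, and average over $c \sim C|_{B=b}$ using total probability. Then average over $b$ and apply the decreasing map $-\log$ to flip the inequality.
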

\begin{proof}
    We first proceed with showing the following inequality,
    \begin{equation} \label{eq:conditioning_infty}
        \E_{c \sim C |_{B = b} } \left[ \max_{a} \Pr[ A = a | B = b, C =c ] \right] \geq \max_{a} \Pr[ A = a | B = b ].
    \end{equation}
    which is a standard fact about the $\max$ function. Let $a^* := \argmax_{a} \Pr[ A = a | B = b ] $. Then
    \begin{align*}
        \Pr[ A = a^* | B = b ] = \E_{c \sim C |_{B = b} } \left[ \Pr[ A = a^* | B = b, C =c ] \right] \leq \E_{c \sim C |_{B = b} } \left[ \max_{a} \Pr[ A = a | B = b, C =c ] \right] 
    \end{align*}

    With \pref{eq:conditioning_infty} established, we are ready to prove the claim. Recall that 
    \begin{align*}
        \widetilde{H}_\infty ( A | B,C )  := - \log \left( \E_{b,c \sim B,C } \left[ \max_{a} \Pr[ A = a | B = b, C =c ] \right] \right)
    \end{align*}
    Taking expectation over $B$ on both sides of \pref{eq:conditioning_infty}, we have
    \begin{align*}
        \E_{b,c \sim B,C } \left[ \max_{a} \Pr[ A = a | B = b, C =c ] \right] \geq \E_{b \sim B } \left[ \max_{a} \Pr[ A = a | B = b ] \right]
    \end{align*}
    Therefore, we get
    \begin{align*}
        \widetilde{H}_\infty ( A | B,C )  & = - \log \left( \E_{b,c \sim B,C } \left[ \max_{a} \Pr[ A = a | B = b, C =c ] \right] \right) \\
        & \leq - \log \left( \E_{b \sim B } \left[ \max_{a} \Pr[ A = a | B = b ] \right] \right) = \widetilde{H}_\infty ( A | B ) 
    \end{align*}

\end{proof}

\begin{claim}[Claim 2.12 of \cite{ko_lower_2025}] \label{cl:lowL2}
    Let $\cD$ be a distribution and $\cU$ a uniform distribution over some $\cS \subset \{0,1\}^n$. Then if $D( \cD || \cU ) < t$ with $t > 1$, then for every $\alpha > 2$ there exists an event $E$ such that 
    \begin{align*}
        & \cD ( E ) \geq 1 - \frac{1}{\alpha} \\ 
        & H_\infty ( \cD |_{E} ) \geq  \log |\cS| - 2 \alpha t  
    \end{align*}
\end{claim}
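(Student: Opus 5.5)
The plan is to prove the statement by a truncation argument. We cut away the part of $\cS$ where $\cD$ puts much more mass than $\cU$, and then use Markov's inequality on the log-likelihood ratio, made possible by the KL bound.

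Write $p(x) := \cD(x)$ for $x \in \cS$. Since $D(\cD\|\cU) < \infty$, $\cD$ is supported on $\cS$. Set $u := 1/|\cS|$ and $L(x) := \log (p(x)/u)$, so that $D(\cD\|\cU) = \E_{x\sim\cD}[L(x)] < t$. Define
\[ E := \{x \in \cS : L(x) \le \alpha t\}. \]

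\textbf{Mass of $E$.} The difficulty is that $L$ can be negative, so Markov's inequality does not apply to $L$ directly. We pass to the positive part $L^+ := \max\{L,0\}$ and control the negative contribution. Since $\log y \ge -(1/y - 1)\log e$, we get
\[ \E_{\cD}[L^-] = \sum_{x : p(x) < u} p(x) \log \frac{u}{p(x)} \le \log e \sum_{x} (u - p(x))^+ \le \log e. \]
The middle step uses $p \log(u/p) \le (u-p)\log e$ for $p < u$. Hence
\[ \E_{\cD}[L^+] = \E_{\cD}[L] + \E_{\cD}[L^-] < t + \log e. \]
Markov's inequality then gives $\cD(\bar E) \le (t + \log e)/(\alpha t)$. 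This is slightly weaker than the target $1/\alpha$.

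To repair this, I would instead take the threshold to be $2\alpha t - \log e - 1$, or simply $\alpha t + c$ for a suitable constant $c$. The hypotheses $t>1$ and $\alpha>2$ give room to absorb such constants into the factor $2$ in the min-entropy bound. Concretely, set $E := \{L \le 2\alpha t - 1\}$. Markov gives
\[ \cD(\bar E) \le \frac{t + \log e}{2\alpha t - 1}. \]
Since $t>1$, we have $t + \log e < 2.45\,t$, but $2\alpha t - 1$ is only about $2\alpha t$, so this falls short of $1/\alpha$ by a constant factor. I expect this constant bookkeeping to be the main obstacle. A cleaner route is to bound $\E[L^+]$ by $t$ plus the smaller quantity $\E[L^-] \le \log e \cdot \mathrm{TV}$, or to apply Markov to the nonnegative quantity $L + \log e\,(u/p - 1)$, which has expectation at most $t$. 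For the latter, note that $L(x) + \log e\,(u/p(x) - 1) \ge 0$ pointwise, and its $\cD$-expectation equals $D(\cD\|\cU) + \log e\,(\cU(\mathrm{supp}\,\cD) - 1) \le t$. On the event where this quantity is at most $\alpha t$, we have
\[ L \le \alpha t + \log e\,(1 - u/p) \le \alpha t + \log e \le 2\alpha t - 1, \]
using $t>1$ and $\alpha>2$. So Markov gives $\cD(E) \ge 1 - 1/\alpha$ for $E$ defined by this quantity being at most $\alpha t$.

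\textbf{Min-entropy.} For $x \in E$,
\[ \cD|_E(x) = \frac{p(x)}{\cD(E)} \le \frac{u\, 2^{2\alpha t - 1}}{1 - 1/\alpha} \le u\, 2^{2\alpha t}, \]
since $1 - 1/\alpha \ge 1/2$. Taking $-\log$ yields
\[ H_\infty(\cD|_E) \ge \log|\cS| - 2\alpha t, \]
as required.
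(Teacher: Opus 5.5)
You are right that $L$ can be negative. The paper's own proof sets $E=\{L<\alpha t\}$ and applies Markov's inequality directly to the signed variable $L$, which is not justified. For example, take $|\cS|=1280$, $\alpha=3$, $t=2$, put mass $0.05$ on each of $7$ points (ratio $2^{6}$), and spread the remaining $0.65$ uniformly over the rest. Then $D(\cD\|\cU)\approx 1.70<t$, yet $\cD(L\ge\alpha t)=0.35>1/\alpha$.

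Your repair, however, has a gap at the last step. From $g\le\alpha t$ you correctly get $L\le \alpha t+\log_2 e$. But the claimed inequality $\alpha t+\log_2 e\le 2\alpha t-1$ is equivalent to $\alpha t\ge 1+\log_2 e\approx 2.44$, whereas $\alpha>2$ and $t>1$ only give $\alpha t>2$. For instance, with $\alpha=2.1$ and $t=1.05$, your argument only yields $H_\infty(\cD|_E)\ge\log|\cS|-(\alpha t+\log_2 e+1)$, which is weaker than the target. Adjusting constants inside the Markov-on-$g$ route cannot fix this. Since $\E_\cD[g]$ can be arbitrarily close to $t$, Markov forces a cutoff of at least $\alpha t$ on $g$, and that permits $L$ up to roughly the cutoff plus $\log_2 e$. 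The normalization $1/\cD(E)$ costs up to $\log_2\frac{\alpha}{\alpha-1}\approx 1$ further bit. The total loss is about $2.44$ bits, against only about $2$ bits of slack as $\alpha\to2^+$, $t\to1^+$.

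The missing idea is to replace Markov by the log-sum inequality, keeping your threshold $E:=\{x\in\cS: L(x)\le 2\alpha t-1\}$. Let $m:=\cD(\bar E)$.
\begin{itemize}
\item Points outside $E$ contribute at least $m(2\alpha t-1)$ to $D(\cD\|\cU)$.
\item By log-sum, the points of $E$ contribute at least $(1-m)\log\frac{1-m}{\cU(E)}\ge(1-m)\log(1-m)$.
\end{itemize}
Hence $t>h(m):=m(2\alpha t-1)+(1-m)\log(1-m)$. Since $2\alpha t-1>3$, we have $h'(m)=2\alpha t-1-\log_2 e-\log(1-m)>0$. So if $m>1/\alpha$, then
\[ t>h(1/\alpha)=2t-\bigl[\tfrac{1}{\alpha}+(1-\tfrac{1}{\alpha})\log\tfrac{\alpha}{\alpha-1}\bigr]\ge 2t-1>t, \]
a contradiction. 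The bracket is at most $1$ because it is increasing in $1/\alpha$ and equals $1$ at $\alpha=2$. Therefore $\cD(E)\ge 1-1/\alpha\ge 1/2$, and your final min-entropy computation $\cD|_E(x)\le u\,2^{2\alpha t-1}\cdot 2$ goes through verbatim. This uses exactly the hypotheses $t>1$ and $\alpha>2$, and the same argument also repairs the paper's proof.
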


\section{Main Proof} \label{sec:main_proof}

In this section, we prove our main theorem, a slightly stronger statement than in \pref{sec:contribution}.

\begin{reptheorem}{thm:main_informal}
    For the Multiphase Problem with $f$ Inner Product over $\Field_2$, with $m = n^{1+\Omega(1)}$, if $t_u \leq \log^\kappa n$ for some large constant $\kappa > 0$, then 
    $$  t_{tot} \geq \Omega \left( \left( \frac{\log n}{\log \log n} \right)^2 \right)$$
\end{reptheorem} 

Before we delve into the proof of the main theorem, we introduce formal notation and the model necessary for the proof. 

\paragraph{Epochs} To give a super-logarithmic lower bound, the first main technical component is the Chronogram technique \cite{fredman_cell_1989}, where we divide the sequence of $n$ updates into epochs, each containing $n_i := \beta^i$ updates, with $\sum_{i=1}^\ell n_i = n$. Then observe that $\ell := \log_{\beta} n = \frac{\log n}{\log \beta}$. 

For each epoch of $n_i$ updates, denoted as $X^{(i)}$, the update algorithms will use $t_u n_i$ cell-probe operations (write operations) to overwrite the memory. We denote the epochs of updates as $\{ U_i \}_{i=1}^\ell$, each of size $\abs{U_i} = w \cdot t_u n_i$, and we process the updates in reverse order, that is, the larger epochs get processed first.

The essence of the Chronogram technique is that every query algorithm can be decomposed in the following manner. Let $C_i$ denote the cells last updated in epoch $i$. The query algorithm for the query $q$ will then probe into $C_1, \ldots, C_{\ell}$. We will denote $t_q [i]$ as the number of probes the query for $q \in \cQ$ makes into $C_i$. Then it must be the case that
\begin{equation*}
    \sum_{i=1}^{\ell} t_q [i] \leq t_{tot}
\end{equation*}

\paragraph{Communication Model} We define the following $2.5$-round Multiphase Communication Game, as an extension of 1.5-round Multiphase Communication Game introduced in \cite{chattopadhyay_little_2012}. This replaces the one-way communication model used in \cite{larsen_crossing_2020, larsen_super-logarithmic_2025} for the lower bound.

\begin{Protocol}
    \begin{itemize}
        \item Alice holds $\vec{S}$. Bob holds $X$. Merlin holds $\vec{S}$ and $X$.
        \item Merlin sends $U$ to Bob.
        \item As the 0.5-round message, Bob simply forwards $\Phi (U)$, a message that depends only on $U$ and independent randomness, to Alice.
        \item $q \in \cQ$ is then revealed to both Alice and Bob. Alice sends a message $A_q$ to Bob. Then Bob outputs the estimate $B_q \in \{ \pm 1 \}$ for $f( S_q, X )$.
    \end{itemize}
    \caption{ 2.5-round Multiphase Communication Game }
\end{Protocol}

\paragraph{Advantage of the Protocol}

We will be considering a fixed $\vec{S}$ for the simulation theorem, while for the lower bound analysis, we will assume that these $\vec{S}$ are distributed uniformly at random. Regardless of the chosen $\vec{S}$, we will be using a uniform distribution over $X$ as the update, and a uniform distribution over $\cQ$ as the possible query. And for the sake of simplicity, we restrict to balanced functions $f$.\footnote{While it is possible to extend the definition and our technique to non-balanced functions, doing so would significantly compromise the simplicity of the argument. Consequently, we restrict our analysis to balanced functions.} That is, for any $S_Q$,
\begin{equation*}
    \E_{X} \left[ f(S_Q, X ) \right] = 0
\end{equation*}
Under the distribution, we will measure the advantage of the 2.5-round protocol $\Pi$. 
\begin{definition} The advantage of 2.5-round protocol $\Pi = ( \Phi (U), A_{Q}, B_{Q} )$ over the distribution $Q$ over the queries is defined as 
    \begin{equation*}
    \adv ( G_{f}, \Pi ) :=  \E_{Q, \Phi (U), A_{Q}, B_{Q}} \left[ \E_{(S_Q,X) | \Phi (U), A_{Q}, B_{Q}} \left[ B_Q \cdot f ( S_Q , X ) \right]  \right]
    \end{equation*}
    where $B_Q \in \{ \pm 1 \}$ is Bob's final estimate.
\end{definition}

For the rest of the section, we prove the simulation theorem (\pref{sec:simulation}) showing any efficient data structure yields an efficient 2.5-round protocol achieving good advantage. This results in a contradiction via information-theoretic arguments (\pref{sec:comm_lb}).

\subsection{Simulation Theorem} \label{sec:simulation}

First, we show the following simulation theorem, which argues that an efficient dynamic data structure implies an efficient 2.5-round Multiphase Communication Game protocol $\Pi$ with large advantage.

\begin{theorem}[New Simulation Theorem]
    Fix $\vec{S}$. If there exists a data structure for the Multiphase problem $f$ with $t_{tot}$ query time, and $t_u$ update time, then for any $p \in (0,1/3)$ with $p \leq \frac{1}{4 t_u w }$, there exists a 2.5-round Multiphase Communication Game protocol $\Pi$ with 
    \begin{align*}
        \abs{U} \leq w \cdot t_u n + \log \ell,~\abs{\Phi (U)} \leq n - ( 1 - 2 p \cdot w t_u - o(1) ) \sqrt{n} ,~\abs{A_q} \leq O ( t_{tot} \cdot w ) 
    \end{align*}
    while achieving the advantage
    \begin{equation*}
        \adv ( G_{f}, \Pi ) \geq \Omega \left(  \left( \exp \left( - \frac{200 \cdot t_{tot}}{\ell} \cdot \ln ( t_{tot} ) - \frac{ 100 \cdot t_{tot}}{\ell} \cdot \ln \frac{1}{p}  \right) \right) - \exp \left( - \Omega ( p n ) \right) \right)
    \end{equation*}
\end{theorem}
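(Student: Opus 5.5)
Fix the data structure. The plan is to combine the Chronogram decomposition with cell sampling, and to use Bob's verification round in place of the Peak-to-Average Lemma.

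\textbf{Choice of epoch.} Since $\sum_i t_q[i]\le t_{tot}$, averaging over $q\sim\cQ$ and $i\in[\ell]$ shows that some epoch $i^*$ has $\E_q[t_q[i^*]]\le t_{tot}/\ell$. By Markov, at least half of the queries satisfy $t_q[i^*]\le 2t_{tot}/\ell$. Merlin knows $\vec S$ and $X$, so he computes $i^*$ and sends it to Bob together with the full update transcript $U$. This costs $w t_u n+\log\ell$ bits.

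\textbf{The message $\Phi(U)$.} Bob forwards the following, using only $U$ and private randomness:
\begin{itemize}
\item the contents and addresses of all cells in $C_{<i^*}$, which take at most $\sum_{j<i^*} w t_u n_j = O(w t_u n_{i^*}/\beta)$ bits;
\item a random subset $\widetilde C\subseteq C_{i^*}$ obtained by keeping each cell independently with probability $p$, together with addresses and contents;
\item the update bits $X^{(j)}$ for $j>i^*$.
\end{itemize}
The last item plays the role of Alice knowing all epochs except $i^*$. Alice can then recompute the memory state $C_{>i^*}$ herself by running the update algorithm on $X^{(>i^*)}$.

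For the length bound, the main cost is $n-n_{i^*}$ bits for $X^{(\ne i^*)}$. Add $p\,w t_u n_{i^*}$ for the expected sample (addresses cost $O(\log n)$, absorbed by choosing the cell encoding appropriately, e.g.\ as a bitmask over $C_{i^*}$ positions compressed to $H(p)$). Add the lower-order $C_{<i^*}$ term. We choose epoch sizes so that $n_{i^*}\ge\sqrt n$ (restrict to top epochs) and truncate the sample at $2p w t_u n_{i^*}$. By Chernoff (Fact~\ref{fact:chernoff}), the truncation fails with probability $\exp(-\Omega(pn_{i^*}))$, which yields the $-\exp(-\Omega(pn))$ term in the advantage after adjusting parameters. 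The resulting bound is $n-(1-2pwt_u-o(1))\sqrt n$.

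\textbf{Alice's simulation and Bob's check.} Alice receives $q$ and runs the query algorithm. On each probe she looks up the cell in the order $C_{<i^*}$, then $\widetilde C$, then her reconstructed $C_{>i^*}$. This order is correct whenever every $C_{i^*}$-cell the query touches lies in $\widetilde C$. She sends the transcript $A_q$: the list of probed addresses and the contents she used, $O(t_{tot}w)$ bits. Bob knows the true final memory from $U$, so he checks every probed cell against it. If all match, the transcript is a faithful execution, and he outputs its answer, which equals $f(S_q,X)$. Otherwise he outputs a uniformly random sign.

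A wrong simulation can never pass the check: the first inconsistent probe is detected, and any consistent transcript reproduces the true adaptive execution by induction on probes. Hence a failure contributes exactly $0$ to the advantage in expectation. This is the step that removes the $\cQ^*$-identification issue.

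\textbf{Advantage.} The advantage is at least the probability, over $q$ and the sampling, that all of $q$'s probes into $C_{i^*}$ are sampled. For good $q$ this is $p^{t_q[i^*]}\ge p^{2t_{tot}/\ell}$, and half of the queries are good. The main obstacle is reconciling this with the stated form $\exp(-\frac{200t_{tot}}{\ell}\ln t_{tot}-\frac{100t_{tot}}{\ell}\ln\frac1p)$, which suggests a loss from conditioning.

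The subtlety is that the probe set depends on the true memory, which itself depends on $\widetilde C$ only through the consistency check. So I would argue via the true execution: the set of $C_{i^*}$-cells probed by $q$ is fixed by $(X,q)$ and independent of the sampling. The product bound therefore holds directly, minus the truncation event.

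The $\ln t_{tot}$ factor presumably comes from having to pick $i^*$ with a stronger concentration property. The average over $q$ gives only $\E[t_q[i^*]]$, and the pointwise bound loses constants. Alternatively it comes from encoding which of Alice's probes are claimed to lie in $C_{i^*}$, at a cost of about $\binom{t_{tot}}{t_q[i^*]}$ choices. I would first try to prove the cleaner bound $\tfrac12 p^{2t_{tot}/\ell}-\exp(-\Omega(pn))$ and then note that it implies the stated one.
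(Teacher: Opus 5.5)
Your argument is correct, and it is simpler than the paper's. In the paper, Alice additionally guesses which probe times $r_1<\dots<r_{\tilde t_q}$ land in $C_i$, choosing uniformly among $\binom{t_{tot}}{\le 100 t_{tot}/\ell}$ options. She aborts if one of those addresses is missing from $\widetilde C_i$. Bob then checks the contents and also checks that exactly the guessed probes lie in $C_i$. The factor $\exp(-\frac{200 t_{tot}}{\ell}\ln t_{tot})$ is the probability of a correct guess, so your second speculation about the $\ln t_{tot}$ term is exactly what the paper does.

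You observe that this guess is redundant. A transcript whose probed contents all agree with the true memory is, by induction along the deterministic probe sequence, the true execution. So content-consistency alone certifies $z_{out}=f(S_q,X)$. Moreover, for fixed $(\vec S,X,q)$ the true set of $C_{i^*}$-probes does not depend on Bob's coins. Hence success has probability at least $p^{t_q[i^*]}$ minus the truncation probability. This gives $\Omega(p^{O(t_{tot}/\ell)})$ minus the Chernoff term, which implies the stated bound.

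Your $A_q$ is a function of $(\vec S,\Phi(U),q)$, and $B_q$ is a function of $(U,A_q)$ and Bob's coins, so Section~\ref{sec:comm_lb} applies verbatim. The paper's extra structure buys nothing downstream beyond making the ``wrong-epoch probe is caught'' mechanism explicit. In the regime of Corollary~\ref{cor:choice_parameters}, $\ln t_{tot}=O(\log\log n)=O(\ln(1/p))$, so your saving is only a constant factor in the exponent.

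A few details need routine repairs.
\begin{itemize}
\item To get $n_{i^*}\ge\sqrt n$ you must average only over the top epochs. This gives $\E_q[t_q[i^*]]\le 2t_{tot}/\ell$ rather than $t_{tot}/\ell$, which is harmless.
\item Bob does not know the contents of cells never written by $U$: they depend on $\vec S$, and $B_q$ must not. Under your lookup rule, though, Alice serves such cells from her reconstruction, which is exact there, so Bob may simply skip them (the paper's footnote).
\item A bitmask over $C_{i^*}$ is not decodable by Alice, since she does not know $C_{i^*}$. Send explicit addresses at $O(w)$ bits each instead. This only changes the constant in front of $p\,w t_u$, and the paper does not count addresses either.
\item The truncation term that Chernoff actually yields is $\exp(-\Omega(p\,n_{i^*}))$, not $\exp(-\Omega(pn))$. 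This is true of the paper's argument as well as yours, and it is all the corollary needs.
\end{itemize}
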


\begin{proof}

    Suppose there exists a dynamic data structure for the Multiphase Problem $f$ with $t_{tot}$ query time and $t_u$ update time. We want to then simulate the data structure with 2.5-round Multiphase Communication Game, in the following manner.

    \begin{Protocol}

    \begin{itemize}
        \item First Merlin chooses $i \in [2 \ell / 3, \ell]$, which satisfies
        \begin{equation*}
            \E_{ q \in \cQ | \vec{S}, X} \left[ t_q [i] \right] \leq 10 \cdot t_{tot} / \ell. 
        \end{equation*}
        Note that by a simple averaging argument, such $i$ is guaranteed to exist. Merlin then forwards $U = U_1, \ldots, U_{\ell}$ along with the index $i$ to Bob. Bob from $U$ can construct $C_1, \ldots, C_\ell$.
        
        \item As the $0.5$-round message, Bob sends the following. For $C_i$, he samples each cell with probability $p$ at random, and creates $\widetilde{C}_i$. If $\abs{ \widetilde{C}_i } > 2 p \cdot w t_u \cdot n_i$, then Bob simply aborts, and outputs FAIL. Otherwise, forward $X^{(>i)}, C_{<i}$ and $\widetilde{C}_i$ along with $i$ (which is evident in the message already) to Alice. Observe that the message only depends on $U, i$ and Bob's randomness, no dependence on $q$.

        \item Alice does the following. 
        \begin{itemize}
        \item Using $\vec{S}$, and $X^{(>i)}$, the first $\ell - i$ epochs of updates, Alice constructs $U_{>i}$ from $X^{(>i)}$.   
        \item Choose distinct probe times uniformly at random $r_1 < \ldots < r_{\tilde{t}_q} \in [t_{tot}]$ where $\tilde{t}_q \leq \frac{ 100 \cdot t_{tot}}{\ell}$ from ${ t_{tot} \choose \leq \frac{ 100 \cdot t_{tot}}{\ell} }$ possible choices.
        \item Alice simulates the query algorithm for $q \in \cQ$ under $\vec{S}$, generating the query transcript 
        \begin{equation*}
            A_q := ( ( a_1, c_1 ), \ldots , ( a_{t_{tot}} , c_{t_{tot}} ), z_{out}) 
        \end{equation*}
        where $a_i$ refers to the address and $c_i$ refers to its content, $z_{out}$ refers to the final output of the query algorithm, with the following caveat. 
        For the probe time $r_j$, check if its address $a_{r_j}$ is in $\widetilde{C}_i$. If the address does not exist in $\widetilde{C}_i$, abort the simulation. Signal FAIL to Bob. Then Bob also signals FAIL. Otherwise, proceed. 
        \item If successful, send the query transcript $A_q$ to Bob, along with the indices $r_1 < \ldots < r_{\tilde{t}_q}$.
        \end{itemize}
        \item Then Bob simply verifies $A_q$. That is, check if $(a_\tau, c_\tau)$ matches with $U$ for all $\tau \in [t_{tot}]$, along with the indices $r_1 < \ldots < r_{\tilde{t}_q}$.\footnotemark~Check if $(a_{r_j}, c_{r_j})$ for all $j \in [\tilde{t}_q]$ indeed exists in $C_{i}$, and for $\tau \in [ t_{tot} ] \backslash \{ r_1, \ldots , r_{\tilde{t}_q} \}$ the cell address $a_{\tau}$ does not exist in $C_i$. If yes, then declare $z_{out}$ as the answer $B_q$. Otherwise, output FAIL.
        \item If the protocol signals FAIL, Bob outputs an independent random guess as $B_q$.
    \end{itemize}

    \caption{Simulation Protocol $\Pi$ \label{prot:simulation}}
    \end{Protocol}
    \footnotetext{Note that if the cell $a_{\tau}$ is never updated by $U$, Alice's content $c_\tau$ must be correct, as Alice holds $\vec{S}$.}

    The cost guarantee of the protocol is straightforward. Merlin's message is simply the update with an index $i \in [2 \ell / 3, \ell]$. Therefore $|U| \leq t_u w n + \log \ell = O ( t_u w n )$. We bound the length of Bob's 0.5-round message.
    \begin{align}
        \abs{\Phi (U)} & = \abs{ X^{(>i)} } + \abs{C_{<i}} + \abs{\widetilde{C}_i} \leq \sum_{k > i} n_k + \sum_{k < i} (w t_u) \cdot n_k +  2 p \cdot w t_u \cdot n_i \nonumber \\
        & \leq \left( n -  n_i \right) + 2 p \cdot w t_u \cdot n_i  + o (n_i ) \label{eq:0.5_bound}
    \end{align}
    where the upper bound follows the guarantee of the epoch size with $n_i := \beta^i = \left( ( t_u w )^{\Theta(1)} \right)^i$. 
    $\sum_{k < i} (w t_u) \cdot n_k \leq o(n_i)$ as it is a geometric series. As Merlin chooses $i \in [2\ell/3, \ell]$, $n_i = \beta^i \geq \beta^{2 \ell / 3} \geq 2^{\log n / 2} = \sqrt{n}$. Therefore, assuming $1 - 2 p \cdot w t_u - o(1) > 0$, as we will choose the parameters accordingly, \pref{eq:0.5_bound} becomes
    \begin{equation} \label{eq:0.5_final_bound}
        \pref{eq:0.5_bound} \leq n - ( 1 - 2 p \cdot w t_u - o(1) ) n_i \leq n - ( 1 - 2 p \cdot w t_u - o(1) ) \sqrt{n}. 
    \end{equation}
    Next, we bound the length of Alice's message,
    \begin{equation} \label{eq:Alice_final_bound}
        \abs{A_q} = 2 w \cdot t_{tot} + 1 +  \tilde{t}_q \log ( t_{tot} ) \leq 3 w \cdot t_{tot}.
    \end{equation}
    \pref{eq:0.5_final_bound} and \pref{eq:Alice_final_bound} give the desired bound on the cost of $\Pi$. Now we proceed to showing the lower bound of $\adv ( G_{f}, \Pi )$ by considering the probability of Alice outputting the correct transcript.
    \begin{claim} \label{cl:advantage_lower_bound}
        \begin{equation*}
        \adv ( G_{f}, \Pi ) \geq \Omega \left(  \left( \exp \left( - \frac{200 \cdot t_{tot}}{\ell} \cdot \ln ( t_{tot} ) - \frac{ 100 \cdot t_{tot}}{\ell} \cdot \ln \frac{1}{p}  \right) \right) - \exp \left( - \Omega ( p n ) \right) \right)
        \end{equation*}
    \end{claim}
    \begin{proof}
        Recall that we have fixed $\vec{S}$. Merlin chose $i \in [2\ell / 3, \ell]$ such that
        \begin{equation*}
            \E_{\cQ | \vec{S}, X} \left[ t_q [i] \right] \leq 10 \cdot t_{tot} / \ell.  
        \end{equation*}
        If the chosen query $q \in \cQ$ has $t_q [i]  > \frac{100 t_{tot}}{\ell}$, then our simulation will always output FAIL. As Alice's guess $\tilde{t}_q < t_q [i]$, there must exist a probe where Alice should be probing $C_i$, but Alice is probing otherwise. The transcript probes into $C_{-i}$ (in fact into $C_{> i}$, as then such cell cannot exist in $C_{<i}$) instead. Bob will catch such $A_q$ as he has a complete knowledge of $C_1, \ldots , C_i , \ldots , C_{\ell}$. Therefore, Bob will always output FAIL, yielding zero advantage for this set of queries. But due to Markov's argument, there are at most a $1/10$ fraction of such queries in $\cQ$. 

        For the rest of the queries, it is guaranteed that $t_q [i] \leq \frac{100 t_{tot}}{\ell}$. Note that a non-trivial guess would only be given when Alice guesses $r_1 < \ldots < r_{\tilde{t}_q}$'s correctly, and Bob actually successfully sampled these cells in $\widetilde{C}_i$ as well. Otherwise, Bob will reject Alice's guess, and output FAIL, leading to zero advantage.

        Therefore, to analyze the advantage, we simply need to lower bound the probability of Alice transmitting the correct query transcript, in which case Bob outputs the correct guess.
        
        The probability that Alice guesses $r_1 < \ldots < r_{\tilde{t}_q}$'s correctly is lower bounded by the term 
        \begin{equation} \label{eq:alice_location_guess}
            { t_{tot} \choose \leq \frac{ 100 \cdot t_{tot}}{\ell} }^{-1} \geq \left( (t_{tot}+1)^{-\left( \frac{ 100 \cdot t_{tot}}{\ell} \right)}  \right) \geq \exp \left( - \frac{200 t_{tot}}{\ell} \cdot \ln ( t_{tot} ) \right).
        \end{equation}
        where the first inequality follows from a standard sums of binomial coefficients bound, (i.e., ${ n \choose \leq r } \leq (n+1)^r$).

        Given that Alice guessed $r_1 < \ldots < r_{\tilde{t}_q}$'s correctly (observe that otherwise, Bob would output FAIL), Alice will then transmit the correct transcript if and only if all the required cells in $C_i$ are included in Bob's message $\widetilde{C}_i$. For any other probes, Alice, due to $\Phi (U)$, can simulate the queries that are from $C_1, \ldots , C_{i-1}, C_{i+1}, \ldots ,C_{\ell}$. The only part of her query that she cannot simulate is those from $C_i$, unless they are included in $\widetilde{C}_i$. 
        
        The probability of $a_{r_1}, \ldots, a_{r_{\tilde{t}_q}}$ (as they are conditioned to be in $C_i$) being included in $\widetilde{C}_i$ then depends on Bob's private randomness. This is independent of Alice's private randomness. The probability is lower bounded by
        \begin{equation*}
            p^{\left( \frac{ 100 \cdot t_{tot}}{\ell} \right)} - \exp \left( - \Omega (  n \cdot D ( \Ber( 2 p ) || \Ber ( p ) ) ) \right) \geq \exp  \left( \frac{ 100 \cdot t_{tot}}{\ell} \cdot \ln p  \right) - \exp ( - \Omega ( p n ) ) 
        \end{equation*}
        where a standard estimate $D ( \Ber( 2 p ) || \Ber ( p ) ) = \Theta ( p )$ holds. 

        Multiplying the two probabilities, we get the probability of Alice correctly guessing and Bob sampling ``correct'' cells in $\widetilde{C}_i$ as
        \begin{equation}
            \adv ( G_{f}, \Pi ) \geq \Omega \left( \exp \left( - \frac{200 t_{tot}}{\ell} \cdot \ln ( t_{tot} ) - \frac{ 100 \cdot t_{tot}}{\ell} \cdot \ln \frac{1}{p}  \right) \right) - \exp \left( - \Omega ( p n ) \right)  \label{eq:advantage_simplified}
        \end{equation}
        which then completes the proof of the claim.
    \end{proof}

    \pref{eq:0.5_final_bound}, \pref{eq:Alice_final_bound} and \pref{cl:advantage_lower_bound} complete the proof of the simulation theorem.

\end{proof}

Now we would like to select a range of parameters to derive a contradiction.

\begin{corollary}
    \label{cor:choice_parameters}
    Set $t_{tot} = o \left( \left( \frac{\log n}{ \log \log n} \right)^2 \right)$, $t_u = \log^\kappa (n)$ for some constant $\kappa > 3$, $\beta = (w \cdot t_u )^{\Theta(1)}$, $ p:= \beta^{-1} / 2 $. Then 
    \begin{align*}
        \abs{U} \leq O ( n \log^{\kappa+1} (n) ) ,~\abs{\Phi (U)} \leq n - \frac{\sqrt{n}}{2} ,~\abs{A_q} \leq o ( \log^3 (n) ) 
    \end{align*}
    while having
    \begin{equation*}
        \adv ( G_{f}, \Pi ) \geq n^{-o(1)}. 
    \end{equation*}
\end{corollary}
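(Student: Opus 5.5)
The plan is to instantiate the New Simulation Theorem with the stated parameters and check each of the four conclusions. We use $w = \Theta(\log n)$, $t_u = \log^\kappa n$ and $\beta = (w t_u)^{\Theta(1)}$, where we take the $\Theta(1)$ exponent to be at least $1$. Then $\log \beta = \Theta(\log\log n)$, and therefore $\ell = \log n / \log \beta = \Theta(\log n/\log\log n)$.

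\textbf{The communication bounds.} These are direct substitutions.
\begin{itemize}
\item Merlin's message satisfies $\abs{U} \le w t_u n + \log \ell = O(n \log^{\kappa+1} n)$.
\item For Alice's message, $\abs{A_q} \le O(t_{tot} w) = o((\log n/\log\log n)^2)\cdot O(\log n) = o(\log^3 n)$.
\item For Bob's $0.5$-round message, $p = \beta^{-1}/2 \le \frac{1}{2 w t_u}$. This gives $p \le \frac{1}{4 w t_u}$ once the exponent in $\beta$ is at least $1$ (and $w t_u \ge 2$), so the theorem's hypothesis $p \le 1/(4 t_u w)$ and $p<1/3$ hold. It also gives $2 p w t_u \le 1/2$, hence $1 - 2p w t_u - o(1) \ge 1/2 - o(1)$. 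To reach the exact constant $1/2$, I would either tighten $p$ slightly (e.g., take the exponent in $\beta$ strictly larger than $1$, so that $2pwt_u = o(1)$) or use the slack $n_i \ge \sqrt n$ with room to spare. In either case $\abs{\Phi(U)} \le n - \sqrt n/2$.
\end{itemize}

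\textbf{The advantage.} We bound both exponent terms by $o(\log n)$. Write $t_{tot}/\ell = o(\log n/\log\log n)$.
\begin{itemize}
\item For the first term, $\ln t_{tot} = O(\log\log n)$, so $\frac{200 t_{tot}}{\ell}\ln t_{tot} = o(\log n)$.
\item For the second term, $\ln(1/p) = \ln(2\beta) = \Theta(\log\log n)$ because $\beta$ is polylogarithmic (this uses $\kappa$ constant). So $\frac{100 t_{tot}}{\ell}\ln\frac1p = o(\log n)$.
\end{itemize}
Together these make the main term $\exp(-o(\log n)) = n^{-o(1)}$. The subtracted term is $\exp(-\Omega(pn)) = \exp(-\Omega(n/\mathrm{polylog}\, n))$, which is far smaller than $n^{-o(1)}$. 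Hence $\adv(G_f,\Pi) \ge \Omega(n^{-o(1)}) - \exp(-\Omega(pn)) = n^{-o(1)}$.

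\textbf{Main obstacle.} The only delicate point is bookkeeping of constants. The $\Theta(1)$ exponent in $\beta$ must be chosen large enough that $p \le 1/(4 w t_u)$ and $2 p w t_u$ is bounded away from $1$. It must also be large enough that the geometric-series estimate $\sum_{k<i} w t_u n_k = o(n_i)$ used in the simulation theorem holds, which requires $\beta \gg w t_u$. At the same time, the exponent must stay constant, so that $\log\beta = O(\log\log n)$ and $\ell$ is still $\Theta(\log n/\log\log n)$. Choosing, say, $\beta = (w t_u)^2$ satisfies all of these simultaneously, and I would fix that choice explicitly.
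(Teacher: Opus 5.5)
Your proposal is correct and follows the paper's proof, which is just a direct substitution of these parameters into the New Simulation Theorem. Your explicit choice $\beta = (wt_u)^2$ also makes precise what the paper leaves implicit in $\beta = (wt_u)^{\Theta(1)}$: that $2pwt_u = o(1)$ (needed for the paper's $(1-o(1))\sqrt{n}$), that $p \le 1/(4t_uw)$, and that $\sum_{k<i} wt_u n_k = o(n_i)$. One small slip: with exponent exactly $1$ you get $p = 1/(2wt_u)$, which violates $p \le 1/(4wt_u)$, so ``at least $1$'' should read ``at least $2$'' (equivalently $\beta \ge 2wt_u$). Your final choice of $\beta$ already handles this.
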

\begin{proof}
    If we choose $\ell := \log_{\beta} (n) = \frac{\log n}{\log \beta}$, $\beta = (t_u w)^{\Theta(1)}$, and $t_{tot} = o \left( \left( \frac{\log n}{ \log \log n} \right)^2 \right)$, and with the sampling probability $p := \beta^{-1} / 2 $, as $t_u$ will be in the poly-logarithmic regime,
    \begin{align*}
        & \abs{U} \leq O ( w n t_u ) \leq O ( n \log^{\kappa+1} (n) ) \\
        & \abs{\Phi (U)} \leq n - ( 1 - o(1) ) \sqrt{n} \leq n - \frac{\sqrt{n}}{2} \\
        & \abs{A_q} \leq O ( t_{tot} \cdot w ) \leq o ( \log^3 n ),
    \end{align*}
    while the advantage \pref{eq:advantage_simplified} becomes
    \begin{equation*}
        \pref{eq:advantage_simplified} \geq \Omega \left( \exp \left( - o ( \log n ) \right) \right) - \exp \left( - \Omega ( n / \poly \log (n) ) \right) \geq n^{-o(1)}. 
    \end{equation*}
\end{proof}

\subsection{2.5-round Multiphase Communication Game Lower Bound} \label{sec:comm_lb}

The goal of this section is to show that such an advantage is impossible, using the framework of \cite{ko_adaptive_2020, ko_lower_2025} and finally \cite{ko_unifying_2025}. The key idea, developed across \cite{ko_adaptive_2020, ko_lower_2025, ko_unifying_2025}, is to extract a random variable $Z$ containing the output $Z_{out}$ from the 2.5-round protocol such that (a) the min-entropy of $S_Q$ and $X$ conditioned on $Z$ is large; (b) while maintaining $I( S_Q ; X | Z )$ to be small. Then it must be the case that $\E_{(S_Q,X)|Z} \left[ Z_{out} \cdot f(S_Q,X) \right]$ is small. We follow the analogous argument, by selecting the $Z$ as the following
\begin{equation*}
    Z := \Phi (U), A_{\cQ_{all}}, S_{< Q}, B_{Q}, Q
\end{equation*}
where $Q$ is chosen uniformly at random from $\cQ = [m]$ with $m = n^{1 + \Omega(1)}$. $A_{\cQ_{all}}$ refers to $A_q$ for all $q \in \cQ$, and $S_{<Q}$ refers to $S_1, \ldots, S_{Q-1}$. Note that $B_{Q}$ essentially acts as the final output of the random variable. We need to include $A_q$ for all $q \in \cQ$ to apply direct-sum techniques.

First, we start by showing that \pref{cor:choice_parameters} implies the following claim on the advantage conditioned on $Z$.
\begin{claim} \label{cl:advantage}
    \begin{equation*}
    \E_{Z} \left[ \abs{ \E_{(S_Q,X)|Z} \left[ \chi_{S_Q} (X) \right] } \right] \geq n^{-o(1)}
    \end{equation*}
    where $\chi_{S_Q} (X) := (-1)^{\ip{S_Q,X}}$.
\end{claim}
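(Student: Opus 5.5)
We derive the claim from \pref{cor:choice_parameters} by refining the conditioning from $(Q,\Phi(U),A_Q,B_Q)$ to the richer variable $Z$, and then using the triangle inequality to remove the dependence on $B_Q$. With $f(S_Q,X)=\chi_{S_Q}(X)$, the corollary gives
\begin{equation*}
n^{-o(1)} \leq \adv(G_f,\Pi) = \E\left[ B_Q \cdot \chi_{S_Q}(X) \right],
\end{equation*}
where the outer expectation is over the full joint distribution. There is one point to settle first. The corollary is stated for a fixed $\vec{S}$, whereas here $\vec{S}$ is uniform. Since the bound holds for every fixed $\vec{S}$, and Merlin's choice of $i$ may depend on $\vec{S}$ and $X$, averaging over $\vec{S}$ preserves it. Hence $\E[B_Q \chi_{S_Q}(X)] \geq n^{-o(1)}$ under the joint distribution with uniform $\vec{S}$, uniform $X$ and uniform $Q$.

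Next apply the tower rule with the finer conditioning. The tuple $(Q,\Phi(U),A_Q,B_Q)$ is a function of $Z = (\Phi(U), A_{\cQ_{all}}, S_{<Q}, B_Q, Q)$, because $A_Q$ is one coordinate of $A_{\cQ_{all}}$ once $Q$ is known. Therefore
\begin{equation*}
\E\left[B_Q \chi_{S_Q}(X)\right] = \E_Z\left[ \E_{(S_Q,X)|Z}\left[ B_Q \chi_{S_Q}(X) \right] \right].
\end{equation*}
Once $Z$ is fixed, $B_Q$ is fixed and lies in $\{\pm 1\}$. It therefore factors out of the inner expectation:
\begin{equation*}
\E_{(S_Q,X)|Z}\left[ B_Q \chi_{S_Q}(X) \right] = B_Q \cdot \E_{(S_Q,X)|Z}\left[\chi_{S_Q}(X)\right].
\end{equation*}
This quantity is at most $\abs{\E_{(S_Q,X)|Z}[\chi_{S_Q}(X)]}$. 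Taking the expectation over $Z$ yields the claim.

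One subtlety needs care. Bob's FAIL branch uses an independent random guess, so $B_Q$ is not a deterministic function of the protocol transcript. This causes no trouble: $B_Q$ appears explicitly in $Z$, so conditioning on $Z$ fixes it regardless of where it came from. The only place where the independence of the random guess matters is already handled inside \pref{cor:choice_parameters}, where FAIL branches are counted as contributing zero advantage.

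The main obstacle is making sure that adding $A_q$ for $q \neq Q$ and $S_{<Q}$ to the conditioning causes no loss. The tower-rule step above shows it does not: passing to a finer $\sigma$-algebra leaves the expectation of $B_Q\chi_{S_Q}(X)$ unchanged, and the absolute value can only increase the bound. So the claim needs no properties of the extra variables beyond their being jointly distributed with everything else. The difficult work lies in the later steps of the framework, which must show that $S_Q$ and $X$ retain large min-entropy conditioned on $Z$ and that $I(S_Q;X|Z)$ is small.
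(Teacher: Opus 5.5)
Your proposal is correct and follows essentially the same route as the paper: average the corollary's fixed-$\vec{S}$ guarantee over uniform $\vec{S}$, apply the tower rule, pull out $B_Q \in \{\pm 1\}$ (it is determined by the conditioning), and bound by the absolute value. The only cosmetic difference is that you condition on the full $Z$ in one step, whereas the paper first conditions on $(\Phi(U), A_Q, B_Q, Q, S_{<Q})$ and then uses Jensen's inequality for $\abs{\cdot}$ to add $A_{\cQ_{all}}$; the two steps are equivalent.
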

\begin{proof}
    As $Q$ is chosen uniformly at random, \pref{cor:choice_parameters} guarantees that for any fixed $\vec{S}$, 
    \begin{equation*}
        \E_{\Phi (U), A_Q, B_Q, Q | \vec{S}} \left[ B_Q \cdot \chi_{S_Q} (X) \right] \geq n^{-o(1)}
    \end{equation*}
    This then implies
    \begin{align*} 
        n^{-o(1)} & \leq  \E_{\Phi (U), A_Q, B_Q, Q, \vec{S}} \left[ \E_{X | \Phi (U), A_Q, B_Q, Q, \vec{S} } \left[ B_Q \cdot \chi_{S_Q} (X) \right] \right]  \\
        & = \E_{\Phi (U), A_Q, B_Q, Q, S_{<Q}, S_Q} \left[ \E_{ X  | \Phi (U), A_Q, B_Q, Q, S_{<Q}, S_Q } \left[ B_Q \cdot \chi_{S_Q} (X) \right] \right] \\
        & = \E_{\Phi (U), A_Q, B_Q, Q, S_{<Q} } \left[ \E_{ (S_Q, X)  | \Phi (U), A_Q, B_Q, Q, S_{<Q} } \left[ B_Q \cdot \chi_{S_Q} (X) \right] \right] \\
        & \leq \E_{\Phi (U), A_Q, B_Q, Q, S_{<Q}} \left[ \abs{ \E_{ (S_Q,X) |\Phi (U), A_Q, B_Q, Q, S_{<Q}  } \left[ \chi_{S_Q} (X) \right] } \right] 
    \end{align*}
    But then 
    \begin{align*}
        n^{-o(1)} &\leq \E_{\Phi (U), A_Q, B_Q, Q, S_{<Q}} \left[ \abs{ \E_{ (S_Q,X) |\Phi (U), A_Q, B_Q, Q, S_{<Q}  } \left[ \chi_{S_Q} (X) \right] } \right] \\
        & \leq \E_{\Phi (U), A_{\cQ_{all}}, B_Q, Q, S_{<Q}} \left[ \abs{ \E_{ (S_Q,X) |\Phi (U), A_{\cQ_{all}}, B_Q, Q, S_{<Q}  } \left[ \chi_{S_Q} (X) \right] } \right] \\
        & = \E_{Z} \left[ \abs{ \E_{(S_Q,X)|Z} \left[ \chi_{S_Q} (X) \right] } \right]
    \end{align*}
    where the inequality follows from Jensen's inequality on the convex function, $\abs{~\cdot~}$.
 
\end{proof} 

We now show that \pref{cl:advantage} leads to a contradiction.

\subsubsection{Small Information on $S_Q$}

First, we would like to show that the min-entropy of $S_Q$ conditioned on $Z$ is large. The statement is not true in general. However, a simple Markov argument allows us to ``extract'' a good event $E_Q$. We start with the following standard claim on the mutual information.

\begin{claim} \label{cl:mutual_info_s_I}
    \begin{equation*}
        I ( S_Q ; \Phi (U), A_{\cQ_{all}}, S_{< Q}, Q ) \leq 4 w \cdot t_{tot}  
    \end{equation*}
\end{claim}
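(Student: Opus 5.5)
The plan is a standard direct-sum argument over the uniformly random query index $Q$, followed by a plain entropy (length) bound on the messages. Throughout, $\vec{S}=(S_1,\ldots,S_m)$ is uniform, so the $S_q$ are mutually independent and each is independent of $Q$. Hence $I(S_Q;Q)=0$, and by the chain rule (\pref{fact:chainrule})
\begin{equation*}
I ( S_Q ; \Phi (U), A_{\cQ_{all}}, S_{< Q}, Q ) = I ( S_Q ; \Phi (U), A_{\cQ_{all}}, S_{< Q} \mid Q ) = \frac{1}{m} \sum_{q=1}^{m} I ( S_q ; \Phi (U), A_{\cQ_{all}}, S_{< q} ).
\end{equation*}
Here I use that, once we condition on $Q=q$, the joint law of $(\vec{S}, X, \Phi(U), A_{\cQ_{all}})$ is unchanged. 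This holds because $Q$ is sampled independently of everything else, and the messages $\Phi(U)$ and $A_{\cQ_{all}}$ are defined for all queries at once, without reference to the realized $Q$.

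Next, for each fixed $q$, I expand the chain rule again. Since $S_q$ is independent of $S_{<q}$, we have
\[
I(S_q ; \Phi(U), A_{\cQ_{all}}, S_{<q}) = I(S_q ; \Phi(U), A_{\cQ_{all}} \mid S_{<q}).
\]
Summing over $q$ and applying the chain rule in reverse gives
\begin{equation*}
\sum_{q=1}^m I ( S_q ; \Phi (U), A_{\cQ_{all}} \mid S_{<q} ) = I ( \vec{S} ; \Phi (U), A_{\cQ_{all}} ) \leq H ( \Phi (U), A_{\cQ_{all}} ) \leq \abs{\Phi(U)} + \sum_{q \in \cQ} \abs{A_q}.
\end{equation*}
For the last step I bound entropy by message length. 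To be safe, I would take the messages to be padded to fixed length, or prefix-free at a cost of a constant factor. This accounts for all the randomness involved: Merlin's choice of $i$, Bob's sampling, and Alice's private choice of $r_1<\cdots<r_{\tilde t_q}$. All of it only influences which strings are sent, and the entropy of the strings is still bounded by their length.

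Finally, I plug in the cost bounds of the New Simulation Theorem, namely \pref{eq:0.5_final_bound} and \pref{eq:Alice_final_bound}: $\abs{\Phi(U)}\le n$ and $\abs{A_q}\le 3w\cdot t_{tot}$. After dividing by $m$, this yields
\[
\frac{n}{m} + 3 w\cdot t_{tot}.
\]
Since $m=n^{1+\Omega(1)}$, we get $n/m = n^{-\Omega(1)}\le w\cdot t_{tot}$, and the total is at most $4w\cdot t_{tot}$.

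The only step that needs care is the first one: the conditioning on $Q$ and the justification that $A_{\cQ_{all}}$ may be treated as a single random variable independent of $Q$. This requires Alice's private randomness for different $q$ to be fixed jointly, independently of $Q$, which I would simply stipulate. Note also that the FAIL signal is part of $A_q$, which is why the bound $3w\cdot t_{tot}$ already includes the extra $+1$ bit.
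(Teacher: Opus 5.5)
Your proposal is correct and follows the paper's proof essentially verbatim. Both use the direct-sum step over the independent uniform $Q$ and the i.i.d.\ $S_q$, apply the chain rule to collapse the average to $\frac{1}{m} I(\vec{S};\Phi(U),A_{\cQ_{all}})$, and finish with the length bound $\abs{\Phi(U)}/m + \abs{A_q} \leq o(1) + 3w\cdot t_{tot}$. Your extra care about entropy versus variable-length messages is a harmless refinement the paper skips, but only your fixed-length padding option (at most one extra bit per message) preserves the constant $4$; a constant-factor prefix-free encoding would not.
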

\begin{proof}
    Due to a standard direct sum trick, as $Q$ is chosen independently at random, $S_1, \ldots, S_m$ are i.i.d, we know
    \begin{align*}
         I ( S_Q ; \Phi (U), A_{\cQ_{all}}, S_{< Q}, Q ) & = \underbrace{I ( S_Q ;  S_{< Q}, Q )}_{=0} + I ( S_Q ; \Phi (U), A_{\cQ_{all}} | S_{< Q}, Q ) \\
        & = \E_{Q} \left[ I ( S_Q ; \Phi (U), A_{\cQ_{all}} | S_{< Q}) \right ]
    \end{align*}
    Then we can bound $\E_{Q} \left[ I ( S_Q ; \Phi (U), A_{\cQ_{all}} | S_{< Q}) \right ]$ as 
    \begin{align*}
         \E_{Q} \left[ I ( S_Q ; \Phi (U), A_{\cQ_{all}} | S_{< Q}) \right ] & = \frac{1}{m} \cdot I ( \vec{S} ; \Phi (U), A_{\cQ_{all}} ) \leq \underbrace{\frac{\abs{\Phi (U)}}{m}}_{\leq \frac{n}{m} = n^{-\Omega(1)} = o(1)} + \frac{m \abs{A_{q}} }{m} \\
        & = o(1) + 3 w \cdot t_{tot} \leq 4 w \cdot t_{tot}  
    \end{align*}
    where the first inequality follows from $I ( \vec{S} ; \Phi (U), A_{\cQ_{all}} ) \leq H ( \Phi (U), A_{\cQ_{all}} ) \leq \abs{\Phi (U)} + m \abs{A_{q}}$, completing the proof of the claim.
\end{proof}

Unfortunately, we cannot simply use the mutual information, and the related KL divergence for our proof. Since we are considering ``small'' advantage regime, we need to argue that the min-entropy of $S_Q$ conditioned on $Z, B_Q$ is large as in \cite{ko_lower_2025}. We will use the following lemma to extract ``good'' event in terms of min-entropy.

\begin{lemma}[$S_Q$ min-entropy]  
\label{lem:min_entropy_s_i}
    For every setting of the parameter $\alpha \in (0, \frac{1}{10} )$, there exist events $E_Q^1$ and $E_Q^2$ such that 
    \begin{align*}
        H ( E_Q^1 | \Phi (U), A_{\cQ_{all}}, S_{< Q}, Q ) = H ( E_Q^2 | \Phi (U), A_{\cQ_{all}}, S_{< Q}, S_Q, Q ) = 0
    \end{align*}
    and
    \begin{align*}
        & \Pr \left[ E_Q^1 \right] \geq 1 - \alpha,~~\Pr \left[ E_Q^2 \mid \Phi (U), A_{\cQ_{all}}, S_{< Q}, Q, E_Q^1 = 1 \right] \geq 1 - \alpha \\
        & H_\infty ( S_Q | \Phi (U), A_{\cQ_{all}}, S_{< Q}, Q, E_Q = 1 ) \geq n - \frac{8 w \cdot t_{tot}}{\alpha^2}
    \end{align*}
    where we denote $E_Q^1 = 1, E_Q^2 = 1$ as $E_Q = 1$ for short. 
\end{lemma}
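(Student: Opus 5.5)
Write $W := (\Phi(U), A_{\cQ_{all}}, S_{<Q}, Q)$. The plan is to pass from the mutual-information bound of \pref{cl:mutual_info_s_I} to a KL-divergence bound that holds for most conditionings $W = w$, and then apply \pref{cl:lowL2} pointwise. The two Markov steps will produce $E_Q^1$, a function of $W$, and $E_Q^2$, a function of $(W, S_Q)$, exactly as the statement requires.

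\textbf{Step 1.} Since $S_Q$ is uniform on $\{0,1\}^n$ and independent of $Q$, \pref{fact:KL_Mutual} gives
$$\E_{w \sim W}\big[ \KL{S_Q|_{W=w}}{\cU} \big] = I(S_Q ; W) \leq 4 w \cdot t_{tot},$$
where $\cU$ is uniform on $\{0,1\}^n$ and the inequality is \pref{cl:mutual_info_s_I}. By Markov's inequality, the set of $w$ with $\KL{S_Q|_{W=w}}{\cU} \leq 4 w t_{tot}/\alpha$ has probability at least $1-\alpha$. We define $E_Q^1$ to be the indicator of this set. It is a deterministic function of $W$, so $H(E_Q^1 \mid W) = 0$ and $\Pr[E_Q^1] \geq 1-\alpha$.

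\textbf{Step 2.} Fix $w$ with $E_Q^1(w)=1$. We apply \pref{cl:lowL2} with $\cD = S_Q|_{W=w}$, $\cS = \{0,1\}^n$, and $t = 4wt_{tot}/\alpha$; the hypothesis $t>1$ holds trivially, and if needed we replace the bound by $\max(t,2)$. Taking the Markov parameter in that claim to be $1/\alpha$, which exceeds $2$ because $\alpha < 1/10$, yields an event $E_w \subseteq \{0,1\}^n$ with $\cD(E_w) \geq 1-\alpha$ and
$$H_\infty(\cD|_{E_w}) \geq n - 2\cdot\frac{1}{\alpha}\cdot\frac{4 w t_{tot}}{\alpha} = n - \frac{8 w t_{tot}}{\alpha^2},$$
which is exactly the constant in the statement. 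We set $E_Q^2 := \mathbbm{1}[S_Q \in E_W]$ when $E_Q^1 = 1$, and $E_Q^2 := 1$ otherwise. Then $E_Q^2$ is a function of $(W, S_Q)$, so $H(E_Q^2 \mid W, S_Q) = 0$. The conditional probability bound $\Pr[E_Q^2 \mid W, E_Q^1 = 1] \geq 1-\alpha$ holds pointwise in $w$.

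\textbf{Step 3.} The min-entropy bound holds for every fixed $w$ in the good set, because conditioning on $W=w$ and $E_Q=1$ gives exactly $\cD|_{E_w}$. This yields the stated bound, read pointwise (equivalently, as a worst case over $w$). It therefore also implies the average min-entropy version via the definition of $\widetilde{H}_\infty$.

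The main obstacle is bookkeeping rather than depth. We must verify that the KL bound is taken against the uniform distribution on all of $\{0,1\}^n$, which is valid only because $S_Q$ is a priori uniform and independent of $(S_{<Q}, Q)$. We must also check that the parameter ranges of \pref{cl:lowL2} ($\alpha' > 2$, $t > 1$) are met, and that the constants compose to $8/\alpha^2$. If \pref{cl:lowL2} instead requires a strict bound $D < t$, we enlarge $t$ slightly, which loses only a negligible amount.
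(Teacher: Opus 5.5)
Your proposal is correct and matches the paper's proof essentially step for step. You apply Markov's inequality to the KL-divergence form of \pref{cl:mutual_info_s_I} to define $E_Q^1$ as a function of $(\Phi(U), A_{\cQ_{all}}, S_{<Q}, Q)$, then invoke \pref{cl:lowL2} pointwise with $t = 4wt_{tot}/\alpha$ and parameter $1/\alpha$ to obtain $E_Q^2$ and the bound $n - 8wt_{tot}/\alpha^2$. Your handling of the strict inequality $D < t$ in \pref{cl:lowL2} is a detail the paper glosses over; do avoid reusing $w$ both for the word size and for a value of $W$.
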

\begin{proof}
    First we define the event $E_Q^1$, which is defined as a set of $\Phi (U), A_{\cQ_{all}}, S_{< Q}, Q$'s such that
    \begin{equation*}
        D ( S_Q |_{\Phi (U), A_{\cQ_{all}}, S_{< Q}, Q} || S_Q ) \leq \frac{4 w \cdot t_{tot}}{ \alpha }
    \end{equation*}
    Due to \pref{cl:mutual_info_s_I},
    \begin{equation*}
        I ( S_Q ; \Phi (U), A_{\cQ_{all}}, S_{< Q}, Q ) = \E_{\Phi (U), A_{\cQ_{all}}, S_{< Q}, Q} \left[ D ( S_Q |_{\Phi (U), A_{\cQ_{all}}, S_{< Q}, Q} || S_Q ) \right] \leq 4 w \cdot t_{tot}  
    \end{equation*}
    a simple Markov's inequality would imply $\Pr[ E_Q^1 ] \geq ( 1 - \alpha )$. Then if $E_Q^1$ is set to true, 
    we consider a setting of event $E_Q^2$ given by \pref{cl:lowL2}. As $E_Q^1 = 1$ guarantees 
    \begin{equation*}
        D ( S_Q |_{\Phi (U), A_{\cQ_{all}}, S_{< Q}, Q, E_Q^1 = 1} || S_Q ) = D ( S_Q |_{\Phi (U), A_{\cQ_{all}}, S_{< Q}, Q} || S_Q ) \leq \frac{4 w \cdot t_{tot}}{ \alpha }
    \end{equation*}
    where the equality holds as $E_Q^1$ is a deterministic function of $\Phi (U), A_{\cQ_{all}}, S_{< Q}, Q$,
    \pref{cl:lowL2} generates an event $E_Q^2$ such that
    \begin{align*}
        & \Pr[ E_Q^2 \mid \Phi (U), A_{\cQ_{all}}, S_{< Q}, Q, E_Q^1 = 1 ] \geq 1 - \alpha \\
        & H ( E_Q^2 | S_Q, \Phi (U), A_{\cQ_{all}}, S_{< Q}, Q, E_Q^1 = 1 ) = 0 \\
        & H_\infty ( S_Q | \Phi (U), A_{\cQ_{all}}, S_{< Q}, Q, E_Q^1 = 1, E_Q^2 = 1 ) \geq n - \frac{8 w \cdot t_{tot}}{\alpha^2}
    \end{align*}
    which completes the proof of the lemma.
\end{proof}

\subsubsection{Small Information on $X$}

Then we would like to show that $\widetilde{H}_\infty ( X | Z, E_Q = 1)$ is large. Here, we remark that we need to have a conditioning $E_Q = 1$ to deduce the contradiction.

\begin{lemma} \label{lem:min_entropy_x}
    \begin{equation*}
        \widetilde{H}_\infty ( X | Z, E_Q = 1) \geq n - |\Phi (U)| - 10 \alpha - 1
    \end{equation*}
\end{lemma}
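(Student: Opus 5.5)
The plan is to strip the conditioning variables away in stages, using the chain rule for average min-entropy (\pref{lem:dors}) together with the fact that conditioning can only lower it (\pref{cl:conditioning_infty}). Recall
\[
Z = \Phi(U), A_{\cQ_{all}}, S_{<Q}, B_Q, Q .
\]
The structural fact I will lean on is that, apart from $\Phi(U)$, every component of $Z$ is produced by someone who sees $X$ only through $\Phi(U)$. Bob's $0.5$-round message is the sole channel from $X$ to Alice. Hence, conditioned on $\Phi(U)$ and $\vec{S}$, the transcripts $A_{\cQ_{all}}$ are generated by Alice from $(\vec{S}, \Phi(U))$ and her private randomness, and so carry no further information about $X$. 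The variables $Q$ and $S_{<Q}$ are independent of $X$ outright.

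\textbf{Step 1: handle $E_Q$, $A_{\cQ_{all}}$, $S_{<Q}$ and $Q$ together.} First remove $B_Q$, a single bit, via \pref{lem:dors} with $\lambda=1$. This is the source of the $-1$ in the bound. It leaves $\widetilde{H}_\infty(X \mid \Phi(U), A_{\cQ_{all}}, S_{<Q}, Q, E_Q=1)$ to bound.

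Next, by \pref{cl:conditioning_infty}, I may add the conditioning on all of $\vec{S}$; this only lowers the quantity, so it suffices to bound the version with $\vec S$ included. Given $(\Phi(U), \vec{S})$, the variables $A_{\cQ_{all}}$ and $Q$ are independent of $X$, and $E_Q^1$ is a function of the conditioning variables. The event $E_Q^2$ depends in addition on $S_Q$, which is now fixed as well. So under this richer conditioning the event $E_Q$ is determined, and the remaining conditioning variables can be dropped because they are independent of $X$. The result is an average over $(\Phi(U), \vec{S})$ restricted to the event $E_Q=1$.

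\textbf{Step 2: account for restricting to $E_Q=1$.} That restriction reweights the outer average by at most a factor $1/\Pr[E_Q] \le 1/(1-\alpha)^2$. Since $\log \frac{1}{(1-\alpha)^2} \le 10\alpha$ for $\alpha<1/10$, this produces the $-10\alpha$ term.

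\textbf{Step 3: the core bound.} It remains to show
\[
\widetilde{H}_\infty(X \mid \Phi(U), \vec{S}) \ge n - |\Phi(U)| .
\]
This follows from \pref{lem:dors} applied with $B=\Phi(U)$ and $\lambda = |\Phi(U)|$, together with $H_\infty(X \mid \vec{S}) = n$, which holds because $X$ is uniform and independent of $\vec{S}$. The message $\Phi(U)$ can be a function of $(U, \vec{S}, X)$ plus randomness; this is fine, since \pref{lem:dors} only uses the number of values $\Phi(U)$ takes.

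\textbf{Main obstacle.} The delicate part is the order of operations in Steps 1 and 2. Because $E_Q^2$ depends on $S_Q$, the $\vec{S}$-conditioning must be added before $E_Q$ is dealt with. Otherwise, conditioning on $E_Q$ could correlate $X$ with $A_{\cQ_{all}}$ through $S_Q$. Introducing $\vec{S}$ first, invoking \pref{cl:conditioning_infty} in the right direction, and accepting the $\Pr[E_Q]^{-1}$ reweighting resolves this.
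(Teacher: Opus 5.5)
Your proposal is correct and follows the paper's proof essentially step for step. You peel off $B_Q$ with \pref{lem:dors}, enlarge the conditioning to all of $\vec S$ via \pref{cl:conditioning_infty} so that $E_Q$ becomes a function of the conditioning and $A_{\cQ_{all}}, Q$ drop out by conditional independence from $X$, pay $\log \Pr[E_Q=1] \ge -10\alpha$ for the restriction, and finish with \pref{lem:dors} on $\Phi(U)$ against $H_\infty(X \mid \vec S) = n$; the only difference is that you run this chain top-down from $Z$, while the paper builds it bottom-up from $H_\infty(X \mid \vec S, Q) = n$ and adds $B_Q$ last.
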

\begin{proof}
    Recall that as $X$ is chosen uniformly at random, regardless of chosen $Q$ and $\vec{S}$,
    \begin{align*}
        H_\infty ( X | \vec{S}, Q ) = n
    \end{align*}
    while \pref{lem:dors} implies that
    \begin{align*}
        \widetilde{H}_\infty ( X | \vec{S}, \Phi (U), Q ) \geq n - |\Phi (U)| .
    \end{align*} 
    But $\vec{S}, \Phi (U), Q$ and Alice's independent private randomness determine $A_{\cQ_{all}}$. That is $A_{\cQ_{all}}$ is determined by Alice's private randomness and  $\vec{S}, \Phi (U)$. Therefore,
    \begin{equation*}
        I ( X ; A_{\cQ_{all}} | \vec{S}, \Phi (U), Q ) = 0.
    \end{equation*}
    And $\vec{S}, A_{\cQ_{all}}, \Phi (U), Q$ contains $S_Q, \Phi (U), A_{\cQ_{all}}, S_{< Q}, Q$, which determines both $E_Q^1$ and $E_Q^2$. Therefore,
    \begin{align*}
        \widetilde{H}_\infty ( X | \vec{S}, A_{\cQ_{all}}, \Phi (U), E_Q, Q ) \geq n - |\Phi (U)| 
    \end{align*}
    Then by the definition of average min-entropy we have the following equality. 
    \begin{align*}
        2^{- \widetilde{H}_\infty ( X | \vec{S}, A_{\cQ_{all}}, \Phi (U), E_Q, Q) } & = \Pr [ E_Q = 1 ] \cdot 2^{- \widetilde{H}_\infty ( X | \vec{S}, A_{\cQ_{all}}, \Phi (U), E_Q = 1, Q) } \\
        & + \Pr [ E_Q = 0] \cdot 2^{- \widetilde{H}_\infty ( X | \vec{S}, A_{\cQ_{all}}, \Phi (U), E_Q = 0, Q) }
    \end{align*}
    which implies the following inequality
    \begin{equation}
        \widetilde{H}_\infty ( X | \vec{S}, A_{\cQ_{all}}, \Phi (U), E_Q = 1, Q ) \geq \log \left( \Pr [ E_Q = 1] \right)  +\widetilde{H}_\infty ( X | \vec{S}, A_{\cQ_{all}}, \Phi (U), E_Q, Q ) \label{eq:x_min_infinity}
    \end{equation}
    The left hand side term of \pref{eq:x_min_infinity} is upper bounded by 
    \begin{equation*}
        \widetilde{H}_\infty ( X | \vec{S}, A_{\cQ_{all}}, \Phi (U), E_Q = 1, Q ) \leq \widetilde{H}_\infty ( X | S_{<Q}, A_{\cQ_{all}}, \Phi (U), E_Q = 1, Q )
    \end{equation*}
    due to \pref{cl:conditioning_infty}. 

    \pref{lem:min_entropy_s_i} guarantees that $\Pr[ E_Q = 1 ] \geq ( 1 -  \alpha )^2 \geq 1 - 2\alpha$. Then the right hand side of \pref{eq:x_min_infinity} is lower bounded by
    \begin{equation*}
        \log \left( \Pr [ E_Q = 1] \right)  +\widetilde{H}_\infty ( X | \vec{S}, A_{\cQ_{all}}, \Phi (U), E_Q, Q ) \geq \log ( 1 - 2\alpha) + n - |\Phi (U)| \geq n - |\Phi (U)| - 10 \alpha.
    \end{equation*}
    where the last lower bound holds from $\alpha \in (0, 0.1)$, and $\log ( 1 - 2 \alpha) \geq -10 \alpha$ in this interval.

    This implies the final inequality of
    \begin{equation}\label{eq:x_intermediate}
        \widetilde{H}_\infty ( X | S_{<Q}, A_{\cQ_{all}}, \Phi (U), E_Q = 1, Q ) \geq n - |\Phi (U)| - 10 \alpha. 
    \end{equation}
    Again applying \pref{lem:dors} to \pref{eq:x_intermediate}, we get
    \begin{equation*}
        \widetilde{H}_\infty ( X | S_{<Q}, A_{\cQ_{all}}, \Phi (U), B_Q, E_Q = 1, Q ) \geq n - |\Phi (U)| - 10 \alpha - 1,
    \end{equation*}
    which completes the proof of the lemma.
\end{proof}

\subsubsection{Small Correlation between $S_Q$ and $X$}

Finally, we show that $S_Q$ and $X$ are weakly correlated after conditioning on $Z$ and $E_Q = 1$. We show the following lemma. 
\begin{lemma} \label{lem:correlation}
    \begin{equation*}
        I ( S_Q ; X | Z, E_Q = 1 ) \leq \frac{1}{(1-\alpha)^2} \frac{|U|}{m}
    \end{equation*}
\end{lemma}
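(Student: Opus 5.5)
We bound $I(S_Q ; X \mid Z, E_Q = 1)$ with the standard direct-sum argument of \cite{ko_adaptive_2020, ko_lower_2025}, where Merlin's message $U$ plays the role of the only channel correlating $\vec{S}$ with $X$. The conditioning event $E_Q=1$ is handled at the very end.

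First we drop the event. Since $E_Q$ is determined by $(\Phi(U), A_{\cQ_{all}}, S_{<Q}, S_Q, Q)$, mutual information conditioned on $E_Q=1$ can be compared to the unconditioned one. Concretely, we write
\begin{equation*}
I(S_Q; X \mid Z, E_Q) \geq \Pr[E_Q=1]\, I(S_Q;X\mid Z, E_Q=1).
\end{equation*}
Next we bound $I(S_Q; X \mid Z, E_Q)$ by $I(S_Q;X\mid Z)$ plus a small correction. The cleanest route is not to pass through $E_Q$ abstractly. Instead we note that, conditioned on the relevant variables, the pair $(S_Q,X)$ restricted to $E_Q=1$ has probability mass at least $(1-\alpha)^2$. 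Hence any expectation of a nonnegative divergence term over the restricted space is at most $(1-\alpha)^{-2}$ times the unrestricted expectation. This is where the factor $\frac{1}{(1-\alpha)^2}$ comes from: one factor $(1-\alpha)^{-1}$ for $E_Q^1$ and one for $E_Q^2$ given $E_Q^1$, using \pref{lem:min_entropy_s_i}.

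The core step is then to show $I(S_Q; X \mid Z) \leq |U|/m$, applied termwise inside the expectation so that the reweighting above is legitimate. Write $Z = (\Phi(U), A_{\cQ_{all}}, S_{<Q}, B_Q, Q)$. We proceed in three moves.

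\begin{enumerate}
\item Remove $B_Q$. Bob's output $B_Q$ is a function of $X$, $U$, $A_Q$ and Bob's randomness. Adding $U$ to the conditioning via \pref{fact:chainrule1} or \pref{fact:chainrule2} makes $S_Q$ and $X$ conditionally independent given $U, \Phi(U), A_{\cQ_{all}}, S_{<Q}, Q$. This is because $\vec{S}$ and $X$ are independent a priori, and everything else is generated by rectangular (Alice-side or Bob-side) randomness.
\item Bound by the cost of revealing $U$. We get
\begin{equation*}
I(S_Q;X\mid Z) \leq I(S_Q; X U \mid \Phi(U), A_{\cQ_{all}}, S_{<Q}, Q) \leq I(S_Q; U \mid \Phi(U), A_{\cQ_{all}}, S_{<Q}, Q) + I(S_Q; X \mid U, \ldots).
\end{equation*}
The last term vanishes by the rectangular structure.
\item Apply the direct-sum trick. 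Exactly as in \pref{cl:mutual_info_s_I}, averaging over uniform $Q$ and using the chain rule over $S_1,\ldots,S_m$ gives
\begin{equation*}
\E_Q\, I(S_Q; U \mid \Phi(U), A_{\cQ_{all}}, S_{<Q}) \leq \frac{1}{m} I(\vec S; U \mid \Phi(U), A_{\cQ_{all}}) \leq \frac{|U|}{m}.
\end{equation*}
The conditioning on $\Phi(U), A_{\cQ_{all}}$ needs care: the direct-sum chain rule requires these to be common conditioning, and since they are functions of $U$, $\vec{S}$ and private randomness, we may simply bound $I(\vec{S};U\mid\cdot)\leq H(U)\leq |U|$.
\end{enumerate}

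The main obstacle is that conditioning on $B_Q$, and on the event $E_Q$, can in principle create correlation between $S_Q$ and $X$. A naive use of \pref{fact:chainrule2} fails because $B_Q$ depends on both $X$ (via Bob) and $S_Q$ (via $A_Q$). I would resolve this by first conditioning on $U$ and Bob's randomness, so that $B_Q$ becomes a function of $A_Q$ and $X$. Then $I(S_Q; B_Q \mid X, U, \ldots)$ is zero given $A_{\cQ_{all}}$, so \pref{fact:chainrule2} applies. The cost of adding $U$ is exactly the $|U|/m$ term.

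For the event, I would write the conditional divergence $D\big((S_Q,X)|_{z} \,\|\, S_Q|_z\otimes X|_z\big)$ pointwise. Restricting to the set where $E_Q=1$ loses at most a $\Pr[E_Q=1]^{-1}\le (1-\alpha)^{-2}$ factor in the average of the nonnegative quantity $I(S_Q;X\mid z, u)$, which is upper bounding the restricted mutual information via \pref{fact:chainrule1}-type monotonicity.
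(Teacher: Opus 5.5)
There is a genuine gap. Your skeleton matches the paper's: adjoin $U$ to $X$, strip $B_Q$ with \pref{fact:chainrule2} because $B_Q$ is determined by $U$, $A_Q$ and Bob's coins, pay $(1-\alpha)^{-1}$ for each of $E_Q^1,E_Q^2$, then direct-sum over $q$. But steps 1--2 rest on a false claim. You assert that, given $U,\Phi(U),A_{\cQ_{all}},S_{<Q},Q$, the variables $S_Q$ and $X$ are conditionally independent ``by rectangular structure.'' You then use this to drop $I(S_Q;X\mid U,\ldots)$ from the split $I(S_Q;XU\mid\cdot)=I(S_Q;U\mid\cdot)+I(S_Q;X\mid U,\cdot)$. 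However, $U$ is Merlin's message, i.e.\ the cells written by the update algorithm, which is a joint function of $\vec{S}$ and $X$. Merlin is exactly the non-rectangular party of the game, and conditioning on a joint function of two independent variables correlates them in general. For instance, if the updates write $S_1\oplus X$ into memory (only $n$ bits), then given $U$ the variable $S_1$ determines $X$. Your last paragraph makes the same conflation when it proposes to \emph{condition} on $U$. The paper never conditions on $U$; it only places $U$ next to $X$ in the second argument. As written, your step 3 bounds only the $U$-part, and $I(S_Q;X\mid U,\Phi(U),A_{\cQ_{all}},S_{<Q},Q)$ is left unaccounted for.

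The missing idea is to keep $XU$ together and use the a priori independence of $\vec{S}$ and $X$ \emph{unconditionally}, chaining in the other order:
\begin{align*}
I(S_Q;XU\mid S_{<Q},A_{\cQ_{all}},\Phi(U),Q) &= \frac{1}{m}\, I(\vec{S};XU\mid A_{\cQ_{all}},\Phi(U)) \le \frac{1}{m}\, I(\vec{S};XU) \\
&= \frac{1}{m}\left( I(\vec{S};X)+I(\vec{S};U\mid X)\right) \le \frac{|U|}{m}.
\end{align*}
The inequality drops $A_{\cQ_{all}}$ and then $\Phi(U)$ via \pref{fact:chainrule2}, using $I(A_{\cQ_{all}};XU\mid\vec{S},\Phi(U))=0$ and $I(\Phi(U);\vec{S}\mid XU)=0$. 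If you instead bound your leftover term by a direct sum, you only get $n/m$, so your split yields at best $(|U|+n)/m$, not the stated bound.

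A smaller point concerns the events. Since $E_Q^2$ depends on $S_Q$, conditioning on it changes the reference marginal of $X$, so the restricted mutual information is not literally a sub-average of the same divergence terms. The correct justification is that $E_Q^2$ is a function of $(S_Q,Z)$, whence $I(S_Q;X\mid Z,E_Q^1=1)\ge I(S_Q;X\mid Z,E_Q^1=1,E_Q^2)\ge \Pr[E_Q^2=1\mid E_Q^1=1]\, I(S_Q;X\mid Z,E_Q=1)$. Also, because your $Z$ still contains $B_Q$, only the averaged, not the pointwise, $1-\alpha$ guarantee of \pref{lem:min_entropy_s_i} is available. That suffices, and the paper sidesteps it by stripping $B_Q$ before removing the events.
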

\begin{proof}
    We start by just plugging our $Z$ into $I(S_Q; X | Z, E_Q = 1)$.
    \begin{align*}
        I ( S_Q ; X | Z, E_Q = 1 ) & \leq  I ( S_Q ; X U | S_{<Q}, A_{\cQ_{all}}, \Phi (U), B_Q, E_Q = 1, Q ) \\
        & \leq I ( S_Q ; X U | S_{<Q}, A_{\cQ_{all}}, \Phi (U), E_Q = 1, Q)
    \end{align*}
    where the last inequality holds from \pref{fact:chainrule2} with
    \begin{equation*}
        I ( S_Q ; B_Q | X U, S_{<Q}, A_{\cQ_{all}}, \Phi (U), E_Q = 1, Q ) = 0,
    \end{equation*}
    as $U, A_{\cQ_{all}}$ and Bob's independent random variable fully determines $B_Q$.
    That is depending on $U$, either $A_Q$'s final output is picked as $B_Q$, or it is uniformly random $\{ \pm 1\}$.
    Then we can further upper bound the term of interest as 
    \begin{align*}
        & I ( S_Q ; X U | S_{<Q}, A_{\cQ_{all}}, \Phi (U), E_Q = 1, Q)  = I ( S_Q ; X U | S_{<Q}, A_{\cQ_{all}}, \Phi (U), E_Q^1 = 1, E_Q^2 = 1, Q) \\
        & \leq \frac{I ( S_Q ; X U | S_{<Q}, A_{\cQ_{all}}, \Phi (U), E_Q^1 = 1, E_Q^2, Q)}{ \Pr[ E^2_Q = 1 | S_{<Q}, A_{\cQ_{all}}, \Phi (U), E_Q^1 = 1, Q ] } \leq \frac{I ( S_Q; X U | S_{<Q}, A_{ \cQ_{all} }, \Phi (U), E^1_Q = 1, Q )}{ 1 - \alpha }
    \end{align*}
    where the last inequality holds from \pref{fact:chainrule2} with
    \begin{equation*}
        I ( E^2_Q ; X U |  S_Q, S_{<Q}, A_{ \cQ_{all} }, \Phi (U), E^1_Q = 1, Q ) \leq H ( E^2_Q | S_Q, S_{<Q}, A_{\cQ_{all}}, \Phi (U), E^1_Q = 1, Q ) = 0
    \end{equation*}
    from \pref{lem:min_entropy_s_i}. Now $E^1_Q$ is fully determined by $S_{<Q}, A_{\cQ_{all}}, \Phi (U), Q$, and is $1$ with probability at least $ 1 - \alpha$. Therefore 
    \begin{equation*}
        I ( S_Q; X U | S_{<Q}, A_{\cQ_{all}}, \Phi (U), E^1_Q = 1, Q ) \leq \frac{I ( S_Q ; X U | S_{<Q}, A_{\cQ_{all}}, \Phi (U), Q )}{1 - \alpha}. 
    \end{equation*}

    Now finally, using the standard direct-sum technique, and the chain rule of Mutual Information (\pref{fact:chainrule}), we get 
    \begin{align*}
        I ( S_Q ; X U | S_{<Q}, A_{\cQ_{all}}, \Phi (U), Q ) & = \frac{1}{m} \sum_{q=1}^m I ( S_q ; X U | S_{< q}, A_{\cQ_{all}}, \Phi (U) ) 
        = \frac{ I ( \vec{S} ; X U | A_{\cQ_{all}}, \Phi (U) ) }{m} \\
        & \leq \frac{I ( \vec{S} ; X U | \Phi (U) ) }{m} \leq \frac{I ( \vec{S} ; X U ) }{m} = \frac{ I ( \vec{S} ; X ) + I ( \vec{S} ; U | X) }{m} \leq \frac{|U|}{m}
    \end{align*}
    where the first inequality holds from \pref{fact:chainrule2} with $I ( A_{\cQ_{all}} ; X U | \vec{S}, \Phi (U) ) = 0$. This is true as $\vec{S}, \Phi (U)$ and Alice's independent randomness determines Alice's message $A_{\cQ_{all}}$. The second inequality holds again from \pref{fact:chainrule2} with $I ( \Phi (U) ; \vec{S} | X U ) = 0$ as $\Phi (U)$ is fully determined by Bob's independent randomness and $U$. The final inequality holds from $I ( \vec{S}; X) = 0$ and $I ( \vec{S} ; U | X) \leq \abs{U}$. 

    Combining the inequalities we get 
    \begin{equation*}
        I ( S_Q; X | Z, E_Q = 1 ) \leq \frac{1}{(1-\alpha)^2} \frac{|U|}{m}
    \end{equation*}
    our desired inequality.
\end{proof}

\subsubsection{The main contradiction}

We will use the following combinatorial lower bound from \cite{ko_lower_2025} which provides a combinatorial impossibility.

\begin{lemma}[Lemma 3.10 of \cite{ko_lower_2025}] \label{lem:comb}
    Let $X$ and $Y$ be a distribution over $\{0,1\}^n$. For any $\gamma \geq 3$, if a random process $Z = z$ and $A$, which contains $z_{out}$ satisfy the following inequalities 
    \begin{align*}
    & \tilde{H}_\infty ( Y | A, Z = z ) + \tilde{H}_\infty ( X | A, Z = z ) \geq n + 2 \gamma \\
    & I ( Y ; X | A, Z = z ) \leq 2^{-2 \gamma} 
    \end{align*}
    Then it must be the case that
    \begin{equation*}
        \E_{A | Z = z } \left[ \abs{  \E_{Y, X |_{A = a, Z = z}} \left[ \chi_{Y} (X) \cdot z_{out} | A = a , Z = z \right] } \right] < 2^{- \gamma + 2}  
    \end{equation*}
\end{lemma}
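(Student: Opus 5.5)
The plan is to fix $a$ (and $z$) and compare the joint conditional distribution of $(Y,X)$ with the product of its marginals. The product part is handled by the standard Lindsey-type discrepancy bound for the Hadamard matrix. The correlation part is handled by Pinsker's inequality (\pref{fact:pinsker}). The two estimates are then averaged over $a$ using Cauchy--Schwarz.

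\textbf{Fixing $a$.} Since $A$ contains $z_{out}$, conditioned on $A=a, Z=z$ the value $z_{out}\in\{\pm 1\}$ is fixed. So it suffices to bound $\abs{\E_{P_a}[\chi_Y(X)]}$, where $P_a$ is the law of $(Y,X)$ given $A=a,Z=z$. Let $\mu_a,\nu_a$ be its marginals on $Y$ and $X$. Then
\begin{equation*}
\abs{\E_{P_a}[\chi_Y(X)]} \leq \abs{\E_{\mu_a\times\nu_a}[\chi_Y(X)]} + 2\norm{P_a-\mu_a\times\nu_a}_{TV}.
\end{equation*}

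\textbf{Product term.} Write $\E_{\mu_a\times\nu_a}[\chi_Y(X)] = \mu_a^{\top} H \nu_a$, where $H$ is the $2^n\times 2^n$ Hadamard matrix with $\norm{H}=2^{n/2}$. This gives
\begin{equation*}
\abs{\E_{\mu_a\times\nu_a}[\chi_Y(X)]} \leq 2^{n/2}\norm{\mu_a}_2\norm{\nu_a}_2 .
\end{equation*}
Since $\norm{\mu_a}_2^2 = 2^{-H_2(Y|a)}\leq 2^{-H_\infty(Y|a)}$ by \pref{fact:renyi}, and similarly for $\nu_a$, we get
\begin{equation*}
\abs{\E_{\mu_a\times\nu_a}[\chi_Y(X)]} \leq 2^{n/2}\, 2^{-H_\infty(Y|a)/2}\, 2^{-H_\infty(X|a)/2}.
\end{equation*}
Averaging over $a$ and applying Cauchy--Schwarz yields
\begin{equation*}
\E_a\!\left[2^{-H_\infty(Y|a)/2}2^{-H_\infty(X|a)/2}\right] \leq \sqrt{\E_a[2^{-H_\infty(Y|a)}]\,\E_a[2^{-H_\infty(X|a)}]} = 2^{-(\tilde H_\infty(Y|A,Z=z)+\tilde H_\infty(X|A,Z=z))/2}.
\end{equation*}
By the first hypothesis this is at most $2^{-(n+2\gamma)/2}$. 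Multiplying by $2^{n/2}$, the product term averages to at most $2^{-\gamma}$.

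\textbf{Correlation term.} By Pinsker, $2\norm{P_a-\mu_a\times\nu_a}_{TV}\leq \sqrt{2\ln 2\cdot D(P_a\,\|\,\mu_a\times\nu_a)}$. By \pref{fact:KL_Mutual}, $\E_a[D(P_a\,\|\,\mu_a\times\nu_a)] = I(Y;X|A,Z=z)$. Concavity of the square root (Jensen) then gives
\begin{equation*}
\E_a\!\left[2\norm{P_a-\mu_a\times\nu_a}_{TV}\right] \leq \sqrt{2\ln 2\cdot 2^{-2\gamma}} < 1.2\cdot 2^{-\gamma}.
\end{equation*}

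\textbf{Conclusion.} Summing the two averaged terms gives a bound of at most $2.2\cdot 2^{-\gamma} < 2^{-\gamma+2}$, as claimed.

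The only delicate points are two pieces of bookkeeping. First, the Cauchy--Schwarz step must match exactly the definition of average min-entropy, so that both conditional min-entropies are averaged over the same conditioning $A$. Second, the logarithm base in Pinsker must be tracked so that the constant stays below $4$. The assumption $\gamma\geq 3$ is not really needed for this argument; it only provides slack.
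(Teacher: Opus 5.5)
Your proposal is correct and follows essentially the same route as the paper: split off the distance to the product of marginals via Pinsker and Jensen using $\E_a[D(P_a \,\|\, \mu_a\times\nu_a)] = I(Y;X \mid A, Z=z)$, and bound the product term by Lindsey's lemma plus Cauchy--Schwarz against the two average min-entropies. The differences are cosmetic. You track the Pinsker constant $\sqrt{2\ln 2}$ instead of the paper's cruder factor $2$, which gives $2.2\cdot 2^{-\gamma}$ rather than $3\cdot 2^{-\gamma}$, and you are right that the hypothesis $\gamma \geq 3$ is never used.
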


We attach the full proof of \pref{lem:comb} in \pref{sec:missing} for completeness. The following lemma completes the proof of \pref{thm:main_informal}. 

\begin{lemma} \label{lem:comm_lb}
If $m = n^{1+ \Omega(1)}$, then no 2.5-round Multiphase Communication Game satisfying the parameters of \pref{cor:choice_parameters} can exist.
\end{lemma}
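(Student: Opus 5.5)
We argue by contradiction. Suppose a protocol with the parameters of \pref{cor:choice_parameters} exists, so that by \pref{cl:advantage}
\begin{equation*}
\E_{Z}\left[\abs{\E_{(S_Q,X)|Z}\left[\chi_{S_Q}(X)\right]}\right]\geq n^{-o(1)}.
\end{equation*}
We will apply \pref{lem:comb} with $Y := S_Q$, and with the conditioning $A$ taken to be $Z$ together with the good event $E_Q=1$. The quantity $z_{out}$ is $B_Q$, which is contained in $Z$.

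\textbf{Step 1: restrict to the good event.} Fix a constant $\alpha\in(0,1/10)$, say $\alpha=1/20$. By \pref{lem:min_entropy_s_i}, $\Pr[E_Q=1]\geq 1-2\alpha$. Since $\abs{\chi}\le 1$, restricting to $E_Q=1$ loses at most $2\alpha$ in the averaged advantage. This loss is too large to leave the $n^{-o(1)}$ bound intact, so we argue in the opposite direction: we show the advantage conditioned on $E_Q=1$ is $n^{-\Omega(1)}$. The advantage on the bad event is trivially at most $1$, so this alone does not finish the proof; we handle the bad event below.

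\textbf{Step 2: verify the hypotheses of \pref{lem:comb}.}
\begin{itemize}
\item From \pref{lem:min_entropy_s_i}, $H_\infty(S_Q\mid\cdot,E_Q=1)\geq n-8wt_{tot}/\alpha^2 = n-o(\log^3 n)$. Adding $B_Q$ to the conditioning costs at most one bit by \pref{lem:dors}.
\item From \pref{lem:min_entropy_x} together with $\abs{\Phi(U)}\le n-\sqrt n/2$, we get $\widetilde H_\infty(X\mid Z,E_Q=1)\ge \sqrt n/2-O(1)$.
\item The min-entropy sum is therefore at least $n+\sqrt n/2-o(\log^3 n)\ge n+2\gamma$ for $\gamma=n^{1/4}$, or indeed any $\gamma\le n^{\Omega(1)}$.
\item \pref{lem:correlation} gives $I(S_Q;X\mid Z,E_Q=1)\le O(\abs{U}/m)=O(n\log^{\kappa+1}n/m)=n^{-\Omega(1)}$, using $m=n^{1+\Omega(1)}$.
\end{itemize}
We therefore choose $\gamma=c\log n$ with a small constant $c$, so that $2^{-2\gamma}\ge I$ holds, and the entropy condition holds easily.

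\textbf{Step 3: passing from averaged to pointwise hypotheses.} \pref{lem:comb} is stated for a fixed $Z=z$, whereas Steps 1 and 2 give only averaged quantities. To bridge this, apply Markov's inequality over $z$:
\begin{itemize}
\item the set of $z$ with $I(S_Q;X\mid Z=z,E_Q=1)>n^{-\Omega(1)/2}$ has probability at most $n^{-\Omega(1)/2}$;
\item the set of $z$ where the average min-entropy of $X$ falls below $\sqrt n/4$ has probability at most $2^{-\sqrt n/4}$, since $\E_z 2^{-H_\infty}\le 2^{-\sqrt n/2+O(1)}$.
\end{itemize}
On the remaining good $z$, \pref{lem:comb} yields advantage at most $2^{-\gamma+2}=n^{-\Omega(1)}$. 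Summing over good and bad $z$, the total conditioned advantage is $n^{-\Omega(1)}$.

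\textbf{Main obstacle: the bad event $E_Q=0$.} This is the step I expect to be hardest. The advantage contributed on $E_Q=0$ is not automatically small, because $\alpha$ is a constant. The cleanest fix is to let $\alpha$ be small but not constant, e.g.\ $\alpha=n^{-o(1)}/10$ chosen below the advantage from \pref{cl:advantage}. Then $8wt_{tot}/\alpha^2=n^{o(1)}\log^3 n$ is still $\ll\sqrt n$, so the entropy sum in Step 2 remains at least $n+\sqrt n/4$. The $(1-\alpha)^{-2}$ factor in \pref{lem:correlation} stays $O(1)$, and the $\log(1-2\alpha)\ge-10\alpha$ step in \pref{lem:min_entropy_x} still holds.

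With this choice, the total advantage is at most $2\alpha+n^{-\Omega(1)}<n^{-o(1)}$, which contradicts \pref{cl:advantage} and proves the lemma.
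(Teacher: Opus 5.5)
Your proposal is correct and is essentially the paper's proof. You apply \pref{lem:comb} with $Y=S_Q$, the paper's $Z$ playing the role of the averaged variable $A$ and the good event $E_Q=1$ playing the role of the fixed conditioning, and take $\gamma=\Theta(\log n)$ as dictated by the $\abs{U}/m$ correlation bound. You then make $\alpha$ sub-constant so that the $O(\alpha)$ loss from $E_Q=0$ is dominated; the paper takes $\alpha=n^{-1/5}$, and your choice of $\alpha$ a constant factor below the $n^{-o(1)}$ advantage works equally well.

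Your Step~3 is unnecessary, because \pref{lem:comb} already accepts average min-entropy and mutual information over $A$. If you insisted on the pointwise version, you would also need a Markov step for the min-entropy of $S_Q$, which after adding $B_Q$ to the conditioning is only controlled on average.
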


To see how \pref{lem:comm_lb} completes the proof of \pref{thm:main_informal}, the choice of parameters gives a 2.5-round protocol \pref{cor:choice_parameters}. However, \pref{lem:comm_lb} then argues that such protocol cannot exist. Therefore, this provides the claimed lower bound on $t_{tot}$. 

\begin{proofof}{\pref{lem:comm_lb}}
    Recall that \pref{cor:choice_parameters} implies \pref{cl:advantage} or
    \begin{equation*}
        \E_{Z} \left[ \abs{ \E_{(S_Q,X)|Z} \left[ \chi_{S_Q} (X) \right] } \right]  \geq n^{- o(1)}. 
    \end{equation*}
    with the parameters
    \begin{align*}
        \abs{U} \leq O ( n \log^{\kappa+1} (n) ) ,~\abs{\Phi (U)} \leq n - \frac{\sqrt{n}}{2} ,~\abs{A_q} \leq o ( \log^3 (n) ).
    \end{align*}

    On the other hand, from \pref{lem:min_entropy_s_i}, \pref{lem:min_entropy_x}, \pref{lem:correlation}, we get the following set of inequalities:
    \begin{align}
        & \widetilde{H}_\infty ( X | Z, E_Q = 1) \geq n - |\Phi (U)| - 10 \alpha - 1 \label{eq:x_bound_final} \\ 
        & \widetilde{H}_\infty ( S_Q | Z, E_Q = 1 ) \geq n - \frac{8 w \cdot t_{tot}}{\alpha^2} - 1 \label{eq:s_bound_final} \\
        & I ( S_Q ; X | Z, E_Q = 1 ) \leq  \frac{1}{(1-\alpha)^2} \frac{|U|}{m} \label{eq:correlation_final}
    \end{align}
    where \pref{eq:s_bound_final} simply follows from 
    \begin{align*}
        \widetilde{H}_\infty ( S_Q | Z, E_Q = 1 ) & = \widetilde{H}_\infty ( S_Q | B_Q, \Phi (U), A_{\cQ_{all}}, S_{< Q}, Q, E_Q = 1 ) \\
        & \geq \widetilde{H}_\infty ( S_Q | \Phi (U), A_{\cQ_{all}}, S_{< Q}, Q, E_Q = 1 ) - 1 \geq n - \frac{8 w \cdot t_{tot}}{\alpha^2} - 1
    \end{align*}
    where the first inequality holds from \pref{lem:dors} and the second inequality follows from \pref{lem:min_entropy_s_i}.

    Adding \pref{eq:x_bound_final} and \pref{eq:s_bound_final}, and plugging in the bounds for $\abs{U}, \abs{\Phi (U)}, \abs{A_q}$, we obtain
    \begin{align*}
        & \widetilde{H}_\infty ( X | Z, E_Q = 1) + \widetilde{H}_\infty ( S_Q | Z, E_Q = 1 ) \geq \frac{\sqrt{n}}{2} + \left( n - \frac{\log^3 (n)}{\alpha^2} \right) \\
        & I ( S_Q; X | Z, E_Q = 1 ) \leq O \left( \frac{1}{(1-\alpha)^2} \frac{n \log^{\kappa+1} (n)}{m} \right)
    \end{align*}
    upon which we can invoke \pref{lem:comb}. By setting $\alpha := n^{-1/5}$, and $m := n^{ 1 + \Omega(1)}$, 
    our $\gamma$ for \pref{lem:comb} becomes $\gamma := \Theta ( \log n )$. In particular, we get
    \begin{align*}
        & \widetilde{H}_\infty ( X | Z, E_Q = 1) + \widetilde{H}_\infty ( S_Q | Z, E_Q = 1 ) \geq n + n^{\Theta(1)} \\
        &  I ( S_Q; X | Z, E_Q = 1 ) \leq 2^{-\Theta ( \log n )}
    \end{align*}
    Here, we remark that \pref{eq:correlation_final} is really the bottleneck for $\gamma$. This then implies
    \begin{equation*}
        \E_{Z, E_Q = 1} \left[ \abs{ \E_{(S_Q,X)|Z, E_Q = 1} \left[ \chi_{S_Q} (X) \right] } \right] < n^{-\Omega(1)}. 
    \end{equation*} 
    However, this contradicts \pref{cl:advantage}, due to the following chain of inequalities.
    \begin{align}
        n^{-o(1)}  & \leq \E_{Z} \left[ \abs{ \E_{(S_Q,X)|Z} \left[ \chi_{S_Q} (X) \right] } \right] \nonumber \\
        & \leq \E_{Z} \left[ \Pr [ E_Q = 1 | Z = z ] \cdot \abs{ \E_{(S_Q,X)|Z = z, E_Q = 1} \left[ \chi_{S_Q} (X) \right] } \right] \nonumber \\
        & + \underbrace{\E_{Z}\left[ \Pr [ E_Q = 0 | Z = z ] \cdot \abs{ \E_{(S_Q,X)|Z = z, E_Q = 0}  \left[ \chi_{S_Q} (X) \right] }\right]}_{\leq \E_{Z}\left[ \Pr [ E_Q = 0 | Z = z ] \right] = \Pr[ E_Q = 0 ]} \nonumber \\
        & \leq \E_{Z} \left[ \abs{ \E_{(S_Q,X)|Z, E_Q = 1} \left[ \chi_{S_Q} (X) \right] } \right] + \Pr[ E_Q = 0 ] \nonumber \\
        & \leq \E_{Z, E_Q = 1} \left[ \abs{ \E_{(S_Q,X)|Z, E_Q = 1} \left[ \chi_{S_Q} (X) \right] } \right]  + \norm{ Z - Z |_{E_Q = 1} }_1 + \Pr[ E_Q = 0 ] \label{eq:final_form_advantage}
    \end{align}
    where the last inequality follows from the standard fact $\abs{ \E_{X} [ f(X) ] - \E_{Y} [ f(Y) ] } \leq \norm{f}_\infty \cdot \norm{X-Y}_1$. 
    But $\norm{ Z - Z |_{E_Q = 1} }_1$ can be upper bounded by 
    \begin{align}
        & \norm{ Z - Z |_{E_Q = 1} }_1 \nonumber \\
        & \leq \norm{ \left( \Pr [E_Q = 1 ] \cdot Z |_{E_Q = 1} + \Pr [E_Q = 0 ] \cdot Z |_{E_Q = 0} \right) - \left( \Pr [E_Q = 1 ] Z |_{E_Q = 1} + \Pr [E_Q = 0 ] Z |_{E_Q = 1} \right) }_1 \nonumber \\
        & \leq \Pr [E_Q = 0 ] \cdot \norm{ Z |_{E_Q = 0} - Z |_{E_Q = 1} }_1 \leq 2 \cdot \Pr [E_Q = 0 ]. \label{eq:simple_l1_bound}
    \end{align}
    Combining \pref{eq:final_form_advantage} with \pref{eq:simple_l1_bound}, results in the final contradiction, as 
    \begin{align}
         n^{-o(1)}  & \leq \pref{eq:final_form_advantage} \leq \E_{Z, E_Q = 1} \left[ \abs{ \E_{(S_Q,X)|Z, E_Q = 1} \left[ \chi_{S_Q} (X) \right] } \right] + 3 \cdot \Pr[ E_Q = 0 ] \nonumber \\
         &  \leq n^{-\Omega(1)} + O( \alpha ) \leq n^{-\Omega(1)} \label{eq:contradiction}
    \end{align}
    The final inequality on $\Pr[ E_Q = 0 ] = 1 - \Pr[ E_Q = 1 ]$ follows from \pref{lem:min_entropy_s_i} as
    \begin{equation*}
        \Pr[ E_Q = 1 ] \geq ( 1 - \alpha )^2 \geq 1 - 2 \alpha.
    \end{equation*}
    Setting $\alpha := n^{-1/5}$ completes the argument. However, \pref{eq:contradiction} is a contradiction as the inequality states $n^{-o(1)} \leq n^{-\Omega(1)}$.  
    
\end{proofof}

\section{Acknowledgment}
The author thanks Huacheng Yu for carefully reading an earlier draft of this paper and providing valuable feedback.

\newpage 
\bibliographystyle{alpha}
\bibliography{references.bib}

\newpage
\begin{appendix}
    \section{Omitted Proof of Claims and Lemmas} \label{sec:missing}

    To make the presentation self-contained, we attach the missing proofs of claims and lemmas.
    
    \begin{repclaim}{cl:lowL2}
    Let $\cD$ be a distribution and $\cU$ a uniform distribution over some $\cS \subset \{0,1\}^n$. Then if $D( \cD || \cU ) < t$ with $t > 1$, then for every $\alpha > 2$ there exists an event $E$ such that 
    \begin{align*}
        & \cD ( E ) \geq 1 - \frac{1}{\alpha} \\ 
        & H_\infty ( \cD |_{E} ) \geq  \log |\cS| - 2 \alpha t  
    \end{align*}
    \end{repclaim}
\begin{proof}
    We partition $X$ depending on $\cD (X)$. Let $E$ denote the set of $X$ such that
    \begin{equation*}
        \log \frac{\cD(X)}{\cU (X)} < \alpha t 
    \end{equation*}
    Since $D( \cD || \cU ) < t$, by Markov's inequality $1 - \cD (E) < 1 / \alpha$. With $\alpha > 2$, for any $X$ that is in the support of $\cD|_{E}$, we have 
    \begin{equation*}
        \cD|_{E} (X) = \frac{\cD (X)}{\cD (E)} \leq 2 \cdot \cD( X) \leq 2^{\alpha t + 1} \cdot \cU (X ) = 2^{- \log |\cS| + \alpha t + 1} \leq 2^{ -  \log |\cS| + 2 \alpha t}
    \end{equation*}
    which then gives $H_\infty ( \cD |_{E} ) \geq   \log |\cS|  - 2 \alpha t $.
\end{proof}

Towards the proof of \pref{lem:comb}, we need the following well-known fact about the Hadamard matrix.

\begin{claim}[Lindsey's Lemma] \label{cl:hadamard}
Let $H$ be a Hadamard matrix. Let $\mu$ and $\nu$ be distributions, written as vectors. Then 
\begin{equation*}
    \mu^{T} H \nu \leq \norm{ \mu }_2 \norm{ \nu }_2 \cdot 2^{n/2}
\end{equation*}
\end{claim}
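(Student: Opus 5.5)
The plan is to expand the bilinear form in the orthonormal eigenbasis of $H$ and then apply Cauchy--Schwarz. Take $H$ to be the $2^n \times 2^n$ Hadamard matrix with entries $H_{x,y} = \chi_x(y) = (-1)^{\ip{x,y}}$. The only structural fact needed is orthogonality of the rows, $H^{T} H = 2^n I$. Equivalently, $\frac{1}{2^{n/2}} H$ is an orthogonal (and, since $H$ is symmetric, self-adjoint) matrix, so $\norm{H v}_2 = 2^{n/2}\norm{v}_2$ for every vector $v \in \mathbb{R}^{2^n}$.

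First I will verify $H^{T}H = 2^n I$ directly. The $(y,y')$ entry is $\sum_{x}(-1)^{\ip{x,y}+\ip{x,y'}} = \sum_x (-1)^{\ip{x, y\oplus y'}}$. This sum equals $2^n$ when $y = y'$, and equals $0$ otherwise, since a nonzero character sums to zero over $\Field_2^n$.

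Next comes the main step. By Cauchy--Schwarz,
\begin{equation*}
\mu^{T} H \nu = \ip{\mu, H\nu} \leq \norm{\mu}_2 \norm{H\nu}_2 .
\end{equation*}
Applying the isometry identity gives $\norm{H\nu}_2^2 = \nu^{T} H^{T} H \nu = 2^n \norm{\nu}_2^2$. Substituting this into the Cauchy--Schwarz bound yields $\mu^{T}H\nu \leq \norm{\mu}_2\norm{\nu}_2 \, 2^{n/2}$, which is the claimed inequality.

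Nothing here uses that $\mu,\nu$ are probability distributions, so the bound holds for arbitrary real vectors. The same argument also bounds $\abs{\mu^T H \nu}$, which is the form needed later in \pref{lem:comb}. There is no real obstacle. The one point to be careful about is the normalization: the statement uses the unnormalized $\pm 1$ matrix, and that is exactly where the factor $2^{n/2}$ comes from.
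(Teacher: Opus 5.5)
Your proof is correct: Cauchy--Schwarz combined with $\norm{H\nu}_2 = 2^{n/2}\norm{\nu}_2$ (which follows from $H^{T}H = 2^n I$) is the standard proof of Lindsey's Lemma, and the paper itself gives no proof, citing the claim as a well-known fact. Keep your remark that the argument bounds $\abs{\mu^{T}H\nu}$ and uses only $H^{T}H = 2^n I$, so it holds for any Hadamard matrix of order $2^n$; the absolute-value form is exactly what the proof of \pref{lem:comb} uses. The opening mention of an ``eigenbasis'' can be dropped, since the argument never uses it.
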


Also \pref{fact:renyi} implies the following simple proposition about $\ell_2^2$ norm of the distribution versus its min-entropy.

\begin{proposition} \label{prop:l-2-min}
    \begin{equation*}
        \E_{b \sim B} \left[ \sum_{a} \Pr [ A = a | B = b]^2 \right] \leq 2^{- \tilde{H}_\infty (A | B )}
    \end{equation*}
\end{proposition}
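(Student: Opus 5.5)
The plan is to prove the inequality pointwise for each fixed conditioning value $b$ and then average over $B$. By the definition of average min-entropy, the right-hand side is exactly $\E_{b \sim B}\left[ \max_a \Pr[A = a \mid B = b] \right]$. It therefore suffices to show that for every $b$ in the support of $B$,
\begin{equation*}
    \sum_a \Pr[A = a \mid B = b]^2 \leq \max_a \Pr[A = a \mid B = b],
\end{equation*}
and then take $\E_{b \sim B}$ of both sides, which preserves the inequality by linearity and monotonicity of expectation.

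For the pointwise step I would bound each square by one factor times the maximum: $\Pr[A=a \mid B=b]^2 \leq \Pr[A = a \mid B = b] \cdot \max_{a'} \Pr[A=a' \mid B=b]$. Summing over $a$ and using $\sum_a \Pr[A=a \mid B=b] = 1$ gives the claim. Equivalently, in the language the paper sets up, the left-hand side for fixed $b$ is $2^{-H_2(A \mid B = b)}$. \pref{fact:renyi} gives $H_2(A \mid B=b) \geq H_\infty(A \mid B=b)$, hence $2^{-H_2(A \mid B=b)} \leq 2^{-H_\infty(A\mid B=b)}$, which is the same pointwise bound. Averaging over $b \sim B$ then yields exactly $2^{-\widetilde{H}_\infty(A \mid B)}$ by definition.

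There is no real obstacle here. The only point to be careful about is not to apply Jensen in the wrong direction by passing to $\widetilde{H}_\infty$ before averaging. The averaging must be done on the quantities $2^{-H_\infty(A \mid B=b)}$ themselves, which is precisely how $\widetilde{H}_\infty$ is defined, so no convexity step is needed.
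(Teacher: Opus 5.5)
Your proposal is correct and follows the paper's proof: the paper also bounds $\sum_a \Pr[A=a \mid B=b]^2 = 2^{-H_2(A\mid B=b)} \le 2^{-H_\infty(A \mid B=b)}$ pointwise via \pref{fact:renyi}, and then averages over $b \sim B$ using the definition of $\widetilde{H}_\infty$. Your bound $\Pr[A=a\mid B=b]^2 \le \Pr[A=a\mid B=b]\cdot\max_{a'}\Pr[A=a'\mid B=b]$ is just a direct proof of that R\'enyi inequality.
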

\begin{proof}
    \begin{equation*}
        \sum_{a} \Pr [ A = a | B = b]^2 = 2^{ - H_2 (A | B = b ) } \leq 2^{ - H_\infty (A | B = b ) } 
    \end{equation*}
    Therefore, 
    \begin{equation*}
        \E_{b \sim B} \left[ \sum_{a} \Pr [ A = a | B = b]^2 \right] \leq \E_{b \sim B} \left[ 2^{- H_\infty (A | B = b )} \right] = 2^{ - \tilde{H}_\infty (A | B ) }
    \end{equation*}
    where the last equality holds from the definition of average min-entropy.
\end{proof}

Now we are ready to prove \pref{lem:comb}. We prove an equivalent statement.
\begin{replemma}{lem:comb}
    No setting of a random variable $Z = z$ and $A$, which contains $z_{out}$ can simultaneously satisfy all three of the following inequalities for any $\gamma \geq 3$
    \begin{align}
    & \E_{A | Z = z } \left[ \abs{  \E_{Y, X |_{A = a, Z = z}} \left[ \chi_{Y} (X) \cdot z_{out} | A = a , Z = z \right] } \right] \geq 2^{- \gamma + 2} \label{eq:condition1} \\
    & \tilde{H}_\infty ( Y | A, Z = z ) + \tilde{H}_\infty ( X | A, Z = z ) \geq n + 2 \gamma  \label{eq:condition2} \\
    & I ( Y; X | A, Z = z ) \leq 2^{-2 \gamma} \label{eq:condition3}
    \end{align}
\end{replemma}

    \begin{proof}
    For the sake of contradiction, suppose such $z$ and $A$ exists. First as $z_{out}$ is $\pm 1$, the absolute value of the guess does not change. That is  
    \begin{equation}
        \abs{  \E_{Y, X |_{A = a, Z = z}} \left[ \chi_{Y} (X) \cdot z_{out} | A = a , Z = z \right] } =  \abs{ \E_{ Y, X |_{A = a, Z = z}} \left[ \chi_{Y} (X) | A = a , Z = z \right] }
    \end{equation}
    Then we can use the $\ell_1$ bound to have
    \begin{align}
        \abs{ \E_{ Y, X |_{A = a, Z = z}} \left[ \chi_{Y} (X) | A = a , Z = z \right] } & \leq \abs{ Y |_{A = a, Z = z} \cdot H \cdot X |_{A = a, Z = z} } \label{eq:invariant_hadamard} \\
        & +  \norm{ Y |_{A = a, Z = z} \times X |_{A = a, Z = z} - (Y,X) |_{A = a, Z = z} }_1  \label{eq:invariant_rectangle} 
    \end{align}

    We bound the expectation of \pref{eq:invariant_rectangle}. Our KL-divergence term is then equal to the mutual information between $Y$ and $X$ conditioned on $Z = z$. Namely, using the chain rule for the KL divergence,
    \begin{align*}
        & D ( (Y, X) |_{A = a, Z = z}  || Y |_{A = a, Z = z} \times X |_{ A = a, Z = z } ) \\
        & = \underbrace{ D( X |_{A = a, Z = z} || X |_{A = a, Z = z} )}_{ = 0} + \E_{x \sim X |_{A = a, Z = z}} \left[ D ( Y|_{X = x, A = a, Z = z}  || Y |_{A = a, Z = z} )  \right] \\
        & = I ( Y; X | A = a, Z = z ) 
    \end{align*} 
    Then, due to Pinsker's inequality (\pref{fact:pinsker}), we have
    \begin{align*}
        & \norm{ Y |_{A = a, Z = z} \times X |_{A = a, Z = z} - (Y,X) |_{A = a, Z = z} }_1 \leq 2 \sqrt{ I ( Y ; X | A = a, Z = z ) }
    \end{align*}
    Taking expectation over $A$ and applying Jensen's inequality,
    \begin{equation} \label{eq:invariant_rectangle_expectation}  
        \E_{A |_{Z = z}} \left[ \norm{ Y |_{A = a, Z = z} \times X |_{A = a, Z = z} - (Y,X) |_{A = a, Z = z} }_1 \right] \leq 2 \sqrt{ I ( Y ; X | A, Z = z ) } \leq 2^{-\gamma + 1} 
    \end{equation}
    where the last bound holds from \pref{eq:condition3}.

    Next, we bound \pref{eq:invariant_hadamard}. Due to \pref{cl:hadamard} and Cauchy-Schwarz Inequality,
    \begin{align*}
        & \E_{A |_{Z = z}} \left[ \abs{ Y |_{A = a, Z = z} \cdot H \cdot X |_{A = a, Z = z} } \right] \leq \E_{A |_{Z = z}} \left[ 2^{n/2} \cdot \norm{ Y |_{A = a, Z = z} }_2 \cdot \norm{X |_{A = a, Z = z} }_2 \right] \\
        & \leq 2^{n/2} \cdot \sqrt { \E_{A |_{Z = z}} \left[ \norm{ Y |_{A = a, Z = z} }_2^2 \right] \cdot \E_{A |_{Z = z}} \left[ \norm{ X |_{A = a, Z = z} }_2^2 \right] }
    \end{align*}
    \pref{prop:l-2-min} implies
    \begin{align*}
        & \E_{A |_{Z = z}} \left[ \norm{ Y |_{A = a, Z = z} }_2^2 \right] \leq 2^{ - \tilde{H}_\infty ( Y | A, Z =z ) } \\
        & \E_{A |_{Z = z}} \left[ \norm{ X |_{A = a, Z = z} }_2^2 \right] \leq 2^{ - \tilde{H}_\infty ( X | A, Z =z ) }
    \end{align*}
    which would in turn imply 
    \begin{align*}
        &  \sqrt { \E_{A |_{Z = z}} \left[ \norm{ Y |_{A = a, Z = z} }_2^2 \right] \cdot \E_{A |_{Z = z}} \left[ \norm{ X |_{A = a, Z = z} }_2^2 \right] } \leq 2^{ - ( \tilde{H}_\infty ( Y | A, Z =z ) + \tilde{H}_\infty ( X | A, Z =z ) ) / 2 } \\
        & \leq 2^{- \frac{n + 2 \gamma}{2}} = 2^{-n/2} \cdot 2^{ -\gamma}
    \end{align*}
    which then implies \pref{eq:invariant_hadamard} is upper bounded by
    \begin{equation}
        \pref{eq:invariant_hadamard} \leq 2^{n/2} \cdot 2^{-n/2} \cdot 2^{-\gamma} = 2^{- \gamma}
    \end{equation}
    
    Therefore, we get
    \begin{align}
    \E_{A | Z = z } \left[ \abs{  \E_{Y, X |_{A = a, Z = z}} \left[ \chi_{Y} ( X ) \cdot z_{out} | A = a , Z = z \right] } \right] \leq \frac{3}{2^{\gamma}} 
    \end{align}
    which contradicts \pref{eq:condition1}.
\end{proof}

\section{Lifting Theorem} \label{sec:lifting}

We consider the following ``hard'' functions, which are roughly $f$ with a small discrepancy under a product distribution.
\begin{definition}[Hard Functions] \label{def:hard}
    Let $X$ be a uniform distribution over $\{0,1\}^n$, and let $S$ be an independently distributed random variable with min-entropy $H_{\infty} ( S )$.
    We say $f : \{0,1\}^{2n} \to \{ \pm 1 \}$ is hard if for any random variables $A$ such that 
    \begin{align*}
        & (S,X)|_{A} = X|_{A} \times S |_{A} \\
        & \widetilde{H}_\infty ( X | A ) \geq \frac{\sqrt{n}}{3} \\
        & \widetilde{H}_\infty ( S | A ) \geq H_{\infty} ( S ) - o (  \log^3  (n) )
    \end{align*}
    it must be the case that
    \begin{equation*}
        \E_{A} \left[  \abs{ \E_{(S,X)|_A} \left[  f(S,X) \right] } \right] \leq n^{-\Omega(1)}
    \end{equation*}
\end{definition}

Intuitively, ``hard'' functions are those that exhibit small discrepancy under some product distribution with $X$ set to be a uniform distribution. For instance, Inner Product over $\Field_2$ satisfies the definition, as its discrepancy is exponentially small. Unfortunately, Disjointness does not satisfy the definition. One can obtain good advantage by sampling and revealing a few coordinates. (See \cite{braverman_information_2013}) 

For example, suppose Alice has $S$ and Bob has $X$. Bob does the following. Consider the coordinates where $X$ equals 1. Sample each coordinate with probability $\eps$. Then send these coordinates $X_0$ to Alice. The message length is roughly $O ( \eps n \log n )$. Alice simply checks Disjointness with the sampled coordinates $X_0$. If $S$ and $X_0$ are not disjoint, declare the inputs are not disjoint. Otherwise flip a random coin to output $1$ with probability $\Pr_{(S,X)} \left[ \DISJ(S,X) = 1 \right]$. One can verify that such a strategy ensures roughly $\Omega( \eps )$ advantage over random guessing. That is a $\widetilde{O} ( n^{1-\delta} )$ length message already gives $\Omega( n^{\delta} )$ advantage over random guessing.

One key ingredient necessary for our general proof is the following technical theorem regarding the min-entropy. 

\begin{theorem}[Theorem 1 and Corollary 1 of \cite{skorski_strong_2019}] \label{thm:strong_chain_rule_min}
    Let $\cX$ be a fixed alphabet, and $X = (X_1, \ldots, X_t)$ be a sequence of (possibly correlated) random variables each over $\cX$, equipped with a distribution $\mu$. Then for any $\eps \in (0,1)$ and $\delta > 0$, there exists a collection $\cB$ of disjoint sets on $\cX^t$ such that 
    \begin{itemize}
        \item $\cB$ can be indexed by a small number of bits, namely
        $$ \log |\cB| = t \cdot O \left( \log \log |\cX| + \log \log \eps^{-1} + \log ( t / \delta ) \right)  $$
        \item $\cB$ almost covers the domain
        $$ \sum_{B \in \cB} \mu (B) \geq 1 - \eps$$
        \item Conditioned on $\cB$, block distributions $X_{i} | X_{<i}$ are nearly flat. That is
        $$ \forall x, x' \in B,~~ 2^{-O(\delta)} \leq \frac{\mu(x_i | x_{<i} )}{\mu(x'_i | x'_{<i} ) } \leq 2^{O(\delta)} $$
       
        \item For every $B \in \cB$, for every index $i \in [t]$, and for every set $I \subset [i-1]$, we have
        \begin{enumerate}
            \item The chain rule for min-entropy 
            $$ H_\infty ( X_i | X_I, B ) + H_\infty ( X_I | B ) = H_\infty ( X_i, X_I | B ) \pm O( \delta) $$
            \item The average and worst-case min-entropy almost match
            $$ \widetilde{H}_\infty ( X_i | X_I, B ) = H_\infty ( X_i | X_I, B ) \pm O( \delta ) $$
        \end{enumerate}
    \end{itemize}
\end{theorem}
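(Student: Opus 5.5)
The plan is a \emph{bucketing by rounded conditional probabilities}, followed by a refinement step that makes the blocks behave well under conditioning.

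\textbf{Step 1: flatten the conditionals.} Set $\delta' := \delta/t$. For each $x \in \cX^t$ and each $i \in [t]$, write $\lambda_i(x) := \log \frac{1}{\mu(x_i \mid x_{<i})}$ and round it down to a multiple of $\delta'$. This gives a signature $k(x) = (k_1(x), \ldots, k_t(x))$. Call $x$ \emph{bad} if some $\lambda_i(x) > \log|\cX| + \log(t/\eps)$. For a fixed $i$ and fixed prefix $x_{<i}$, the symbols $x_i$ with conditional probability below $\eps/(t|\cX|)$ carry total conditional mass at most $\eps/t$. A union bound over $i$ then shows the bad set has $\mu$-mass at most $\eps$. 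On good points each $k_i$ takes at most $(\log|\cX| + \log(t/\eps))/\delta'$ values. So indexing the blocks $B_k := \{x \text{ good} : k(x) = k\}$ costs $t \cdot O(\log\log|\cX| + \log\log \eps^{-1} + \log(t/\delta))$ bits, which is the first bullet. The covering bullet is the bad-set bound. Flatness holds by construction: within a block, any two conditionals differ by a factor at most $2^{\delta'} \le 2^{O(\delta)}$.

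\textbf{Step 2: chain rule and average versus worst case.} Within $B$, $\mu(x) = \prod_i \mu(x_i \mid x_{<i})$ equals $2^{-\sum_i k_i \delta'}$ up to a factor $2^{\pm\delta}$. Hence $\mu|_B$ is $2^{O(\delta)}$-close to uniform on $B$. From this I would derive both items of the last bullet by a counting argument. Every min-entropy quantity becomes $\log$ of a fiber size, up to $O(\delta)$: specifically $H_\infty(X_I \mid B)$, $H_\infty(X_i, X_I \mid B)$ and $H_\infty(X_i \mid X_I, B)$ reduce to $|B|$, the projection sizes $|\pi_I(B)|$, and the sizes of the fibers $\{x \in B : x_I = a\}$. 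The chain rule, and the equality of average and worst-case conditional min-entropy, then hold exactly when these fiber sizes are (nearly) independent of the fixed coordinates.

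\textbf{Step 3 (main obstacle): uniform fibers.} Near-uniformity on $B$ does \emph{not} by itself make the fibers of arbitrary projections $X_I$, for $I \subseteq [i-1]$ not a prefix, of comparable size. Moreover, conditioning on $B$ perturbs the conditionals $\mu(x_i \mid x_{<i}, B)$ away from $\mu(x_i \mid x_{<i})$.

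My first attempt is an iterative refinement in coordinate order. At stage $i$, for each prefix $x_{\le i}$, record the rounded logarithm of the number, equivalently the mass, of completions that remain inside the current block. The block is then split according to these values. Each stage again costs $O(\log(t/\delta) + \log\log|\cX|)$ bits, and discarding tiny fibers costs another $\eps/t$ of mass. This makes every prefix-fiber flat and restores $\mu(x_i \mid x_{<i}, B) \approx \mu(x_i \mid x_{<i})$ up to $2^{O(\delta')}$ per step.

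For non-prefix $I$, I would argue by induction on $i$. Once all prefix-fibers are flat, the set $B$ has a product-like ``tree'' structure: at depth $j$, every node has roughly the same number of children. Then the fiber over $x_I$ has size roughly $\prod_{j \notin I,\, j \le i} (\text{branching at depth } j)$, uniformly in $x_I$. This is the point I expect to need care, since branching uniformity at each depth must survive projection. If it fails, the fallback is to also bucket by the rounded fiber sizes of the specific projections actually used. To stay within the bit budget, this could only be done for the subsets required by the downstream application, not for all $2^t$ subsets.
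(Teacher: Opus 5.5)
There is a genuine gap, in two places. The paper gives no proof of this statement; it imports it from Sk\'orski and notes only that the argument uses a covering of the probability simplex.

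\textbf{Step 3 refines in the wrong order.} Your Step 1 is fine and delivers the first three bullets as literally written, since they concern the unconditioned $\mu(x_i\mid x_{<i})$. The last bullet, however, is about $\mu$ conditioned on $B$, and forward-order refinement cannot reach it. Suppose the split at stage $i+1$ is by a label $r_{i+1}(x_{\le i+1})$, producing a sub-block $B'=B\cap\{r_{i+1}=r\}$. Then $\Pr[B'\mid x_{\le i}]=\sum_{x_{i+1}:\,r_{i+1}(x_{\le i+1})=r}\mu(x_{i+1}\mid x_{\le i})\Pr[B\mid x_{\le i+1}]$. This can vary arbitrarily with $x_{\le i}$ even if $\Pr[B\mid x_{\le i}]$ was flat, so each split undoes the flatness obtained at earlier stages, and nothing bounds the number of re-splits.

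Your target is also misstated. Restricting $x_i$ to a bucket inflates $\mu(x_i\mid x_{<i},B)$ by roughly $1/\Pr[\text{bucket}\mid x_{<i}]$. What you need, and can get, is that $\mu(x_i\mid x_{<i},B)$ is nearly \emph{constant on $B$}, not that it is close to $\mu(x_i\mid x_{<i})$.

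The missing idea is to build the labels in \emph{reverse} order. For $i=t,\dots,1$, let $\nu$ be the law of $X_{\le i}$ conditioned on the labels already fixed at levels $>i$. Label $x$ by the rounded $\log\nu(x_i\mid x_{<i})$, at resolution $\delta/t$ and discarding values below $\eps/(t|\cX|)$. Also record the rounded log of the $\nu$-mass of that bucket given $x_{<i}$. Every later split is a function of $x_{<i}$, so the level-$i$ conditional is untouched, and on each final block $\mu(x_i\mid x_{<i},B)=2^{-c_i(B)\pm O(\delta/t)}$. This fits the stated bit budget, discards at most $\eps/t$ mass per level, and gives both chain-rule items for $I=[i-1]$ directly. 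The reason is that $\Pr[x_{\le i}\mid B]$ is then a product of such factors, uniformly over $x\in B$. Keep your Step 1 labels as well if you want the literal third bullet.

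\textbf{The non-prefix inductive step is false.} Uniform branching at every depth does not give uniform fibers under projection. Take $t=3$ and let $X_1$ be uniform on $2^{k_1}$ values. Given $x_1$, let $X_2$ be uniform on $\{a_0\}\cup P(x_1)$, a set of size $2^{k_2}\ge 2$, where the sets $P(x_1)$ are pairwise disjoint and avoid $a_0$. Given $(x_1,x_2)$, let $X_3$ be uniform on a set $T(x_1,x_2)$ of size $2^{k_3}$, with the sets $T(x_1,a_0)$ pairwise disjoint over $x_1$.

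Every conditional $X_j\mid X_{<j}$ is exactly flat, so your construction (and the reverse-order one) keeps the whole support as a single block. Yet for $i=3$ and $I=\{2\}$:
\begin{itemize}
\item $H_\infty(X_2\mid B)=k_2$, attained at $a_0$;
\item $H_\infty(X_3\mid X_2,B)=k_3$, attained at any $x_2\neq a_0$;
\item every pair $(x_2,x_3)$ in the support has probability exactly $2^{-k_1-k_2-k_3}$, so $H_\infty(X_2,X_3\mid B)=k_1+k_2+k_3$.
\end{itemize}
The chain rule is therefore off by $k_1$. Equivalently, the fiber over $a_0$ has size $2^{k_1+k_3}$ while every other fiber has size $2^{k_3}$.

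So flatness of the prefix conditionals cannot yield the clause ``for every $I\subset[i-1]$''. Your fallback of bucketing only the projections used downstream proves a weaker statement, not this one; naively bucketing all projections would need exponentially many labels rather than $t$. Once repaired, your argument establishes the prefix case $I=[i-1]$. That is the only case the paper actually invokes, in \eqref{eq:appendix_chain_rule}.
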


\begin{remark}
    The proof of the theorem invokes the standard $\ell_\infty$ covering-number bound for the probability simplex. Sk\'orski \cite{skorski_strong_2019} cites an unpublished note for this bound; for an accessible reference see Chapter~5, Example~5.6 of \cite{wainwright_high-dimensional_2019}.
\end{remark}

While the standard chain rule does not hold for min-entropy, \pref{thm:strong_chain_rule_min} essentially allows us to maintain a chain rule structure by ``paying'' for partitions and discarding a small set of inputs. Then we are ready to prove the following general lifting theorem, which is a strengthening of \pref{thm:main_informal} using \pref{thm:strong_chain_rule_min}.
 
\begin{theorem}[Lifting Theorem] \label{thm:lifting}
    For the Multiphase Problem with $f$ satisfying \pref{def:hard}, and $m = n^{1+\Omega(1)}$,
    \begin{equation*}
        \max \{ t_u, t_{tot} \} \geq \Omega \left( \left( \frac{\log n}{\log \log n} \right)^2 \right)
    \end{equation*}
\end{theorem}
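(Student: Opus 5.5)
The plan reuses the whole pipeline of \pref{sec:main_proof} verbatim up to the final step, and replaces the Hadamard-specific \pref{lem:comb} with an appeal to \pref{def:hard}.

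\textbf{Step 1: simulation.} The Simulation Theorem and \pref{cor:choice_parameters} never use the structure of $f$ beyond balancedness. So any data structure with $t_u \le \log^\kappa n$ and $t_{tot} = o((\log n/\log\log n)^2)$ still yields a 2.5-round protocol. Its costs are $\abs{\Phi(U)} \le n - \sqrt n/2$, $\abs{A_q} \le o(\log^3 n)$ and $\abs U \le O(n\log^{\kappa+1} n)$, and its advantage is $n^{-o(1)}$. Exactly as in \pref{cl:advantage}, with $Z := \Phi(U), A_{\cQ_{all}}, S_{<Q}, B_Q, Q$, this gives
\[
\E_Z\left[\abs{\E_{(S_Q,X)|Z}[f(S_Q,X)]}\right] \ge n^{-o(1)}.
\]

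\textbf{Step 2: conditioning on the good event.} I will keep \pref{lem:min_entropy_s_i}, \pref{lem:min_entropy_x} and \pref{lem:correlation} unchanged. After conditioning on $E_Q=1$, they give the following three facts:
\begin{itemize}
\item $\widetilde H_\infty(X\mid Z,E_Q=1) \ge \sqrt n/2 - O(1)$;
\item $\widetilde H_\infty(S_Q\mid Z,E_Q=1) \ge n - o(\log^3 n)/\alpha^2$;
\item $I(S_Q;X\mid Z,E_Q=1) \le n^{-\Omega(1)}$.
\end{itemize}
The last holds because $m = n^{1+\Omega(1)}$ and $t_u$ is polylogarithmic. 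The $S_Q$ bound is weaker than what \pref{def:hard} needs: with $\alpha = n^{-1/5}$ the loss is polynomial. I will therefore take $\alpha$ a small constant, or $1/\log\log n$, so the loss is $o(\log^3 n)\cdot O(1)$. The cost is that $\Pr[E_Q=0] = O(\alpha)$, which is no longer $n^{-\Omega(1)}$. I will handle this by applying Markov at a level chosen relative to the target advantage $n^{-o(1)}$. The direct route: the estimate \pref{eq:final_form_advantage} needs $\Pr[E_Q=0] \ll n^{-o(1)}$, so I will instead choose $\alpha = n^{-\epsilon'}$ with $\epsilon'$ vanishing slowly. This makes the $S_Q$ loss $o(\log^3 n)\cdot n^{2\epsilon'}$, which must still fit the budget. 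If it does not, I will measure the hardness slack in \pref{def:hard} up to the $n^{o(1)}$ factors (treating $o(\log^3 n)$ as $n^{o(1)}$-robust), or pass to the KL form before converting to min-entropy.

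\textbf{Step 3: making $X$ and $S_Q$ exactly independent.} This is the main obstacle. \pref{def:hard} requires $(S,X)|_A$ to be an exact product, whereas we only have small mutual information. My plan:
\begin{itemize}
\item Bound the $\ell_1$ distance between $(S_Q,X)|_{z}$ and the product of its marginals by Pinsker, exactly as in the proof of \pref{lem:comb}. In expectation this is $\le 2\sqrt{n^{-\Omega(1)}}$, so replacing the joint law by the product changes the bias by only $n^{-\Omega(1)}$.
\item Realize these product distributions as conditionings on a random variable $A$, as the definition demands. Here $S$ should be uniform, so that $H_\infty(S) = n$.
\end{itemize}
The catch is that \pref{def:hard} speaks of \emph{average} min-entropy over one variable $A$, while our bounds hold conditioned on $E_Q=1$ and use mixed conditionings. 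This is where \pref{thm:strong_chain_rule_min} enters. I will apply it to the sequence (conditioning variables, $S_Q$, $X$) to partition into $\cB$ at a cost of $\log\abs{\cB} = \mathrm{poly}\log\log n$ bits. On each block, worst-case and average min-entropy agree up to $O(\delta)$, and the chain rule holds. I can then take $A$ to be $Z$ together with the block index and $E_Q$, losing only $O(\log\abs{\cB})$ in each entropy via \pref{lem:dors}. The discarded mass $\eps$ is absorbed like $\Pr[E_Q=0]$.

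\textbf{Step 4: contradiction.} With this $A$, \pref{def:hard} yields bias $n^{-\Omega(1)}$, plus the Pinsker error and the discarded mass. This contradicts the $n^{-o(1)}$ bound from Step 1, giving the bound on $\max\{t_u,t_{tot}\}$. In the remaining case $t_u$ is super-polylogarithmic, and the bound is immediate.
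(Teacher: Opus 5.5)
There is a genuine gap. It sits in your Step 2, and Step 3 does not repair it.

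\textbf{Why Step 2 fails.} The bound of \pref{lem:min_entropy_s_i} comes from converting the mutual-information estimate of \pref{cl:mutual_info_s_I} into min-entropy by Markov. Markov is applied once to define $E_Q^1$ and once more inside \pref{cl:lowL2}. The resulting deficit is $8wt_{tot}/\alpha^2$, and the discarded mass is about $2\alpha$.
\begin{itemize}
\item \pref{def:hard} tolerates only an \emph{additive} $o(\log^3 n)$ deficit.
\item As in the proof of \pref{lem:comm_lb}, the discarded mass must be smaller than the advantage.
\item The advantage you can guarantee is only $\exp(-\Theta(\frac{t_{tot}}{\ell}\log\log n))=n^{-\epsilon_n}$, with $\epsilon_n\to 0$ arbitrarily slowly when $t_{tot}$ is just below $(\log n/\log\log n)^2$.
\end{itemize}
Taking $\alpha\le n^{-\epsilon_n}$ makes the deficit $o(\log^3 n)\cdot n^{2\epsilon_n}$, which is super-polylogarithmic. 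The two constraints are compatible only when the advantage is at least inverse-polylogarithmic, i.e.\ $t_{tot}=O(\ell)$. That recovers only an $\Omega(\log n/\log\log n)$ bound.

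The main proof gets away with $\alpha=n^{-1/5}$ only because \pref{lem:comb} needs just the \emph{sum} of the two min-entropies to exceed $n$. There the $\sqrt n/2$ slack on $X$ absorbs a polynomial loss on $S_Q$. \pref{def:hard} does not permit that trade, and reading its budget as ``$n^{o(1)}$-robust'' changes the hypothesis of the theorem.

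\textbf{Why Step 3 does not help.} Applying \pref{thm:strong_chain_rule_min} to the three-block sequence (conditioning variables, $S_Q$, $X$) relates worst-case and average min-entropy inside blocks. It does not turn a KL bound into a min-entropy bound. With $A_{\cQ_{all}}$ as a single block, the only min-entropy estimate available from message lengths (\pref{lem:dors}) charges all $m\abs{A_q}$ bits to $S_Q$, so there is no direct sum. Separately, $\log\abs{\cB}$ there would be $\Theta(\log(mn))$ rather than $\mathrm{poly}\log\log n$, since the alphabet must encode $A_{\cQ_{all}}$ and $S_{<Q}$.

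\textbf{The missing idea: a direct sum for min-entropy itself.} This is what the paper does.
\begin{itemize}
\item It lower-bounds $\widetilde{H}_\infty(\vec S\mid A_{\cQ_{all}},\Phi(U),\sigma)\ge m H_\infty(S)-m\abs{A_q}-\abs{\Phi(U)}$ directly by \pref{lem:dors}, with no Markov step and no lost mass.
\item It applies \pref{thm:strong_chain_rule_min} to the $m$-term sequence $(S_{\sigma(1)},\dots,S_{\sigma(m)})$. This costs $\log\abs{\cB}=O(m\log n)$ bits and discards only $\eps=2^{-n^{\Omega(1)}}$ mass. The chain rule then spreads the deficit to an average of $O(\abs{A_q})$ per coordinate.
\item It uses Markov \emph{over coordinates} to fix one $q$ with deficit $o(\log^3 n)$ and correlation $O(\abs{U}/m)$. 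Fixing $q$ is needed anyway: with $Q$ random inside $Z$, the average min-entropy is dominated by the worst coordinates.
\item The good coordinates might carry none of the advantage. So the paper first composes the protocol with a uniformly random permutation $\sigma$ of $[m]$, which makes every fixed $q$ inherit the full $n^{-o(1)}$ advantage.
\end{itemize}
Your plan has neither the coordinate-wise min-entropy chain rule nor the permutation. Your final steps do match the paper's ending: Pinsker to replace $(S,X)|_{Z}$ by the product of its marginals, then invoking \pref{def:hard}.
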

\begin{proof}
    Our simulation theorem \pref{sec:simulation} guarantees a 2.5-round Multiphase communication protocol $\Pi$ with exactly the same parameters as in \pref{cor:choice_parameters}. But on top of the usual simulation, we will also consider an independent random permutation $\sigma \in \mathrm{Sym}([m])$ known to all players, to ``average out'' the advantage per coordinate. When conditioned on $\sigma$, $\sigma$ permutes the query indices so that $f_{\sigma} ( S_{q}, X ) := f ( S_{\sigma(q)}, X )$. By the same choice of parameters, we get 
    \begin{align*}
        \abs{U} \leq O ( n \log^{\kappa+1} (n) ) ,~\abs{\Phi (U)} \leq n - \frac{\sqrt{n}}{2} ,~\abs{A_q} \leq o ( \log^3 (n) ) 
    \end{align*}
    while having
    \begin{equation*}
        \adv ( G_{f_{\sigma}}, \Pi ) = \adv ( G_{f}, \Pi ) \geq n^{-o(1)}. 
    \end{equation*}
    This holds as the whole protocol for $f_{\sigma}$ is constructed via the following argument. Construct the same protocol up to the 0.5-round message for $\vec{S} = (S_1, \ldots, S_m)$. Given query $q \in \cQ$, Alice simply sends $A_{\sigma(q)}$, and Bob outputs accordingly.

    A useful property here is that
    \begin{align*}
        & \E_{A_{q}, \Phi (U), B_q | \sigma, \vec{S}} \left[  \E \left[  B_q \cdot f_\sigma (S_q, X) | A_{q}, \Phi (U),  B_q, \sigma, \vec{S} \right] \right] \\
        & = \E_{A_{\sigma(q)}, \Phi (U), B_{\sigma(q)} | id, \vec{S}} \left[  \E \left[  B_{\sigma(q)} \cdot f(S_{\sigma(q)}, X) | A_{\sigma(q)}, \Phi (U), B_{\sigma(q)}, \vec{S}, id \right] \right]
    \end{align*}
    where the second case refers to choosing the permutation as the identity. By taking expectation over uniformly random $\sigma$, for any fixed $q \in \cQ = [m]$, we get
    \begin{align}
        & \E_{A_{q}, \Phi (U), B_q, \sigma | \vec{S}} \left[  \E \left[  B_q \cdot f_{\sigma} (S_q, X) | A_{q}, \Phi (U),  B_q, \sigma, \vec{S} \right] \right] \nonumber \\
        & = \E_{A_{Q}, \Phi (U), B_{Q}, Q | id, \vec{S}} \left[  \E \left[  B_Q \cdot f(S_Q, X) | A_{Q}, \Phi (U), B_{Q}, \vec{S}, Q, id \right] \right] \geq n^{-o(1)} \label{eq:averaging}
    \end{align}
    Intuitively, introducing a random permutation averages out the advantage across the coordinates $q$, ensuring no single coordinate disproportionately affects the overall advantage. Such a step is technically necessary, as unlike in the proof of \pref{thm:main_informal}, we will be fixing a single $q$ in the later part of the proof.

    Again, we would like to extract a random variable $Z$ from a 2.5-round protocol $\Pi$ to derive a combinatorial contradiction. As a ``hard'' distribution, analogous to the proof of \pref{thm:main_informal}, we will be selecting each $S_q$'s independently at random according to the distribution given in \pref{def:hard}, and $X$, a uniformly random distribution.
    
    Our proof is analogous to that of \pref{lem:comm_lb}, essentially selecting the same $Z$, with the following caveat. Recall that in the proof of \pref{lem:comm_lb}, we have selected $Z$ as $A_{\cQ_{all}}, \Phi (U), S_{<Q}, Q, B_Q$. We add an additional ``partitioning'' of $\vec{S}$, $\cB$ guaranteed by \pref{thm:strong_chain_rule_min} with $\{ X_i \}_{i \in [m]} := (S_{\sigma(1)}, \ldots, S_{\sigma(m)})$ in the notation of \pref{thm:strong_chain_rule_min}, conditioned on $A_{\cQ_{all}}, \Phi (U), \sigma$. 

    We will choose $\eps := 2^{- n^{\Omega(1)}}$ and $\delta := \frac{1}{\poly (n)} \ll \frac{1}{m}$ for our choice of $\eps$ and $\delta$ in \pref{thm:strong_chain_rule_min}. Then \pref{thm:strong_chain_rule_min} would guarantee a partitioning of $(S_{\sigma(1)}, \ldots, S_{\sigma(m)})$ such that
    \begin{equation*}
        \log |\cB| = m \cdot O ( \log \log \left( 2^n \right) + \log \log 2^{n^{\Omega(1)}} + \log ( \poly (m) ) ) = m \cdot O ( \log n ) 
    \end{equation*}
    where conditioned on a valid partition in $\cB$, near chain rule holds for min-entropy. But with a slight abuse of notation, we will also include $\cB$ as the case where $\vec{S}$ is not included in any of the partitions (i.e., spoiled set $B_{spoil}$). Then $G_{\cB} = 1$ refers to the event of valid ``choosing'' $\cB$, i.e., choosing $B \in \cB \backslash \{ B_{spoil} \}$, to derive the contradiction. \pref{thm:strong_chain_rule_min} implies the near chain rule property:
    \begin{equation} \label{eq:appendix_chain_rule}
        \sum_{q = 1}^m \left[ \widetilde{H}_\infty ( S_{\sigma(q)} |  S_{\sigma(<q)}, A_{\cQ_{all}}, \cB, \Phi (U), \sigma, G_{\cB} = 1 ) \right] = \widetilde{H}_\infty ( \vec{S} | A_{\cQ_{all}}, \cB, \Phi (U), \sigma, G_{\cB} = 1 ) \pm m \cdot O( \delta )
    \end{equation}
    
    Then our new $Z$ would be
    \begin{equation*}
        Z := A_{\cQ_{all}}, \Phi (U), \cB, S_{\sigma(<Q)}, Q, B_Q, \sigma, G_{\cB} = 1
    \end{equation*}
    where $S_{\sigma(<Q)} := S_{ \sigma ( 1 ) } , \ldots, S_{\sigma(Q-1)}$.

    We first bound $\widetilde{H}_\infty ( S_{\sigma( Q )} | Z )$ for a random $Q$, or equivalently, $\widetilde{H}_\infty ( S_{\sigma( Q )} ) - \widetilde{H}_\infty ( S_{\sigma( Q )} | Z )$. From \pref{thm:strong_chain_rule_min} and \pref{lem:dors}, we have
    \begin{equation} \label{eq:min_entropy_s_appendix}
        \widetilde{H}_\infty ( \vec{S} | A_{\cQ_{all}}, \cB, \Phi (U), \sigma ) \geq \widetilde{H}_\infty ( \vec{S} | \sigma ) - m \abs{A_q} - O( m \log n ) - \abs{\Phi (U)} \geq m \cdot H_\infty ( S_Q ) - 1.5 m \abs{A_q}
    \end{equation}
    as $|\cB| \leq O (m \log n )$ and $\abs{\Phi (U)} \leq n \ll m$. So the term is dominated by $m \abs{A_q}$. As $S_1, \ldots, S_m$ are chosen independently at random, $S_{\sigma(1)} , \ldots, S_{\sigma(m)}$ are chosen independently at random as well, resulting in the final inequality. Then further conditioning on $G_{\cB} = 1$, and using the observation, we can still deduce
    \begin{align*}
        \widetilde{H}_\infty ( \vec{S} | A_{\cQ_{all}}, \cB, \Phi (U), \sigma, G_{\cB} = 1 ) & \geq \log \left( \Pr [ G_{\cB} = 1 ] \right)  + \widetilde{H}_\infty ( \vec{S} | A_{\cQ_{all}}, \cB, \Phi (U), \sigma, G_{\cB} ) \\
        & \geq - O ( \eps ) + \widetilde{H}_\infty ( \vec{S} | A_{\cQ_{all}}, \cB, \Phi (U), \sigma ) \\
        & \geq m \cdot H_\infty ( S_Q ) - 1.5 m \abs{A_q} - O ( \eps ) \geq m \cdot H_\infty ( S_Q ) - 1.75 m \abs{A_q}
    \end{align*}
    as $\Pr [ G_{\cB} = 0 ] \leq \eps$, which is guaranteed to be exponentially small due to our choice of parameters. The second inequality holds as $\cB$ determines $G_{\cB}$. The subtraction terms are indeed dominated by $m \abs{A_q}$ terms. Then applying \pref{eq:appendix_chain_rule}, we obtain that for a random $q$,
    \begin{align} 
        & H_\infty ( S_{\sigma(q)} ) - \E_{q \in [m] } \left[ \widetilde{H}_\infty ( S_{\sigma(q)} |  S_{\sigma(<q)}, A_{\cQ_{all}}, \cB, \Phi (U), \sigma, G_{\cB} = 1 ) \right] \nonumber \\
        & \leq H_\infty ( S_{\sigma(q)} ) - \frac{m \cdot H_\infty ( S_{\sigma(q)} ) - 1.75 m \abs{A_q} + O( m \cdot \delta )}{m}  \leq 2  \abs{A_q}.\label{eq:min_entropy_s_chain_rule_appendix}
    \end{align}

    Next we bound the average min-entropy of $X$. Due to the same argument as in \pref{lem:min_entropy_x},
    \begin{align*}
        \widetilde{H}_\infty ( X | \vec{S}, A_{\cQ_{all}}, \cB, \Phi (U), \sigma, G_{\cB} = 1 ) & \geq \log \left( \Pr [ G_{\cB} = 1 ] \right) + \widetilde{H}_\infty ( X | \vec{S}, A_{\cQ_{all}}, \cB, \Phi (U), \sigma, G_{\cB}  ) \\
        & \geq - O ( \eps ) + \widetilde{H}_\infty ( X | \vec{S}, A_{\cQ_{all}}, \Phi (U), \sigma ) \\
        & \geq  n - \abs{\Phi (U)} - O ( \eps )
    \end{align*}
    where the second inequality holds as  $\vec{S}, A_{\cQ_{all}}, \Phi (U), \sigma$ fully determines the partition $\cB$ and $G_{\cB}$, again a property guaranteed by \pref{thm:strong_chain_rule_min}, as $\cB$ forms a partition over $\vec{S}$ while conditioned on $\Phi (U), A_{\cQ_{all}}, \sigma$. The third inequality holds due to the same argument from \pref{lem:min_entropy_x}. In particular,
    \begin{equation*}
        \widetilde{H}_\infty ( X | \vec{S}, A_{\cQ_{all}}, \Phi (U), \sigma ) = \widetilde{H}_\infty ( X | \vec{S}, \Phi (U), \sigma ) \geq \widetilde{H}_\infty ( X | \vec{S}, \sigma ) - \abs{\Phi (U)} = n - \abs{\Phi (U)}.
    \end{equation*}
    Then for any fixed $q$, \pref{cl:conditioning_infty} implies
    \begin{equation*}
        \widetilde{H}_\infty ( X | S_{\sigma(< q)}, A_{\cQ_{all}}, \cB, \Phi (U), \sigma, G_{\cB} = 1 ) \geq \widetilde{H}_\infty ( X | \vec{S}, A_{\cQ_{all}}, \cB, \Phi (U), \sigma, G_{\cB} = 1 ) \geq n - \abs{\Phi (U)} - O ( \eps )
    \end{equation*}
    therefore 
    \begin{equation}
        \widetilde{H}_\infty ( X | S_{\sigma(< q)}, A_{\cQ_{all}}, \cB, \Phi (U), \sigma, G_{\cB} = 1, B_q ) \geq n - \abs{\Phi (U)} - 1 - O( \eps ) \geq n - \abs{\Phi (U)} - 2 
    \end{equation}
    where the last extra $-1$ term follows from \pref{lem:dors} applied upon $\widetilde{H}_\infty ( X |  S_{\sigma(< q)}, A_{\cQ_{all}}, \cB, \Phi (U), \sigma, G_{\cB} = 1 )$ term.

    Finally, we are left with bounding the correlation term. The steps are analogous to the proof of \pref{lem:correlation}. 
    \begin{align}
        &I ( S_{\sigma(Q)} ; X | Q, S_{\sigma(<Q)}, A_{\cQ_{all}}, \cB, \Phi (U), \sigma, G_{\cB} = 1, B_Q ) \nonumber \\
        & \leq I ( S_{\sigma(Q)} ; U X | Q, S_{\sigma(<Q)}, A_{\cQ_{all}}, \cB, \Phi (U), \sigma, G_{\cB} = 1, B_Q ) \nonumber \\
        & \leq I ( S_{\sigma(Q)} ; U X | Q, S_{\sigma(<Q)}, A_{\cQ_{all}}, \cB, \Phi (U), \sigma, G_{\cB} = 1 ) = \frac{1}{m} I ( \vec{S} ; UX | A_{\cQ_{all}}, \cB, \Phi (U), \sigma, G_{\cB} = 1 ) \nonumber \\
        & \leq \frac{1}{m \cdot \Pr[ G_{\cB} = 1 ]} I ( \vec{S} ; UX | A_{\cQ_{all}}, \cB, \Phi (U), \sigma, G_{\cB})  \leq \frac{2 \cdot  I ( \vec{S} ; UX | A_{\cQ_{all}}, \Phi (U), \sigma )}{m} \nonumber \\
        & \leq \frac{2 I ( \vec{S} ; UX | \Phi (U), \sigma )}{m} \leq \frac{2\cdot  I ( \vec{S} ; UX | \sigma )}{m} = \frac{2 \cdot \left( I ( \vec{S} ; X | \sigma ) + I ( \vec{S} ; U | X, \sigma) \right) }{m}\leq \frac{2 |U|}{m}.
    \end{align}
    The only difference from the proof of \pref{lem:correlation} is the introduction of $G_{\cB}, \cB$ and $\sigma$. \pref{fact:chainrule2} allows us to remove any additional random variables. That is 
    \begin{equation*}
         I ( \vec{S} ; UX | A_{\cQ_{all}}, \cB, \Phi (U), \sigma, G_{\cB}) \leq I ( \vec{S} ; UX | A_{\cQ_{all}}, \sigma, \Phi (U) )
    \end{equation*}
    as $I ( \cB, G_{\cB} ; UX | A_{\cQ_{all}}, \Phi (U), \sigma, \vec{S} ) = 0$ with $A_{\cQ_{all}}, \Phi (U), \sigma$ inducing a partition on $\vec{S}$. Also 
    \begin{equation*}
        I ( \vec{S} ; UX | A_{\cQ_{all}}, \Phi (U), \sigma ) \leq I ( \vec{S} ; UX | \Phi (U), \sigma )
    \end{equation*}
    as $I ( A_{\cQ_{all}} ; UX |  \Phi (U), \sigma, \vec{S} ) = 0$. 

    Then due to Markov's argument, there must exist $q \in [m]$ such that
    \begin{align}
        & \widetilde{H}_\infty ( S_{\sigma(q)} | A_{\cQ_{all}}, \Phi (U), \cB, S_{\sigma(<q)}, B_q, \sigma, G_{\cB} = 1 ) \geq H_\infty ( S_Q ) - o ( \log^3  (n)  ) \label{eq:s_min_entropy_appendix}\\
        & \widetilde{H}_\infty ( X | A_{\cQ_{all}}, \Phi (U), \cB, S_{\sigma(<q)}, B_q, \sigma, G_{\cB} = 1 ) \geq n - \abs{\Phi (U)} - 2 \label{eq:x_min_entropy_appendix} \\
        & I ( S_{\sigma(q)}  ; X |  A_{\cQ_{all}}, \Phi (U), \cB, S_{\sigma(<q)}, B_q, \sigma, G_{\cB} = 1 ) \leq \frac{10 |U|}{m}.\label{eq:correlation_appendix} 
    \end{align}
    while we have established in \pref{eq:averaging} that for any $q \in [m]$ and any $\vec{S}$
    \begin{align}
        & \E_{A_{q}, \Phi (U), B_q, \sigma | \vec{S}} \left[  \E \left[  B_q \cdot f(S_{\sigma(q)}, X) | A_{q}, \Phi (U),  B_q, \sigma, \vec{S} \right] \right] \geq n^{-o(1)} \nonumber 
    \end{align}
    This is why we need to introduce a random permutation to average out the advantage. Without the averaging argument, we cannot guarantee that the advantage is high for every $q \in [m]$. 
    
    Then via an analogous argument from \pref{cl:advantage}
    \begin{align}
        n^{-o(1)} \leq & \E_{A_{q}, \Phi (U), S_{\sigma(<q)}, B_q, \sigma } \left[ \E \left[  B_q \cdot f(S_{\sigma(q)}, X) | A_{q}, \Phi (U), S_{\sigma(<q)}, B_q, \sigma \right] \right] \nonumber \\
        & \leq \E_{A_{q}, \Phi (U), S_{\sigma(<q)}, B_q, \sigma } \left[ \abs{ \E \left[   f(S_{\sigma(q)}, X) | A_{q}, \Phi (U), S_{\sigma(<q)}, B_q, \sigma \right] } \right] \nonumber \\
        & \leq \E_{A_{\cQ_{all}}, \cB, \Phi (U), S_{\sigma(<q)}, B_q, \sigma } \left[ \abs{ \E \left[  f(S_{\sigma(q)}, X) | A_{\cQ_{all}}, \cB, \Phi (U), S_{\sigma(<q)}, B_q, \sigma \right]} \right]  \label{eq:appendix_advantage}
    \end{align}
    For shorthand, we will denote 
    \begin{equation*}
        Z_q := A_{\cQ_{all}}, \cB, \Phi (U), S_{\sigma(<q)}, B_q, \sigma, G_{\cB} = 1 
    \end{equation*}
    Again as $\Pr[ G_{\cB} = 0 ] \leq \eps$, the exact same argument to that of the proof of \pref{lem:comm_lb} applied upon \pref{eq:appendix_advantage} gives 
    \begin{equation*}
        \E_{Z_q} \left[ \abs{ \E \left[  f(S_{\sigma(q)}, X) | Z_q \right]} \right] \geq n^{-o(1)} - O ( \eps ) \geq n^{-o(1)}
    \end{equation*}
    due to our choice of $\eps$, $\eps := 2^{- n^{\Omega(1)}}$.

    Now we want to deduce a contradiction. \pref{eq:s_min_entropy_appendix} and \pref{eq:x_min_entropy_appendix} satisfy the average min-entropy condition. But $(S_{\sigma(q)}, X) |_{Z_q}$ is not a product distribution. Nevertheless, \pref{eq:correlation_appendix} guarantees that they are ``close'' to a product distribution as
    \begin{equation*}
       \E_{ Z_q } \left[ D( (S_{\sigma(q)},X) |_{Z_q} || S_{\sigma(q)} |_{Z_q} \times X |_{Z_q} ) \right]  = I ( S_{\sigma(q)} ; X | Z_q ) \leq \frac{10 |U|}{m}
    \end{equation*}
    A standard application of Pinsker's inequality (\pref{fact:pinsker}) and Jensen's inequality, gives 
    \begin{equation*}
        \E_{Z_q} \left[ \norm{ S_{\sigma(q)} |_{Z_q} \times X |_{Z_q} - (S_{\sigma(q)},X) |_{Z_q} }_1 \right] \leq 2 \sqrt{ I ( S_{\sigma(q)} ; X | Z_q ) } \leq \sqrt{ \frac{40 |U|}{m} }.
    \end{equation*}
    Now if we consider $R_{Z_q} := S_{\sigma(q)} |_{Z_q} \times X |_{Z_q}$, which is forced to be a product distribution, 
    \begin{equation*}
        \E_{Z_q} \left[  \abs{\E_{(S_{\sigma(q)},X) \sim R_{Z_q}} \left[ f(S_{\sigma(q)},X)  \right] } \right] \geq \E_{Z_q} \left[ \abs{\E_{(S_{\sigma(q)},X) |_{Z_q} } \left[  f(S_{\sigma(q)},X)  \right]} \right] - \sqrt{ \frac{40 |U|}{m} } \geq n^{-o(1)}
    \end{equation*}
    But from our definition of hard (\pref{def:hard}) and bounds on average min-entropy (\pref{eq:s_min_entropy_appendix},\pref{eq:x_min_entropy_appendix}), it must be the case that 
    \begin{equation*}
        \E_{Z_q} \left[ \abs{ \E_{(S_{\sigma(q)},X) \sim R_{Z_q}} \left[  f(S_{\sigma(q)},X) \right] } \right] \leq n^{-\Omega(1)}
    \end{equation*}
    which is a contradiction.
\end{proof}

\end{appendix}

\end{document}